\documentclass[12pt,a4paper]{amsart}
\usepackage{latexsym}
\usepackage{graphics} 
\usepackage{epsfig} 
\usepackage{amsmath} 
\usepackage{amsfonts, amssymb} 
\usepackage{amsthm,enumerate}
\usepackage{color}
\usepackage{mathrsfs}
\usepackage{amscd}      
\usepackage{ulem}
\usepackage{euscript}
\usepackage{colortbl}
\usepackage{arydshln}
\usepackage{shuffle}
\usepackage[all]{xy}

\usepackage[
	a4paper,
	top=25mm,
	bottom=20mm,
	left=25mm,
	right=25mm,
]{geometry}


\theoremstyle{definition}
\newtheorem{Def}{Definition}[section]

\newtheorem{Eg}{Example}[section]
\newtheorem{Rm}{Remark}[section]

\theoremstyle{plain}
\newtheorem{Prop}[Def]{Proposition}

\newtheorem{Thm}[Def]{Theorem}
\newtheorem{Cor}[Def]{Corollary}

\numberwithin{equation}{section}

\newcommand{\authorfootnotesA}{\renewcommand\thefootnote{$\flat$}}%
\newcommand{\authorfootnotesB}{\renewcommand\thefootnote{$\sharp$}}%
\newcommand{\authorfootnotesC}{\renewcommand\thefootnote{$\diamond$}}%
\newcommand{\authorfootnotesD}{\renewcommand\thefootnote{$\S$}}%


\setlength\dashlinedash{.4pt}
\setlength\dashlinegap{1.3pt}

\begin{document}

\begin{center}
	\LARGE 
	Modulus of Continuity of Controlled Loewner-Kufarev Equations
	and Random Matrices
	\par \bigskip

	\normalsize
	\authorfootnotesA
	Takafumi Amaba\footnote{
	This work was supported by JSPS KAKENHI Grant Number 15K17562.
	}\textsuperscript{1}
	and
	\authorfootnotesB
	Roland Friedrich\footnote{Partially supported by the ERC advanced grant ``Noncommutative  
        distributions in free probability".}\textsuperscript{2}
	\authorfootnotesC
	\authorfootnotesD

	\textsuperscript{1}Fukuoka University,
	8-19-1 Nanakuma, J\^onan-ku, Fukuoka, 814-0180, Japan.\par \bigskip

	\textsuperscript{2}Saarland University,
	Faculty of Mathematics,
	D-66123 Saarbr{\"u}cken,
	Germany\par \bigskip

	
	\email{(T.~Amaba) fmamaba@fukuoka-u.ac.jp}
	\email{(R.~Friedrich) friedrich@math.uni-sb.de}

	\today
\end{center}

\begin{quote}{\small {\bf Abstract.}
First we introduce the two tau-functions which appeared either as the $\tau$-function of the integrable hierarchy governing the Riemann mapping of Jordan curves or in conformal field theory and the universal Grassmannian. Then we discuss various aspects of their interrelation. Subsequently, we establish a novel connection between free probability,
growth models
and integrable systems, in particular for second order freeness, and summarise it in a dictionary. This extends the previous link between conformal maps and large $N$-matrix integrals to (higher) order free probability.
Within this context of dynamically evolving contours,
we determine a class of driving functions for
controlled Loewner-Kufarev equations,
which enables us to give a continuity estimate
for the solution to such equations
when embedded into the Segal-Wilson Grassmannian.
}
\end{quote}

\renewcommand{\thefootnote}{\fnsymbol{footnote}}

\footnote[0]{ 
2010 \textit{Mathematics Subject Classification}.
Primary 93C20; 
Secondary
30F10, 
35C10, 
58J65  
}

\footnote[0]{ 
\textit{Key words and phrases}.
Loewner-Kufarev equation,
Grassmannian,
Witt algebra,
Faber polynomial,
Grunsky coefficient,
Signature,
Control function
}

\section{Introduction}

The class of univalent functions is an extraordinarily rich
mathematical object within the field of complex variables,
with deep and surprising connections with, e.g.
conformal field theory (CFT),
random matrix theory and integrable systems,
cf.~\cite{Ju1993,KNTY,KKMWZ2000,KY1988,MWZ2000,Ta01,Te03,WZ00},
just to name the most important ones in our context.
So, for
$
\mathbb{D} := \{ z \in \mathbb{C} | |z|<1 \}
$,
the open unit disc, with boundary the unit circle,
i.e.
$S^1 = \partial \mathbb{D}$,
let
$$
\mathcal{S}
:=
\{ f: \mathbb{D} \rightarrow \mathbb{C}~|~\text{
	univalent and regular up to $S^1$,
	$f(0)=0$ and $f'(0)=1$
	}
\},
$$
be the class of schlicht functions.
Then, for every $f \in \mathcal{S}$,
$D := f( \mathbb{D} )$
is a simply connected domain containing the origin,
with boundary $C := \partial D$, a Jordan contour.
If $\hat{\mathbb{C}} := \mathbb{C} \cup \{ \infty \}$
is the Riemann sphere,
let
$D^c := \hat{\mathbb{C}} \setminus D$
be the complement of $D$ in the extended complex plane.

The set $\mathcal{C}$ of all such Jordan contours encircling
the origin forms an infinite dimensional manifold~\cite{Ta01,KMZ05}.

It has been shown by A. Kirillov and D. Juriev~\cite{KY1988}
that there exists a canonical bijection
$$
\mathcal{S} \cong \operatorname{Diff}_+ (S^1) / S^1
$$
which endows
$\operatorname{Diff}_+(S^1)/S^1$
with the structure of an infinite-dimensional complex manifold.

Geometrically,
$\pi : \mathcal{C} \rightarrow \mathcal{S}$
is a fibre bundle, with fibre $\mathbb{R}_+^*$,
which is a consequence of the Riemann mapping theorem.
There exist two continua of global sections
$
\sigma_{r_i} : \mathcal{S} \rightarrow \mathcal{C}
$,
$r_i>0$, $i=1,2$,
such that the leaves
$
\mathcal{C}_{r_i} := \sigma_{r_i} (\mathcal{S})
$
stratify $\mathcal{C}$, i.e.
$
\mathcal{C} = \biguplus_{r_{1}>0} \mathcal{C}_{r_{1}}
= \biguplus_{r_{2}>0} \widetilde{\mathcal{C}}_{r_{2}}
$,
either according to the conformal radius $r_{1} > 0$,
or alternatively,
the interior area $r_{2}>0$,
as in \cite{Ta01}.

Krichever, Marshakov, Mineev-Weintstein, Wiegmann and Zabrodin~\cite{WZ00,MWZ02,KMZ05},
motivated by the
Hele-Shaw problem, cf.~\cite{MWZ2000} and the monograph by Gustafsson, Teodorescu and Vasil'ev~\cite{GTV2014},
introduced a new set of co-ordinates,
the so-called harmonic moments of the interior and exterior domain,
with respect to a family of harmonic functions.
Namely, for
$\{ z^{-k} \}_{ k \in \mathbb{Z}_{\geqslant 0} }$
the interior harmonic moments are given by:
\begin{eqnarray}
t_k
:=
- \frac{1}{\pi k}
\int_{D^c} z^{-k} \mathrm{d}^2 z
=
\quad
\oint_{\partial D} z^{-k} \bar{z} \mathrm{d}z,
\end{eqnarray}
where the second equality is a consequence of Stokes' Theorem.
Further,
$$
t_0 := \frac{1}{\pi} \int_D \mathrm{d}^2 z.
$$
is the area with respect to Lebesgue measure.
The set
$$
\mathbf{t}_{\pm}
:=
( t_0, t_1, \bar{t}_1 , t_2, \bar{t}_2 , t_3, \bar{t}_3 , \dots )
\in \mathbb{R}_+ \times \mathbb{C}^{\mathbb{N}}, 
$$ 
with $\bar{t}_k$ denoting the complex conjugate of $t_k$,
are local co-ordinates on the manifold
(moduli space)
of smooth closed contours $\mathcal{C}$, cf.~\cite{Ta01}.

The other set of (natural) co-ordinates is given by the coefficients
of the normalised Riemann mapping.
So, we have two different sets of co-ordinates for $\mathcal{C}$,
as shown below:
\begin{equation*}
\begin{xy}
\xymatrix{
	&
	\mathbb{R}^*_+ \times \mathbb{C}^{\mathbb{N}}
	& \\
	\operatorname{Aut}(\mathcal{O}) | \operatorname{Der}_+(\mathcal{O})
	\ar[ur]^-{
		\begin{array}{c}
		\text{{\tiny The first co-ordinate}} \vspace{-2mm}\\
		\text{{\tiny respects the}} \vspace{-2mm}\\
		\text{{\tiny conformal radius}}
		\end{array}
	}
	&
	&
	\{\mathbf{t}_{\pm}\}
	|
	\{ \partial_{t_0}, \partial_{t_k}, \partial_{\bar{t_k}} \}_{ k\in\mathbb{N} }
	\ar[ul]_-{
		\begin{array}{c}
		\text{{\tiny The first co-ordinate}} \vspace{-2mm}\\
		\text{{\tiny respects the}} \vspace{-2mm}\\
		\text{{\tiny interior area}}
		\end{array}
	}
}
\end{xy}
\end{equation*}
Hence, the tangent space to $\mathcal{C}$ permits also two descriptions,
namely, as~\cite{FBZ,KNTY,KMZ05,Ta01}
$$
\operatorname{Der}_0 ( \mathcal{O} )
:=
z \mathbb{C} [\![ z ]\!]
\partial_z
\quad
\text{and}
\quad
\{ \partial_{t_0}, \partial_{t_k}, \partial_{\bar{t_k}} \}_{k\in\mathbb{N}}
$$
Define
$
\ell_n
:=
-z^{n+1}
\frac{d}{dz}
$,
$n \in \mathbb{N}$,
which span the positive part of the Witt algebra,
i.e. for $n,m\in\mathbb{Z}$
$$
[ \ell_m, \ell_n ] = (m-n) \ell_{m+n}.
$$
The $\partial_{t_k}$ can be determined either by
specific boundary variations of the domain,
which do only change one harmonic moment at the time,
cf.~\cite[Formula~(2.11)]{KMZ05},
or in terms of the Faber polynomials~\cite{Ta01}.
Combining results in \cite{KMZ05} with
our considerations,
the relation for the different tangent vectors is given by

\begin{Prop} 
The vector fields
$\{ \partial_{t_{k}} \}_{k \geqslant 1}$
on $\mathcal{C}$
and the operators
$\{ \ell_{k} \}_{k \geqslant 1}$ are related by
\begin{equation*}
\partial_{t_{k}} t_{l}
=
\frac{1}{ k \pi }
\oint_{ \partial D^{c} }
\xi^{-l}
\delta n ( \xi )
\vert \mathrm{d}\xi \vert
=
\delta_{kl},
\end{equation*}
where
\begin{equation*}
\delta n ( \xi )
:=
\partial_{n}
\frac{-1}{2 \pi i}
\oint_{\infty} ( \ell_{k} G_{0} )( z, \xi ) \frac{ \mathrm{d}z }{z},
\end{equation*}
and $\partial_{n}$
is the normal derivative on the boundary $\partial D^{c}$
with respect to $\xi \in \partial D^{c}$,
and
$G_{0}( x, \xi )$
is the Dirichlet Green function
associated to the Dirichlet problem in $D^{c}$.
\end{Prop}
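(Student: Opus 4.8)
The plan is to treat the two displayed equalities separately. First, for the equality $\partial_{t_k}t_l = \tfrac{1}{k\pi}\oint_{\partial D^c}\xi^{-l}\delta n(\xi)|\mathrm{d}\xi|$, I would invoke the Hadamard-type variational formula for the moment functionals: writing $t_l = -\tfrac{1}{\pi l}\int_{D^c} z^{-l}\,\mathrm{d}^2 z$ for $l\geq 1$ and deforming $D^c$ by an infinitesimal normal displacement $\delta n$ of its boundary, differentiation under the integral sign gives $\mathrm{d}t_l(\delta n)=\tfrac{1}{\pi l}\oint_{\partial D^c}\xi^{-l}\,\delta n(\xi)\,|\mathrm{d}\xi|$, which is \cite[Formula~(2.11)]{KMZ05}. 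Granting the second step below (the off-diagonal vanishing), the prefactor $1/l$ may be replaced by $1/k$, and the claimed identity follows once one checks that the normal displacement of $\partial D^c$ induced by the Witt generator $\ell_k=-z^{k+1}\tfrac{d}{dz}$ coincides with the $\delta n$ of the statement.

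For the normal displacement and the value $\delta_{kl}$, I would pass to the exterior uniformizer $F\colon D^c\to\{|w|>1\}$ with $F(\infty)=\infty$ and inverse $g$ (so $g(w)\sim rw$, $r$ the conformal radius), in terms of which $G_0(z,\xi)=\log\Bigl|\tfrac{F(z)\overline{F(\xi)}-1}{F(z)-F(\xi)}\Bigr|=\operatorname{Re}\log\tfrac{F(z)\overline{F(\xi)}-1}{F(z)-F(\xi)}$. Since the Wirtinger derivative annihilates the anti-holomorphic part, $(\ell_k G_0)(z,\xi)=-\tfrac12 z^{k+1}\,\tfrac{d}{dz}\log\tfrac{F(z)\overline{F(\xi)}-1}{F(z)-F(\xi)}$ is holomorphic in $z$ on $D^c\setminus\{\xi\}$, so $\tfrac{-1}{2\pi i}\oint_\infty(\ell_k G_0)(z,\xi)\tfrac{\mathrm{d}z}{z}$ is a genuine residue at infinity that extracts one Laurent coefficient of $\tfrac{d}{dz}\log(F(z)-1/\overline{F(\xi)})-\tfrac{d}{dz}\log(F(z)-F(\xi))$. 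Substituting $z=g(w)$ and integrating by parts converts this coefficient into the polynomial part of $g(w)^k$ evaluated at $F(\xi)$ and at $1/\overline{F(\xi)}$; the resulting function $u_k(\xi)$ of $\xi\in D^c$ is then an explicit harmonic function vanishing on $\partial D^c$, built from a Faber-type polynomial of degree $k$, and $\delta n(\xi)=\partial_n u_k(\xi)$ is its boundary normal derivative. To conclude, I would pull $\tfrac{1}{k\pi}\oint_{\partial D^c}\xi^{-l}\delta n(\xi)|\mathrm{d}\xi|$ back to $|w|=1$ via $\xi=g(e^{i\phi})$: the conformal Jacobian cancels against $|F'(\xi)|$, and up to an overall constant the integral becomes $\tfrac{1}{2\pi i}\oint_{|w|=1}g(w)^{-l}\,(\text{derivative of a Faber-type polynomial of degree }k)(w)\,\mathrm{d}w$; since $g(w)^{-l}=w^{-l}+O(w^{-l-1})$ is supported in powers $\leq-l$ while the Faber data are degree-$k$ polynomials normalised by $-\log\tfrac{g(w)-z}{rw}=\sum_{n\geq 1}\tfrac{\Phi_n(z)}{n}w^{-n}$, the orthogonality of $\{w^m\}$ on the circle together with the Faber reciprocity of \cite{Ta01} collapses the integral to $\delta_{kl}$ (with the overall normalisation checked to equal $1$). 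This simultaneously establishes the identification $\ell_k\leftrightarrow\partial_{t_k}$.

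The hard part will be the off-diagonal vanishing, equivalently the exact matching of the $\ell_k$-deformation with the coordinate vector field $\partial_{t_k}$. Pairing the raw $\ell_k$-induced displacement with $\xi^{-l}$ produces, besides $\delta_{kl}$, correction terms built from the Taylor coefficients of $g$; these cancel only because the change of coordinates between the coefficients of the exterior uniformizer and the moment system $(t_0,\{t_l\}_{l\geq 1},\{\bar t_l\}_{l\geq 1})$ is unitriangular with respect to the natural grading, which is exactly where I expect the Faber/Grunsky identities of \cite{Ta01} to be indispensable. I would also have to be careful about which piece of the non-holomorphic $G_0$ survives the Wirtinger derivative and the contour integral at infinity, about the behaviour $G_0(z,\xi)\to\log|F(\xi)|$ as $z\to\infty$ that is proper to an exterior domain, and about the opposite orientations of $\partial D$ and $\partial D^c$ together with the $\oint_\infty$ convention, all of which fix the signs and the overall normalisation constant.
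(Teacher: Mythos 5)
The paper itself contains no proof of this Proposition: it is asserted as a combination of the boundary-variation results of \cite{KMZ05} with the identification of $\ell_{k}$ as a tangent vector, so there is no argument of the authors' to measure yours against. Judged on its own terms, your outline assembles the correct ingredients --- the Hadamard variational formula for $t_{l}$, the Dirichlet Green function of the exterior domain, and a residue computation at $\infty$ --- and your observation that the discrepancy between the prefactors $1/k$ and $1/l$ is immaterial once the off-diagonal terms vanish is exactly the right way to read the first equality.

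That said, the route you take through the exterior uniformizer and the Faber polynomials makes the second equality much harder than it needs to be, and the step you yourself flag as ``the hard part'' (the off-diagonal vanishing and the cancellation of the ``correction terms built from the Taylor coefficients of $g$'') is left unexecuted, so as written the proposal is incomplete precisely there. The efficient argument avoids all of this: insert the definition of $\delta n$ into $\oint_{\partial D^{c}} \xi^{-l}\, \delta n(\xi)\, \vert \mathrm{d}\xi \vert$, interchange the boundary integral with $\oint_{\infty}$ and with $\partial_{n}$, and apply the reproducing property of harmonic measure, namely that $\oint_{\partial D^{c}} h(\xi)\, \partial_{n} G_{0}(z,\xi)\, \vert \mathrm{d}\xi \vert$ recovers (a universal constant times) $h(z)$ for $h$ harmonic in $D^{c}$ and vanishing at infinity, with $h(\xi) = \xi^{-l}$. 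The double integral then collapses to a constant multiple of $\frac{1}{2\pi i} \oint_{\infty} \ell_{k}(z^{-l}) \frac{\mathrm{d}z}{z} = \frac{l}{2\pi i} \oint_{\infty} z^{k-l-1}\, \mathrm{d}z = l\, \delta_{kl}$, so both the off-diagonal vanishing and the diagonal normalisation drop out of a single one-line residue, with no Faber reciprocity and no explicit form of $g$ required (the remaining overall constant is fixed by the normalisation conventions for $G_{0}$ and $\partial_{n}$ in \cite{KMZ05}, which you would in any case have to pin down). I would replace the second half of your argument by this computation; your version is not wrong in conception, but it defers the only nontrivial cancellation to machinery that is not needed.
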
 

Krichever, Marshakov, Mineev-Weinstein, Wiegmann and Zabrodin~\cite{KKMWZ2000,MWZ02,WZ00},
in different constellations,
defined the logarithm of a
{\it $\tau$-function}
which
for
a contour $C = \partial D$ is given
by
\cite{KKMWZ2000,Ta01}
\begin{equation*}
\ln ( \tau ):
\mathcal{C} \ni C
\mapsto
- \frac{ 1 }{ \pi^{2} }
\int_{D} \int_{D}
\ln \Big\vert \frac{1}{z} - \frac{1}{w} \Big\vert
\mathrm{d}^{2}z \mathrm{d}^{2}w
\in \mathbb{R}.
\end{equation*}

The $\tau$-function connects complex analysis with the dispersionless
hierarchies and integrable systems~\cite{KKMWZ2000}

A key result,
which expresses the Riemann mapping in terms of the $\tau$-function,
is the following 
\begin{Thm}[\cite{KKMWZ2000,Ta01}] 
Let
$
g :
\hat{\mathbb{C}} \setminus D
\rightarrow
\hat{\mathbb{C}} \setminus \mathbb{D}
$
be the conformal map,
normalised by $g( \infty ) = \infty$
and
$g^{\prime} ( \infty ) > 0$.
Then the following formula holds:
$$
\ln ( g(z) )
=
\ln(z)
-
\frac{1}{2}
\frac{\partial^2\ln(\tau)}{\partial t^2_0}
-
\sum_{k=1}^{\infty} \frac{z^{-k}}{k}
\frac{\partial^2 \ln (\tau)}{\partial t_0\partial t_n}.
$$
\end{Thm}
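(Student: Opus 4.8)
The plan is to work with the two ``halves'' of the contour's analytic data — the interior Riemann map and the exterior Riemann map $g$ — and to differentiate the defining double integral for $\ln\tau$ twice with respect to the zeroth moment $t_0$ (the area) and once with respect to the higher moments $t_k$. First I would recall the standard fact, following \cite{KKMWZ2000,Ta01}, that the logarithmic potential
\[
U(z) \;:=\; \frac{1}{\pi}\int_D \ln\Bigl|\tfrac{1}{z}-\tfrac{1}{w}\Bigr|\,\mathrm{d}^2w
\]
of the (interior) domain $D$, evaluated at a point $z$, has a first $t_0$-derivative that is (up to constants) the exterior Green's function / the real part of $\ln g(z)$, and that the moments $t_k$ are precisely the coefficients in the multipole expansion of $U$ near $\infty$. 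The identity $\ln\tau = -\tfrac1\pi\int_D U(z)\,\mathrm{d}^2 z$ then turns $t_0$- and $t_k$-derivatives of $\ln\tau$ into boundary integrals over $C=\partial D$ by Stokes/Green, exactly as in the harmonic-moment calculus already used to define the $\mathbf t_\pm$ coordinates in the Introduction.

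The key steps, in order, would be: (1) express $\partial_{t_0}\ln\tau$ as a boundary integral and identify it, via the variational formula for how $\partial D$ moves when only $t_0$ changes (the section $\sigma_{r_2}$ respecting interior area), with $-2U$ evaluated appropriately — this is where the ``exterior potential'' enters; (2) take a second $t_0$-derivative to produce $\partial_{t_0}^2\ln\tau = -\,2\,U'$-type object whose value at $\infty$ and whose multipole coefficients can be read off — concretely, $-\tfrac12\partial_{t_0}^2\ln\tau$ should come out equal to $\log g'(\infty)$, the conformal-radius term; (3) take the mixed derivative $\partial_{t_0}\partial_{t_k}\ln\tau$ and, using $\partial_{t_k} t_l=\delta_{kl}$ from the Proposition together with the moment expansion $U(z)\sim -t_0\ln|z| + \sum_k \Re(t_k z^{-k})/(\text{something})$, identify it with the $k$-th Taylor coefficient of $\ln g(z)-\ln z$ at $\infty$; (4) assemble (2) and (3) into the stated series and check the expansion converges for $|z|$ large (equivalently $z\in\hat{\mathbb C}\setminus D$), using that $g(z)/z\to g'(\infty)>0$ so $\ln(g(z)/z)$ is holomorphic and vanishes at $\infty$, hence is given by its negative-power series $-\tfrac12\partial_{t_0}^2\ln\tau - \sum_{k\ge1}\tfrac{z^{-k}}{k}\partial_{t_0}\partial_{t_k}\ln\tau$.

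A more streamlined route, which I would actually prefer to write up, is to bypass step (1)–(2) separately: observe that $\Phi(z):=\ln(g(z)/z)$ is the unique function holomorphic on $\hat{\mathbb C}\setminus D$, vanishing at $\infty$, whose real part on $C$ equals $-U(z)|_C + (\text{const})$ — i.e. $\Re\Phi$ solves the exterior Dirichlet problem with the logarithmic-potential boundary data — and then show that the right-hand side of the asserted formula is a holomorphic function of $z$ on $\hat{\mathbb C}\setminus D$ with the same boundary real part and the same value $0$ (resp. $\log g'(\infty)$ for the constant term) at $\infty$; uniqueness of the Dirichlet problem finishes it. The coefficient matching uses nothing more than the definition of $t_k$ as a harmonic moment and the Proposition's normalisation $\partial_{t_k}t_l=\delta_{kl}$.

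The main obstacle I anticipate is step (1)/the boundary-variation bookkeeping: one must be careful that the ``$\partial_{t_0}$'' and ``$\partial_{t_k}$'' derivatives are taken along the correct local coordinate directions on $\mathcal C$ (the harmonic-moment chart, not the Riemann-coefficient chart), so that the variation of $\partial D$ is precisely the one dual to $\mathrm d t_k$; conflating the two charts would put spurious cross-terms into the series. Handling the $t_0$-derivative is the delicate case because varying the area shifts the whole potential by a $\ln|z|$ term, which is exactly what converts one of the two logarithms in the double integral into the leading $\ln z$ on the right-hand side; getting the factor $\tfrac12$ and the sign right in $-\tfrac12\partial_{t_0}^2\ln\tau$ is the one computation I would do with care rather than wave through.
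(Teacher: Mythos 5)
The paper offers no proof of this theorem: it is imported verbatim from \cite{KKMWZ2000,Ta01}, and no proof environment follows it, so there is no in-paper argument to measure yours against; I can only judge the plan against the arguments in those references. Your step-by-step route (1)--(4) has the right skeleton and is essentially the standard one: one shows $\partial_{t_0}\ln\tau = v_0 = \tfrac{2}{\pi}\int_D\ln|w|\,\mathrm{d}^2w = -2U(\infty)$, one identifies the boundary variation dual to $t_0$ as the normal displacement proportional to $\partial_n G(\cdot,\infty)$ (it changes $t_0$ only, because $\oint_C\xi^{-k}\partial_nG(\xi,\infty)\,|\mathrm{d}\xi|=0$ for $k\geqslant1$ by Green's identity), and then the Hadamard-type pairing $\oint_C h(\xi)\,\partial_nG(\xi,\infty)\,|\mathrm{d}\xi|=\mathrm{const}\cdot\tilde h(\infty)$, with $\tilde h$ the bounded harmonic extension of $h$ to $D^c$, applied to $h(\xi)=\ln|\xi|$ and to the $t_k$-variations, produces $-\tfrac12\partial^2_{t_0}\ln\tau=\ln g'(\infty)$ and the higher coefficients. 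Your worry about which chart the derivatives live in, and about the sign and the factor $\tfrac12$, is well placed but resolvable.

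Two concrete defects remain. First, the ``streamlined route'' you say you would actually write up rests on a false boundary identification: on $C$ one has $\Re\ln(g(z)/z)=-\ln|z|$ because $|g|=1$ there, whereas $U|_C$ differs from $-\ln|z|$ by the restriction to $C$ of the logarithmic potential $P(z)=\tfrac1\pi\int_D\ln|z-w|\,\mathrm{d}^2w$, which is not constant on $\partial D$ for a general domain; so $\Re\Phi$ does not solve the exterior Dirichlet problem with data $-U|_C+\mathrm{const}$, and uniqueness for that problem proves nothing. The function whose exterior harmonic extension you actually need is $\ln|\xi|$ itself, and it only enters after the first $t_0$-derivative has been taken, through $v_0$. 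Second, the multipole expansion invoked in step (3) is $U(z)=-\tfrac{v_0}{2}-\sum_{k\geqslant1}\tfrac1k\Re\bigl(v_kz^{-k}\bigr)$ on $D^c$ (the $t_0\ln|z|$ growth cancels in the $1/z-1/w$ kernel): the coefficients are the interior moments $v_k=\partial_{t_k}\ln\tau$, not the coordinates $t_k$. Matching against $t_k$, as written, would derail the coefficient identification; with $v_k$ it goes through and yields the stated series.
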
 

This formula would be a key to describe the solution to
the conformal welding problem
(c.f.~\cite{Te08})
associated to Malliavin's canonic diffusion
\cite{Ma99}
within the framework of Loewner-Kufarev equation,
which would be a future work.

Another interpretation of the
$\tau$-function,
given by Takhtajan~\cite[Corollary 3.10]{Ta01},
is that it is a K\"{a}hler potential of a Hermitian metric
on
$\widetilde{\mathcal{C}_a}$, $a>0$.

Kirillov and Juriev~\cite{KY1988},
defined a two-parameter family $(h,c)$ of K\"{a}hler potentials $K_{h,c}$ on the determinant line bundle
$\operatorname{Det}^*$ over the (Sato)-Segal-Wilson Grassmannian, where $h$ is the highest-weight and $c$ the central charge of the CFT. For $h=0$ and $c=1$, i.e. the free Boson, one has~\cite{KY1988, Ju1993} for the metric
$$
g_{0,1}(f)
=
\mathrm{e}^{{-K_{0, 1}(f)}}
\mathrm{d} \lambda
\mathrm{d} \bar{\lambda}
$$
where $\lambda$ is the co-ordinate in the fibre over the schlicht function $f$.
In terms of the Grunsky matrix $Z_f$, associated to an $f \in \mathcal{S}$, we obtain 
the relation between the K\"{a}hler potential, for $h=0, c=1$, and the $\tau$-function, as
$$
\ln (\tau)(f)
\sim
\ln
\det (1-Z_f \bar{Z_f}).
$$

The following diagram summarises
our
discussion so-far:
\begin{equation*}
\xymatrix@=20pt{
&
\mathbb{R}
&
&
& \\
\mathrm{Aut} ( \mathcal{O} )
\ar[d]_-{}
&
\mathcal{C}
\ar[r]_-{}
\ar[d]_-{\pi}
\ar[l]
\ar@{->}[u]^-{\ln (\tau)} 
&
\mathrm{Det}^{*}\vert_{M}
\ar@{^{(}->}[rr]
\ar@{->}[ul]_-{
	\begin{array}{c}
	\text{{\tiny K\"{a}hler potential}} \vspace{-2mm}\\
	\text{{\tiny $\ln \det ( 1 - Z_{f} \bar{Z}_{f} )$}}
	\end{array}
}
\ar[d]^-{}
&
&
\mathrm{Det}^{*}
\ar[d]^-{}
& \\
\mathrm{Aut}_{+} ( \mathcal{O} )
&
\mathcal{S}
\ar[r]^-{Z}_-{\text{Krichever}}
\ar[l]
&
M
\ar@{}[r]|-*{:=}
&
\{ Z_{f} : f \in \mathcal{S} \}
\ar@{}[r]|-*{\subset}
&
\mathrm{Gr}_{\text{SW}}
\ar[r]^-{\tau_{\text{SW}}}
&
\mathbb{C} [\![ t_1 , t_2 , t_3 , \dots ]\!] = \mathcal{H}^*_{0}
}
\end{equation*}
where $Z$ is the Grunsky matrix,
cf.~\cite{KY1988} and $\mathcal{H}^*_0$
the charge $0$ sector of the boson Fock space~\cite[p.279]{KNTY}.
For the second square from the left, we should note that
the
Krichever mapping does not distinguish
$\mathcal{S}$ and $\mathcal{C}$
algebro-geometrically.
Namely, the Krichever embedding of a Riemann mapping uses
only the negative part of the Grunsky coefficients $b_{-m,-n}$, $m,n =1, 2,\dots$
but not $b_{0,0}$. One finds from the defining equation of the Grunsky coefficients that
$b_{0,0}$ is the only entry of the Grunsky matrix which depends on the conformal radius.
Consequently, Krichver's embedding forgets about the conformal radius.
But in order to keep track of the modulus of the derivative of the 
normalised Riemann mapping,
we put that information into the determinant line bundle.
This is the mapping $\mathcal{C} \to \mathrm{Det}^{*}\vert_{M}$.

The structure of  the rest of the paper is as follows:
In Section~\ref{Sec_LFP} we establish a relation between
the theory of Laplacian growth models, and their integrable structure,
with a class of random matrices and second order free probability.
We succinctly summarise it in a dictionary.
In Section~\ref{Sec_CLKE} we consider controlled Loewner-Kufarev equations
and recall the necessary facts. Then,
in Section~\ref{Sec_CLK},
we give several estimates for the Grunsky coefficients
associated to solutions to a controlled Loewner-Kufarev equation.
Proofs of several estimates which need results from
\cite{AmbFr}
are relegated to Appendix~\ref{Appdx}.
Finally, we prove Theorem~\ref{modulus-LK} in Section~\ref{Sec_LK/Gr}.

\section{Integrability and Higher Order Free Probability}
\label{Sec_LFP}
Another motivation in the works of Marshakov et al.
was the close connection random matrix theory red has with
(Laplacian)
growth models and integrable hierarchies.
Takebe, Teo and Marshakov discussed the geometric meaning
of the eigenvalue distribution in the large $N$ limit of normal random matrices
in conjunction with the one variable reduction via the Loewner equation~\cite{TTZ}.  In~\cite{AmbFr} we established and discussed a relation
between CFT and free probability theory. 
Here we briefly present a novel connection between integrable hierarchies,
large $N$ limits of Gaussian random matrices and 
second (higher) order
free probability~\cite{CMSS}.
First, consider, cf.~\cite[p.~42]{Ta01},
$$
\langle\!\langle j(z) j(w) \rangle\!\rangle
$$
the normalised current two-point functions for free Bososn on $\hat{\mathbb{C}}\setminus D$, and the analogous correlation function with the Dirichlet boundary conditions, i.e.
$$
\langle\!\langle j(z) j(w) \rangle\!\rangle_{\text{DBC}}.
$$
Further, let, cf.~\cite[p.~11]{CMSS},
$$
G(z,w):=\frac{M(\frac{1}{z},\frac{1}{w})}{zw},
$$ 
be the second order Cauchy transform, and $M(\frac{1}{z},\frac{1}{w})$ the second order moments. We obtain

\begin{Thm} 
Assume that
the
second order free cumulants
$R(z,w)$ vanish, i.e. we have an integrability /zero-curvature condition,
such as for e.g. the Gaussian and Wishart random matrices.
Then the tensor corresponding to the second order Cauchy transform
$
G(z,w) \mathrm{d}
z
\otimes
\mathrm{d}
w
$
is given by the Ward identity 
\begin{eqnarray}
G(z,w) \mathrm{d}z \otimes \mathrm{d}w
&=&
\langle\!\langle j(z) j(w) \rangle\!\rangle
-
\langle\!\langle j(z) j(w) \rangle\!\rangle_{\mathrm{DBC}} \\
&=&
\left(
	\frac{ G^{\prime}(z) G^{\prime}(w) }{ ( G(z) - G(w) )^2 }
	-
	\frac{1}{(z-w)^2}
\right)
\mathrm{d}z \otimes \mathrm{d}w \\
&=&\sum_{m,n=1}^{\infty}z^{-m-1}w^{-n-1}\frac{\partial^2\ln(\tau)}{\partial t_m\partial t_n}\mathrm{d}z\otimes \mathrm{d}w
\end{eqnarray}
where $G(z)$ is the first order Cauchy transform.
\end{Thm}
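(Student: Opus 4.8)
The three displayed equalities are of quite different natures, and the plan is to read the rational expression
$
\frac{ G^{\prime}(z) G^{\prime}(w) }{ ( G(z) - G(w) )^{2} } - \frac{1}{(z-w)^{2}}
$
as their common value, proving the two outer equalities in turn. Throughout I use the dictionary set up in this section, which identifies the first order Cauchy transform $G$ of the ensemble with the exterior conformal map $g$ of the theorem recalled above (this is the one-variable/Loewner reduction of~\cite{TTZ} and~\cite{AmbFr}); fixing this normalisation precisely is part of the work.

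For the first equality I would invoke the Ward identity for the free Boson on $\hat{\mathbb{C}} \setminus D$. On the CFT side, $\langle\!\langle j(z) j(w) \rangle\!\rangle$ and $\langle\!\langle j(z) j(w) \rangle\!\rangle_{\mathrm{DBC}}$ are, by construction, two regularisations of the current two-point function differing precisely by the domain-independent, ``flat'' part; the $U(1)$-current being a weight-one primary, its correlator on $\hat{\mathbb{C}} \setminus D$ is computed by conformally transporting the planar propagator through $g$, and~\cite[\S2]{Ta01} together with~\cite{MWZ2000} identifies the regularised difference with the stated rational function of $g$, i.e. with $G^{\prime}(z)G^{\prime}(w)(G(z)-G(w))^{-2} - (z-w)^{-2}$. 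On the random-matrix side the current is the density-fluctuation field of the ensemble, and, by the definition recalled from~\cite[p.~11]{CMSS}, its connected two-point function rewritten in the $1/z$-variables is exactly the tensor $G(z,w)\,\mathrm{d}z \otimes \mathrm{d}w$ with $G(z,w) = M(1/z,1/w)/(zw)$. Equating the two sides is the dictionary, cf.~\cite{AmbFr}.

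For the second equality I would use the structural result of second order free probability. By hypothesis the second order free cumulants $R(z,w)$ vanish --- this is the zero-curvature/integrability condition, valid for the Gaussian and Wishart ensembles --- and in that case the general expression for the second order Cauchy transform, namely a ``$G^{\prime}G^{\prime}\cdot R$''-term plus $\partial_{z}\partial_{w}\ln\frac{G(z)-G(w)}{z-w}$, degenerates to the pure mixed derivative:
$$
G(z,w)
=
\frac{\partial^{2}}{\partial z\,\partial w}
\ln \frac{ G(z) - G(w) }{ z - w }
=
\frac{ G^{\prime}(z) G^{\prime}(w) }{ ( G(z) - G(w) )^{2} }
-
\frac{1}{(z-w)^{2}},
$$
the surviving annular non-crossing diagrams being the ``spoke'' configurations assembled from the first order $R$-transform, whose generating series resums to $\ln\frac{G(z)-G(w)}{z-w}$~\cite{CMSS}. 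For the third equality I would then apply the dispersionless Hirota equation for the integrable hierarchy governing the contour dynamics~\cite{KKMWZ2000,WZ00}, which in the normalisation used above for $\ln(g(z))$ reads
$$
\ln \frac{ g(z) - g(w) }{ z - w }
=
-
\sum_{m,n=1}^{\infty}
\frac{ z^{-m} w^{-n} }{ mn }
\,
\frac{ \partial^{2} \ln(\tau) }{ \partial t_{m}\,\partial t_{n} }.
$$
Applying $\partial_{z}\partial_{w}$, using $\partial_{z}(z^{-m}/m) = -z^{-m-1}$ and $G = g$, turns the left-hand side into $G^{\prime}(z)G^{\prime}(w)(G(z)-G(w))^{-2} - (z-w)^{-2}$ and the right-hand side into $\sum_{m,n \geqslant 1} z^{-m-1} w^{-n-1}\,\partial_{t_{m}}\partial_{t_{n}}\ln(\tau)$, which is the last line; an overall sign is absorbed once one fixes the relation between the dispersionless free energy of~\cite{KKMWZ2000} and the $\tau$-function defined above.

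The main obstacle is not any single computation but the careful alignment of three a priori unrelated objects --- the $U(1)$-current correlator of the free Boson on $\hat{\mathbb{C}} \setminus D$, the second order Cauchy transform of the random-matrix ensemble, and the conformal-geometric expression --- together with all attendant normalisations: pinning down exactly which regularisations $\langle\!\langle\cdot\rangle\!\rangle$ and $\langle\!\langle\cdot\rangle\!\rangle_{\mathrm{DBC}}$ denote in~\cite{Ta01}, making the identification $G \leftrightarrow g$ exact (in particular explaining why the $t_{0}$-dependent terms present in the formula for $\ln(g(z))$ disappear after two further derivatives in the $t_{k}$, $k \geqslant 1$), and checking that the vanishing of $R(z,w)$ for the Gaussian and Wishart ensembles is precisely the integrability hypothesis in the form stated. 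Once these identifications are in place, each of the three equalities is a cited result applied essentially verbatim.
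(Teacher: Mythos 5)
The paper gives no written proof of this theorem: it is asserted as a direct combination of Takhtajan's result \cite[Theorem~3.9]{Ta01} (cited verbatim in the dictionary immediately below, for the chain $\langle\!\langle j(z)j(w)\rangle\!\rangle - \langle\!\langle j(z)j(w)\rangle\!\rangle_{\mathrm{DBC}} = \sum z^{-m-1}w^{-n-1}\partial_{t_m}\partial_{t_n}\ln\tau$) with the formula of \cite{CMSS} expressing the second order Cauchy transform as $\partial_z\partial_w \ln\frac{G(z)-G(w)}{z-w}$ when the second order cumulants vanish, and your sketch assembles exactly these ingredients together with the dispersionless Hirota relation of \cite{KKMWZ2000}, so it follows essentially the same route. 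The normalisation and sign issues you flag (the identification $G\leftrightarrow g$, the conventions for $\ln\tau$ versus the free energy) are genuine but are precisely the points the paper itself leaves implicit.
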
 

This suggests us to make a dictionary
translating languages from a integrable system and free probability.
Table~\ref{dict} is an attempt to list objects in these fields
sharing the same algebraic relations.
\begin{table}[h]
\centering
\caption{}
\label{dict} 
\begin{tabular}{|p{7cm}|p{7cm}|}
\hline
\begin{center}
Laplacian Growth Model \\
(dToda hierachy)
\end{center}
&
\begin{center}
Free Probability Theory
\end{center}
\\ \hline
\begin{minipage}{7truecm}
\centering
\phantom{a}\vspace{-2mm}
Exterior of the domain $D$,
$$D^{c} \vspace{2mm}$$
\end{minipage}
&
\begin{minipage}{7truecm}
\centering
$$\text{Spectral measure}^c$$
\end{minipage}
\\ \hdashline
\begin{minipage}{7truecm}
\centering
\phantom{a}\vspace{-2mm}
Harmonic moments of $D$,
$$
v_{n}
=
\frac{1}{2\pi i}
\int_{\partial D}
z^{n} \overline{z} \mathrm{d}z
=
\frac{\partial \log \tau}{\partial t_{n}},
$$
$$
v_{0}
=
\frac{2}{\pi}
\int_{D}
( \log \vert z \vert )
\overline{z} \mathrm{d}z
=
\frac{\partial \log \tau}{\partial t_{0}}
$$
\cite[Corollary~3.3]{Ta01}
\vspace{2mm}
\end{minipage}
&
\begin{minipage}{7truecm}
\centering
$$?$$
\end{minipage}
\\ \hdashline
\begin{minipage}{7truecm}
\centering
\phantom{a}\vspace{-2mm}
Derivative of harmonic moments,
$$
\frac{\partial v_{m}}{\partial t_{n}}
\quad
\text{for $m,n \geqslant 1$}
\vspace{2mm}
$$
\end{minipage}
&
\begin{minipage}{7truecm}
\centering
The second order limit moments,
$$
\alpha_{m,n}
\quad
\text{for $m,n \geqslant 1$}
$$
\end{minipage}
\\ \hdashline
\begin{minipage}{7truecm}
\centering
Riemann mapping,
$$
G: D^{c} \to \mathbb{D}^{c}
$$
\end{minipage}
&
\begin{minipage}{7truecm}
\centering
\phantom{a}\vspace{-2mm}
The first order Cauchy transform,
$$
G(x) = \frac{M(\frac{1}{x})}{x}
\vspace{2mm}
$$
\end{minipage}
\\ \hdashline
\begin{minipage}{7truecm}
\centering
\phantom{a}\vspace{-2mm}
Normalised reduced $2$-point current correlation function of free bosons on $\mathbb{P}^{1}$
parametrised by $C = \partial D$,
$$
G(z,w)
=
\langle\!\langle \jmath (z) \jmath (w) \rangle\!\rangle
-
\langle\!\langle \jmath (z) \jmath (w) \rangle\!\rangle_{\mathrm{DBC}}
$$
$$
\phantom{G(z,w)}
=
\frac{\partial^{2} G_{0}(z,w)}{\partial z \partial w}
-
\frac{\partial^{2}G_{\mathrm{DBC}} (z,w)}{\partial z \partial w} 
$$
$$
\phantom{G(z,w)}
=
\sum_{m,n=1}^{\infty}
z^{-m-1} w^{-n-1}
\frac{ \partial^{2} \ln ( \tau ) }{ \partial t_{m} \partial t_{n} },
$$
where $G_{0} (z,w)$ is the Green function
associated with the $\overline{\partial}$-Laplacian
$$
\triangle_{0} = - \frac{1}{2} * \overline{\partial} * \overline{\partial}
$$
and $G_{\mathrm{DBC}}$ is the Green function
associated to $\triangle_{0}$ on $D^{c}$ \\
\cite[Theorem~3.9]{Ta01}
\vspace{2mm}
\end{minipage}
&
\begin{minipage}{7truecm}
\centering
The second order Cauchy transform,
$$G(x,y) = \frac{M(\frac{1}{x}, \frac{1}{y})}{xy}$$
\end{minipage}
\\ \hdashline
\begin{minipage}{7truecm}
\centering
\phantom{a}\vspace{-2mm}
Normalised $1$-point current correlation function
of free bosons on $\mathbb{CP}^{1}$ parametrised by $C = \partial D$,
$$
\langle\!\langle
\jmath (z)
\rangle\!\rangle
=
- \sum_{n=1}^{\infty} z^{-n-1} \frac{ \partial \log \tau }{ \partial t_{n} }
$$
\text{{\small \cite[Theorem~3.1]{Ta01}}}
\vspace{2mm}
\end{minipage}
&
\begin{minipage}{7truecm}
\centering
$$?$$
\end{minipage}
\\ \hline
\end{tabular}
\end{table}

From the above it follows now,
that general higher order free (local) cumulants
are given by Ward identities of $n$-point functions
of the twisted Boson field over arbitrary Riemann surfaces.

\section{Controlled Loewner-Kufarev Equations}
\label{Sec_CLKE}

The connection with the Loewner equation,
the class of schlicht functions and integrable systems
was established by
Takebe, Teo and Zabrodin~\cite{TTZ}.
They showed that both the chordal and the radial Loewner equation
give consistency conditions of such integrable hierarchies.
A particularly important class of such consistency conditions
can be obtained from specific control functions.

In the previous paper \cite{AmbFr},
the authors
introduced the  notion of a solution to the
{\it controlled Loewner-Kufarev equation}
(see \cite[Definition~2.1]{AmbFr})
\begin{equation}
\label{controlled-LK} 
\mathrm{d} f_{t}(z)
=
z f_{t}^{\prime} (z)
\{
	\mathrm{d} x_{0} (t)
	+
	\mathrm{d} \xi ( \mathbf{x}, z )_{t}
\},
\quad
f_{0} (z) \equiv z \in \mathbb{D}
\end{equation}
where
$\mathbb{D}= \{ \vert z \vert < 1 \}$
is the unit disc in the complex plane, 
$
x_{0} : [0,T] \to \mathbb{R}
$,
$$
x_{1}, x_{2}, \cdots : [0,T] \to \mathbb{C}
$$
are given continuous functions of bounded variation,
called the
{\it driving functions}.
We define
$$
\xi ( \mathbf{x}, z )_{t}
:=
\sum_{n=1}^{\infty} x_{n}(t) z^{n}.
$$

In the current paper,
we determine a class of  driving functions
for which we establish the continuity
of the solution,
as a curve embedded in the
(Sato)-Segal-Wilson Grassmannian,
with respect to time.
For this reason, we introduce the following class of 
controlled Loewner-Kufarev equations.

Let us first recall from the monograph by Lyons and Qian~\cite[Section~2.2]{LyQi}
a few basic notions. 
For a fixed $T \geqslant 0$,
let $\Delta_T:\{(s,t):0\leqslant s \leqslant t\leqslant T\}$ be the two-simplex.

\begin{Def}[{(see \cite[Section~2.2]{LyQi})}] 
\label{Def:control} 
A continuous function
$$
\omega :
\{ (s,t) : 0 \leqslant s \leqslant t < +\infty \}
\to
\mathbb{R}_+,
$$
is called a
{\it control function}
if it satisfies
super-additivity:
$$\omega (s,u) + \omega (u,t) \leqslant \omega (s,t),$$
for all $0 \leqslant s \leqslant u \leqslant t$,
and vanishes on the diagonal, i.e. $\omega(t,t)=0$, for all $t\in[0,T]$.
\end{Def} 

Now, let $V$ be a Banach space and $X: [0,T] \rightarrow V$,
$T \geqslant 0$, be a path such that
$$
\vert X_t - X_s \vert \leqslant \omega (s,t),
\quad
0 \leqslant s \leqslant t \leqslant T,
$$
for a control function
$\omega : \Delta_T \to \mathbb{R}_+$.
Then $X$ is called a Lipschitz path controlled by $\omega$.

We have the following

\begin{Def} 
\label{LKw/omega} 
Let $\omega$ be a control function.
The driven
Loewner-Kufarev equation\footnote{
Here we avoid the conflicting use of the word
``controlled",
which has two meanings, cf.~\cite{LyQi} p.~16.
}
(\ref{controlled-LK})
is {\it controlled by $\omega$} if
for any $n \in \mathbb{N}$,
$p=1, \cdots , n$
and
$i_{1}, \cdots , i_{p} \in \mathbb{N}$
with $i_{1} + \cdots + i_{p} = n$,
we have
\begin{equation*}
\begin{split}
&
\Big\vert
\mathrm{e}^{ n x_{0} (t) }
\int_{
	0 \leqslant u_{1} < \cdots <
	u_{p}
	\leqslant t
}
\mathrm{e}^{-i_{1} x_{0} (u_{1})}
\mathrm{d} x_{i_{1}} (u_{1})
\cdots
\mathrm{e}^{
	- i_{p} x_{0}
	( u_{p} )
	}
\mathrm{d} x_{i_{p}} (u_{p})
\Big\vert
\leqslant
\frac{
	\omega (0,t)^{n}
}{ n! },
\end{split}
\end{equation*}
and
\begin{equation*}
\begin{split}
&
\Big\vert
\mathrm{e}^{ n x_{0} (t) }
\int_{
	0 \leqslant u_{1} < \cdots <
	u_{p}
	\leqslant t
}
\mathrm{e}^{-i_{1} x_{0} (u_{1})}
\mathrm{d} x_{i_{1}} (u_{1})
\cdots
\mathrm{e}^{
	- i_{p}
	x_{0}
	(
	u_{p}
	)
	}
\mathrm{d} x_{i_{p}} (u_{p}) \\
&\hspace{10mm}-
\mathrm{e}^{ n x_{0} (s) }
\int_{
	0 \leqslant u_{1} < \cdots <
	u_{p}
	\leqslant s
}
\mathrm{e}^{-i_{1} x_{0} (u_{1})}
\mathrm{d} x_{i_{1}} (u_{1})
\cdots
\mathrm{e}^{
	- i_{p} x_{0}
	(u_{p})
	}
\mathrm{d} x_{i_{p}} (u_{p})
\Big\vert \\
&\hspace{20mm}
\leqslant
\omega (s,t)
\frac{
	\omega (0,T)^{n-1}
}{ (n-1)! },
\end{split}
\end{equation*}
for any $0 \leqslant s \leqslant t \leqslant T$.

Henceforth, we will refer to equation
(\ref{controlled-LK})
as the {\it Loewner-Kufarev equation controlled by $\omega$},
or the {\it $\omega$-controlled Loewner-Kufarev equation.}
\end{Def} 

A natural question to be asked is,
how a control function as 
driving function determines a control function for
(\ref{controlled-LK}).
We will give one of the answers in Corollary~\ref{Drive>>CLK}.

Let $H = L^{2} ( S^{1},\mathbb{C} )$ be the
Hilbert space of all square-integrable complex-valued functions
on the unit circle $S^{1}$,
and we denote by $\mathrm{Gr} := \mathrm{Gr}(H)$
the Segal-Wilson Grassmannian
(see \cite[Definition~3.1]{AmbFr} or \cite[Section~2]{SW}).
Any bounded univalent function
$f : \mathbb{D} \to \mathbb{C}$
with $f(0) = 0$
and $\partial f( \mathbb{D} )$ being a Jordan curve,
is embedded into $\mathrm{Gr}$ via
\begin{equation*}
f
\mapsto
W_{f}
:=
\overline{
\mathrm{span}
\big(
	\{ 1 \}
	\cup
	\{ Q_{n} \circ f \circ (1/z)\vert_{S^{1}} \}_{n \geqslant 1}
\big)
}^{H}
\in \mathrm{Gr}
\end{equation*}
(see \cite[Sections~3.2 and 3.3]{AmbFr}),
where $Q_{n}$ is the $n$-th Faber polynomial associated to $f$.

Note that $f$ extends to a continuous function on
$\overline{\mathbb{D}}$
by Caratheodory's Extension Theorem for holomorphic functions.

Let $H^{1/2} = H^{1/2}(S^{1})$ be the Sobolev space on $S^{1}$
endowed with the inner product given by
$
\langle
	h, g
\rangle_{H^{1/2}}
=
h_{0} \overline{g}_{0}
+
\sum_{n \in \mathbb{Z}}
\vert n \vert
h_{n} \overline{g}_{n}
$
for
$
h = \sum_{n \in \mathbb{Z}} h_{n} z^{n},
g = \sum_{n \in \mathbb{Z}} g_{n} z^{n} \in H^{1/2}
$.
Assume that $f$ extends to a holomorphic function
on an open neighbourhood of $\overline{\mathbb{D}}$.
Then
$
\mathrm{span}
\big(
	\{ 1 \}
	\cup
	\{ Q_{n} \circ f \circ (1/z)\vert_{S^{1}} \}_{n \geqslant 1}
\big)
\subset
H^{1/2}
$
and we consider the orthogonal projection
\begin{equation*}
\EuScript{P}_{f}: H^{1/2} \to W_{f}^{1/2},
\quad
\text{where
$
W_{f}^{1/2}
:=
\overline{
\mathrm{span}
\big(
	\{ 1 \}
	\cup
	\{ Q_{n} \circ f \circ (1/z)\vert_{S^{1}} \}_{n \geqslant 1}
\big)
}^{H^{1/2}}
$}
\end{equation*}
rather than the orthogonal projection
$
H \to W_{f}
$.

Let $\omega$ be a control function and
let $\{ f_{t} \}_{0 \leqslant t \leqslant T}$
be a
univalent
solution to the Loewner-Kufarev equation
controlled by $\omega$.
Suppose that each $f_{t}$ extends to a holomorphic function
on an open neighbourhood of $\overline{\mathbb{D}}$.
With
these
assumptions, our main result is the following

\begin{Thm} 
\label{modulus-LK} 

Suppose that $\omega (0,T) < \frac{1}{8}$.
Then there exists a constant
$c = c(T) > 0$
such that
\begin{equation*}
\begin{split}
\Vert
	\EuScript{P}_{f_{t}}
	-
	\EuScript{P}_{f_{s}}
\Vert_{\mathrm{op}}
\leqslant
c \hspace{0.5mm} \omega (s,t)
\end{split}
\end{equation*}
for every $0 \leqslant s < t \leqslant T$,
where
$\Vert \bullet \Vert_{\mathrm{op}}$
is the operator norm.
\end{Thm}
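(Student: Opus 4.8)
The plan is to split the proof into a soft functional-analytic part --- writing the projection $\EuScript{P}_{f}$ as a Lipschitz function of the Grunsky operator of $f$ --- and the genuinely dynamical part --- controlling the time increments of the Grunsky coefficients of the solution $\{f_{t}\}$ by $\omega(s,t)$, using the estimates of Section~\ref{Sec_CLK}. First I would record the structure of $\EuScript{P}_{f_t}$. Since each $f_{t}$ extends holomorphically past $\overline{\mathbb{D}}$, the functions $Q_{n}\circ f_{t}\circ(1/z)|_{S^{1}} = z^{n}+\sum_{m\geqslant 1}b_{nm}(f_{t})\,z^{-m}$ lie in $H^{1/2}$, and (after taking into account the normalisation of the Faber polynomials, cf.~\cite{AmbFr}) one has the orthogonal decomposition $W_{f_{t}}^{1/2} = \mathbb{C}\cdot 1\ \oplus\ \operatorname{graph}(B_{f_{t}})$, where $H^{1/2}_{+}$, $H^{1/2}_{-}$ denote the closures in $H^{1/2}$ of $\operatorname{span}\{z^{n}:n\geqslant 1\}$, $\operatorname{span}\{z^{-n}:n\geqslant 1\}$, and $B_{f_{t}}\colon H^{1/2}_{+}\to H^{1/2}_{-}$ is the Grunsky operator, whose matrix in the orthonormal bases $\{z^{\pm n}/\sqrt{n}\}$ is $(\sqrt{nm}\,b_{nm}(f_{t}))$; by the (strong) Grunsky inequality $\|B_{f_{t}}\|_{\mathrm{op}}\leqslant 1$. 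Thus, up to the fixed rank-one projection onto $\mathbb{C}\cdot 1$, $\EuScript{P}_{f_{t}}$ is the orthogonal projection of $H^{1/2}_{+}\oplus H^{1/2}_{-}$ onto $\operatorname{graph}(B_{f_{t}})$, for which one has the block formula
$$
P_{\operatorname{graph}(B)} = \begin{pmatrix} (1+B^{*}B)^{-1} & (1+B^{*}B)^{-1}B^{*} \\ B(1+B^{*}B)^{-1} & B(1+B^{*}B)^{-1}B^{*} \end{pmatrix}.
$$

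Next I would prove that $B\mapsto P_{\operatorname{graph}(B)}$ is globally Lipschitz on the operator unit ball. Indeed $1+B^{*}B\geqslant 1$ gives $\|(1+B^{*}B)^{-1}\|\leqslant 1$, $\|B(1+B^{*}B)^{-1}\|\leqslant 1$ and $\|B(1+B^{*}B)^{-1}B^{*}\|\leqslant 1$; combining the resolvent identity $(1+B_{1}^{*}B_{1})^{-1}-(1+B_{2}^{*}B_{2})^{-1} = (1+B_{1}^{*}B_{1})^{-1}(B_{2}^{*}B_{2}-B_{1}^{*}B_{1})(1+B_{2}^{*}B_{2})^{-1}$ with $\|B_{2}^{*}B_{2}-B_{1}^{*}B_{1}\|\leqslant(\|B_{1}\|+\|B_{2}\|)\|B_{1}-B_{2}\|$ and $\|B_{i}\|\leqslant 1$, each of the four blocks is Lipschitz in $B$ with a universal constant. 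Hence there is an absolute constant $C$ with $\|\EuScript{P}_{f_{t}}-\EuScript{P}_{f_{s}}\|_{\mathrm{op}}\leqslant C\,\|B_{f_{t}}-B_{f_{s}}\|_{\mathrm{op}}\leqslant C\sum_{n,m\geqslant 1}\sqrt{nm}\,\big|b_{nm}(f_{t})-b_{nm}(f_{s})\big|$, the last step bounding the operator norm by the $\ell^{1}$-norm (or the Hilbert--Schmidt norm) of the matrix entries. Everything is thereby reduced to estimating the time increments of the Grunsky coefficients.

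For that I would invoke the estimates of Section~\ref{Sec_CLK} (with the inputs from \cite{AmbFr} relegated to Appendix~\ref{Appdx}), which express $b_{nm}(f_{t})$ as a series of iterated integrals of the driving functions $x_{i}$ of exactly the shape that Definition~\ref{LKw/omega} controls. Applying the two inequalities of that definition term by term yields a bound $\big|b_{nm}(f_{t})-b_{nm}(f_{s})\big|\leqslant \omega(s,t)\sum_{k\geqslant 1}N_{n,m,k}\,\omega(0,T)^{k-1}/(k-1)!$, where $N_{n,m,k}$ counts the relevant compositions and carries the polynomial weights coming from the Faber expansion. Inserting this into the bound of the previous paragraph and summing over $n,m$, the combinatorial factors $\sqrt{nm}\,N_{n,m,k}$ together with the count of compositions of $n$ (which grows like $2^{n-1}$) are dominated by a geometric series whose ratio is a fixed multiple of $\omega(0,T)$; the hypothesis $\omega(0,T)<\tfrac18$ is precisely what forces this ratio to be $<1$, so the double series converges and $\|\EuScript{P}_{f_{t}}-\EuScript{P}_{f_{s}}\|_{\mathrm{op}}\leqslant c\,\omega(s,t)$ with $c$ depending only on $\omega(0,T)$ (hence expressible as $c=c(T)$).

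The soft part (the block formula and its Lipschitz estimate, and the reduction to the Grunsky increments) is routine. The main obstacle is the last paragraph: organising the Faber/Grunsky expansion of the $\omega$-controlled flow so that the iterated-integral bounds of Definition~\ref{LKw/omega} apply uniformly in the indices $n,m$, and then checking that the polynomial weights $\sqrt{nm}$ and the exponentially many compositions of $n$ are beaten by the factorial gain $1/(k-1)!$ exactly in the regime $\omega(0,T)<\tfrac18$. This is where the numerical threshold is consumed and is the heart of the argument.
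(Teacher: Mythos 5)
Your proposal is correct and follows essentially the same route as the paper: your graph-projection block formula coincides (via the identities $I-B^{*}(I+BB^{*})^{-1}B=(I+B^{*}B)^{-1}$ and $B^{*}(I+BB^{*})^{-1}=(I+B^{*}B)^{-1}B^{*}$) with the matrix involving $A=(I+BB^{*})^{-1}$ that the paper derives from the change of basis, your Lipschitz estimate for $B\mapsto P_{\operatorname{graph}(B)}$ on the unit ball is the paper's term-by-term estimate of the five quantities $I,\dots,V$, and the reduction to $\Vert B_{t}-B_{s}\Vert\leqslant c\,\omega(s,t)$ via an entrywise norm and the controlled iterated-integral bounds is exactly Corollary~\ref{modulus-B-estimate} resting on Corollary~\ref{Loew-Grunsky-estimate}. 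You also correctly locate where the hypothesis $\omega(0,T)<\tfrac18$ is consumed, namely in summing the geometric series with ratio governed by $8\omega(0,T)$.
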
 

Thus we obtained a continuity result with respect to the time-variable
of the solution embedded into the Grassmannian
in which the modulus of continuity is measured by
the control function $\omega$.
Let us point out that the assumption of the existence of an
analytic continuations of $f_{t}$'s across $S^{1}$
is extrinsic and should be discussed in detail in the future.

\section{Auxiliary Estimates Along Controlled Loewner-Kufarev Equations}
\label{Sec_CLK} 

\subsection{Controlling Loewner-Kufarev equation by its driving function}

We shall begin with a prominent example of
a control function as follows.

\begin{Eg} 
If
$y : [0,+\infty ) \to \mathbb{C}$,
is continuous and of bounded variation,
then we have
\begin{equation*}
\begin{split}
\Vert
	y
\Vert_{\text{$1$-var}(s,t)}
:=
\sup_{
	\substack{
		n \in \mathbb{N} ; \\
		s = u_{0} < u_{1} < \cdots < u_{n-1} < u_{n}=t
	}
}
\sum_{i=1}^{n}
\vert
y_{u_{i}} - y_{u_{i-1}}
\vert
< +\infty
\end{split}
\end{equation*}
for every $0 \leqslant s \leqslant t$.
Then
$
\omega (s,t)
:=
\Vert y \Vert_{\text{$1$-Var}(s,t)}
$
defines a control function.
\end{Eg} 

\begin{Def} 
Let
$
\omega
$
be a control function.
We say
that
a continuous function
$y : [0,T] \to \mathbb{C}$
of bounded variation,
{\it
is controlled by $\omega$
}
if
$
\Vert y \Vert_{\text{$1$-var}(s,t)}
\leqslant \omega (s,t)
$
for every $0 \leqslant s \leqslant t \leqslant T$.
\end{Def} 

As is well known,
introducing control functions makes our calculations stable
as follows.

\begin{Eg} 
\label{iterated:controlled} 
Let $n\in\mathbb{N}$ and
$y_{1}, \cdots , y_{n} : [0,+\infty ) \to \mathbb{C}$
be continuous and controlled by a control function $\omega$.
Then we have
\begin{equation*}
\begin{split}
&
\sup_{
	\substack{
		m \in \mathbb{N} ; \\
		s = r_{0} < r_{1} < \\
		\hspace{5mm}\cdots < r_{m-1} < r_{m}=t
	}
}
\sum_{i=1}^{m}
\big\vert
\int_{
	r_{i-1} \leqslant u_{1} < \cdots < u_{n} \leqslant r_{i}
}
\mathrm{d} y_{1} (u_{1})
\cdots
\mathrm{d} y_{n} (u_{n})
\big\vert
\leqslant 
\frac{ \omega (s,t)^{n} }{ n! }
\end{split}
\end{equation*}
for every $0 \leqslant s \leqslant t$.

In fact, we shall prove this by induction on $n$.
The case for $n=1$ is clear by definition.
Consider the case for $n-1$.
Putting $\omega_{s}(t) := \omega (s,t)$,
we find that the total variation measure
$\vert \mathrm{d} y_{n} \vert$ on $[s, +\infty )$
is smaller than the Lebesgue-Stieltjes measure
$\mathrm{d} \omega_{s}$
associated with $\omega_{s}$
on $[s, +\infty )$,
in the sense of
$
\int_{B} \vert \mathrm{d} y_{n} \vert
\leqslant
\int_{B} \mathrm{d} \omega_{s}
$
for any Borel set $B \subset [s,+\infty )$.
Therefore we have
\begin{equation*}
\begin{split}
\big\vert
\int_{
	r_{i-1} \leqslant u_{1} < \cdots < u_{n} \leqslant r_{i}
}
\mathrm{d} y_{1} (u_{1})
\cdots
\mathrm{d} y_{n} (u_{n})
\big\vert
&\leqslant 
\int_{r_{i-1}}^{r_{i}}
\Big\vert
\int_{
	r_{i-1} \leqslant u_{1} < \cdots < u_{n-1} \leqslant u_{n}
}
\mathrm{d} y_{1} (u_{1})
\cdots
\mathrm{d} y_{n-1} (u_{n-1})
\Big\vert
\vert \mathrm{d} y_{n} (u_{n}) \vert \\
&\leqslant 
\int_{r_{i-1}}^{r_{i}}
\frac{ \{ \omega_{r_{i-1}} (u) \}^{n-1} }{ (n-1)! }
\mathrm{d} \omega_{r_{i-1}} (u)
=
\frac{ \omega ( r_{i-1}, r_{i} )^{n} }{ n! }.
\end{split}
\end{equation*}
Since the control function is nonnegative
and super-additive,
it holds that
$$
\sum_{i=1}^{m} \omega ( r_{i-1}, r_{i} )^{n}
\leqslant
\Big( \sum_{i=1}^{m} \omega ( r_{i-1}, r_{i} ) \Big)^{n}
\leqslant
\omega ( s, t ) ^{n},
$$
and hence we get the above inequality.
\end{Eg} 

\begin{Prop} 
\label{new-control} 
Let $\omega_{0}$ and $\omega$ be two control functions.
Let
$x_{0}:[0, +\infty ) \to \mathbb{R}$
be a continuous function
controlled by $\omega_{0}$
and with $x_{0} (0) = 0$.
Then
\begin{itemize}
\item[(i)]
$
\omega^{\prime} (s,t)
:=
\mathrm{e}^{\omega_{0}(s,t)}
(
	\omega_{0} (s,t) + \omega (s,t)
)
,
$
for
$0 \leqslant s \leqslant t$
defines a control function.

\end{itemize}
Let $n \in \mathbb{N}$
and
$
y_{1}, \cdots , y_{n} :
[ 0, +\infty ) \to \mathbb{C},
$
be continuous functions controlled by $\omega$.
Then
\begin{itemize}
\item[(ii)]
we have
\begin{equation*}
\begin{split}
\mathrm{e}^{ n x_{0}(t) }
\big\vert
\int_{
	0 \leqslant u_{1} < \cdots < u_{n} \leqslant t
}
\mathrm{d} y_{1} (u_{1})
\cdots
\mathrm{d} y_{n} (u_{n})
\big\vert
\leqslant
\frac{ \omega^{\prime} (0,t)^{n} }{ n! }.
\end{split}
\end{equation*}

\vspace{2mm}
\item[(iii)]
For each $0 \leqslant s \leqslant t \leqslant T$,
we have
\begin{equation*}
\begin{split}
&
\Big\vert
	\mathrm{e}^{ n x_{0} (t) }
	\int_{ 0 \leqslant u_{1} < \cdots < u_{n} \leqslant t }
	\mathrm{d} y_{1} (u_{1})
	\cdots
	\mathrm{d} y_{n} (u_{n})
	-
	\mathrm{e}^{ n x_{0} (s) }
	\int_{ 0 \leqslant u_{1} < \cdots < u_{n} \leqslant s }
	\mathrm{d} y_{1} (u_{1})
	\cdots
	\mathrm{d} y_{n} (u_{n})
\Big\vert \\
&\leqslant
\Big(
	\omega_{0} (s,t)
	+
	\omega (s,t)
\Big)
\Big(
	\omega^{\prime} (0,T)
	+
	\mathrm{e}^{ \omega_{0} (0,T) }
\Big)
\frac{ \omega^{\prime} (0,T)^{n-1} }{ (n-1)! } .
\end{split}
\end{equation*}
\end{itemize}
\end{Prop}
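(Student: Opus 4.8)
The plan is to treat the three parts in order; parts~(i) and~(ii) are short, and part~(iii) is where the work lies. For~(i), vanishing on the diagonal is immediate since $\omega_{0}(t,t)=\omega(t,t)=0$ forces $\omega'(t,t)=\mathrm{e}^{0}(0+0)=0$, and continuity of $\omega'$ is inherited from that of $\omega_{0}$, $\omega$ and the exponential. For super-additivity I would first record the elementary fact that any control function is nondecreasing in its right argument and nonincreasing in its left one: from nonnegativity and super-additivity, $\omega_{0}(s,u)\leqslant\omega_{0}(s,t)$ and $\omega_{0}(u,t)\leqslant\omega_{0}(s,t)$ whenever $s\leqslant u\leqslant t$. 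Hence $\mathrm{e}^{\omega_{0}(s,u)},\mathrm{e}^{\omega_{0}(u,t)}\leqslant\mathrm{e}^{\omega_{0}(s,t)}$, and combining this with the super-additivity of $\omega_{0}$ and of $\omega$,
\[
\omega'(s,u)+\omega'(u,t)
\leqslant
\mathrm{e}^{\omega_{0}(s,t)}\big[(\omega_{0}(s,u)+\omega(s,u))+(\omega_{0}(u,t)+\omega(u,t))\big]
\leqslant
\mathrm{e}^{\omega_{0}(s,t)}(\omega_{0}(s,t)+\omega(s,t))
=\omega'(s,t).
\]

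For~(ii), since $x_{0}(0)=0$ and $x_{0}$ is controlled by $\omega_{0}$ we have $|x_{0}(t)|=|x_{0}(t)-x_{0}(0)|\leqslant\omega_{0}(0,t)$, hence $\mathrm{e}^{nx_{0}(t)}\leqslant\mathrm{e}^{n\omega_{0}(0,t)}$. Example~\ref{iterated:controlled}, applied to the one-block partition $\{0,t\}$, bounds the iterated integral in modulus by $\omega(0,t)^{n}/n!$. Multiplying and using $\mathrm{e}^{\omega_{0}(0,t)}\omega(0,t)\leqslant\mathrm{e}^{\omega_{0}(0,t)}(\omega_{0}(0,t)+\omega(0,t))=\omega'(0,t)$ gives the stated inequality; the extra $\omega_{0}$ inside $\omega'$ is wasteful here but is what makes~(iii) work.

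For~(iii), abbreviate $G_{n}(u):=\int_{0\leqslant u_{1}<\cdots<u_{n}\leqslant u}\mathrm{d}y_{1}(u_{1})\cdots\mathrm{d}y_{n}(u_{n})$, with the convention $G_{0}\equiv1$, so the quantity to estimate is $|\mathrm{e}^{nx_{0}(t)}G_{n}(t)-\mathrm{e}^{nx_{0}(s)}G_{n}(s)|$, which I split as
\[
\big|\mathrm{e}^{nx_{0}(t)}\big(G_{n}(t)-G_{n}(s)\big)\big|
+\big|\big(\mathrm{e}^{nx_{0}(t)}-\mathrm{e}^{nx_{0}(s)}\big)G_{n}(s)\big|.
\]
For the first summand, a Fubini argument (valid since continuity of the $y_{j}$ makes the diagonals null) gives $G_{n}(t)-G_{n}(s)=\int_{(s,t]}G_{n-1}(u)\,\mathrm{d}y_{n}(u)$; combining $\Vert G_{n-1}\Vert_{\infty,[0,T]}\leqslant\omega(0,T)^{n-1}/(n-1)!$ (Example~\ref{iterated:controlled}), $|\mathrm{d}y_{n}|((s,t])\leqslant\Vert y_{n}\Vert_{\text{$1$-var}(s,t)}\leqslant\omega(s,t)$ and $\mathrm{e}^{nx_{0}(t)}\leqslant\mathrm{e}^{n\omega_{0}(0,T)}$, then redistributing $\mathrm{e}^{n\omega_{0}(0,T)}\omega(0,T)^{n-1}=\mathrm{e}^{\omega_{0}(0,T)}\big(\mathrm{e}^{\omega_{0}(0,T)}\omega(0,T)\big)^{n-1}\leqslant\mathrm{e}^{\omega_{0}(0,T)}\omega'(0,T)^{n-1}$, bounds it by $\mathrm{e}^{\omega_{0}(0,T)}\,\omega(s,t)\,\omega'(0,T)^{n-1}/(n-1)!$. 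For the second summand, $|G_{n}(s)|\leqslant\omega(0,T)^{n}/n!$ (Example~\ref{iterated:controlled}) and $|\mathrm{e}^{nx_{0}(t)}-\mathrm{e}^{nx_{0}(s)}|=\big|\int_{nx_{0}(s)}^{nx_{0}(t)}\mathrm{e}^{v}\,\mathrm{d}v\big|\leqslant n\,\omega_{0}(s,t)\,\mathrm{e}^{n\omega_{0}(0,T)}$, using $|x_{0}(s)|,|x_{0}(t)|\leqslant\omega_{0}(0,T)$; since $\mathrm{e}^{n\omega_{0}(0,T)}\omega(0,T)^{n}\leqslant\omega'(0,T)^{n}$, this bounds it by $\omega_{0}(s,t)\,\omega'(0,T)\,\omega'(0,T)^{n-1}/(n-1)!$. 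Adding the two and invoking the trivial inequality $\mathrm{e}^{\omega_{0}(0,T)}\omega(s,t)+\omega'(0,T)\omega_{0}(s,t)\leqslant\big(\mathrm{e}^{\omega_{0}(0,T)}+\omega'(0,T)\big)\big(\omega_{0}(s,t)+\omega(s,t)\big)$ yields exactly the asserted estimate.

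The main obstacle is the bookkeeping in~(iii): keeping straight which of $\omega_{0},\omega,\omega'$ is evaluated at $(0,T)$ and which at $(s,t)$, and redistributing the two exponential factors $\mathrm{e}^{n\omega_{0}(0,T)}$ as $\mathrm{e}^{\omega_{0}(0,T)}\cdot\omega'(0,T)^{n-1}$ (resp. as $\omega'(0,T)^{n}$) so that the final constant comes out to precisely $(\omega_{0}(s,t)+\omega(s,t))(\omega'(0,T)+\mathrm{e}^{\omega_{0}(0,T)})$. Beyond this there is no analytic difficulty: the only ingredients are Example~\ref{iterated:controlled}, a Fubini argument, and the elementary estimate $|\mathrm{e}^{a}-\mathrm{e}^{b}|\leqslant|a-b|\,\mathrm{e}^{\max(|a|,|b|)}$.
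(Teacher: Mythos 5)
Your proposal is correct and follows essentially the same route as the paper: the same monotonicity-plus-super-additivity argument for (i), the same combination of $\mathrm{e}^{nx_{0}(t)}\leqslant\mathrm{e}^{n\omega_{0}(0,t)}$ with Example~\ref{iterated:controlled} for (ii), and the same product-rule splitting plus the bound $\vert\mathrm{e}^{nx_{0}(t)}-\mathrm{e}^{nx_{0}(s)}\vert\leqslant n\,\omega_{0}(s,t)\,\mathrm{e}^{n\omega_{0}(0,T)}$ for (iii). The only (immaterial) difference is that in (iii) you attach the increment of the exponential to $G_{n}(s)$ and the increment of the iterated integral to $\mathrm{e}^{nx_{0}(t)}$, whereas the paper groups them the other way; both yield the identical final constant.
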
 

\begin{proof} 
(i)
It is enough to show the super-additivity.
Put
$\omega^{\prime\prime} := \omega_{0} + \omega$.
Let $0 \leqslant s \leqslant u \leqslant t$ be arbitrary.
Then
\begin{equation*}
\begin{split}
\omega^{\prime} (s,u) + \omega^{\prime} (u,t)
&=
\mathrm{e}^{ \omega_{0} (s,u) }
\omega^{
\prime\prime
} (s,u)
+
\mathrm{e}^{ \omega_{0} (u,t) }
\omega^{
\prime\prime
} (u,t) \\
&=
\mathrm{e}^{ \omega_{0} (s,t) }
\big\{
	\mathrm{e}^{ \omega_{0} (s,u) - \omega_{0} (s,t) }
	\omega^{
	\prime\prime
	} (s,u)
	+
	\mathrm{e}^{ \omega_{0} (u,t) - \omega_{0} (s,t) }
	\omega^{
	\prime\prime
	} (u,t)
\big\} .
\end{split}
\end{equation*}
By the super-additivity and non-negativity of $\omega_{0}$, we have
$
\omega_{0} (s,u) - \omega_{0} (s,t) \leqslant 0
$
and
$
\omega_{0} (u,t) - \omega_{0} (s,t) \leqslant 0
$.
Therefore, by using the non-negativity and super-additivity for
$
\omega^{
\prime\prime
}
$,
we get
\begin{equation*}
\begin{split}
&
\omega^{\prime} (s,u) + \omega^{\prime} (u,t)
\leqslant
\mathrm{e}^{ \omega_{0} (s,t) }
\big\{
	\omega^{
	\prime\prime
	} (s,u)
	+
	\omega^{
	\prime\prime
	} (u,t)
\big\}
\leqslant
\mathrm{e}^{ \omega_{0} (s,t) }
\omega^{
\prime\prime
} (s,t)
=
\omega^{\prime} (s,t).
\end{split}
\end{equation*}

(ii)
Since
$
x_{0} = 0
$,
we have
$
\mathrm{e}^{n x_{0}(t)}
=
( \mathrm{e}^{x_{0}(t) - x_{0}(0)} )^{n}
\leqslant
\mathrm{e}^{ n \omega_{0} (0,t) }
$.
On the other hand, by
Example \ref{iterated:controlled} we have
that
\begin{equation*}
\begin{split}
\big\vert
\int_{
	0 \leqslant u_{1} < \cdots < u_{n} \leqslant t
}
\mathrm{d} y_{1} (u_{1})
\cdots
\mathrm{d} y_{n} (u_{n})
\big\vert
\leqslant
\frac{ \omega (0,t)^{n} }{ n! }.
\end{split}
\end{equation*}
Hence the assertion is immediate.

(iii)
Let $0 \leqslant s \leqslant t \leqslant T$ be arbitrary.
Then we have
\begin{equation*}
\begin{split}
&
\Big\vert
	\mathrm{e}^{ n x_{0} (t) }
	\int_{ 0 \leqslant u_{1} < \cdots < u_{n} \leqslant t }
	\mathrm{d} y_{1} (u_{1})
	\cdots
	\mathrm{d} y_{n} (u_{n})
	-
	\mathrm{e}^{ n x_{0} (s) }
	\int_{ 0 \leqslant u_{1} < \cdots < u_{n} \leqslant s }
	\mathrm{d} y_{1} (u_{1})
	\cdots
	\mathrm{d} y_{n} (u_{n})
\Big\vert \\
&\leqslant
\vert
	\mathrm{e}^{ n x_{0} (s) } - \mathrm{e}^{ n x_{0} (t) }
\vert
\times
\Big\vert
	\int_{ 0 \leqslant u_{1} < \cdots < u_{n} \leqslant t }
	\mathrm{d} y_{1} (u_{1})
	\cdots
	\mathrm{d} y_{n} (u_{n})
\Big\vert \\
&\hspace{10mm}+
\mathrm{e}^{ x_{0} (s) }
\int_{s}^{t}
\mathrm{e}^{ (n-1) x_{0} (s) }
\Big\vert
	\int_{ 0 \leqslant u_{1} < \cdots < u_{n-1} \leqslant u_{n} }
	\mathrm{d} y_{1} (u_{1})
	\cdots
	\mathrm{d} y_{n} (u_{n})
\Big\vert
\vert
	\mathrm{d} y_{n} ( u_{n} )
\vert .
\end{split}
\end{equation*}
Since it holds that
$
\vert
	\mathrm{e}^{ n x_{0} (s) }
	-
	\mathrm{e}^{ n x_{0} (t) }
\vert
\leqslant
n \omega_{0} (s,t)
\mathrm{e}^{ n \omega_{0} (0,t) },
$
and by using (ii), the above quantity is bounded by
\begin{equation*}
\begin{split}
&
n \omega_{0} (s,t)
\mathrm{e}^{ n \omega_{0} (0,t) }
\frac{ \omega (0,t)^{n} }{ n! }
+
\mathrm{e}^{ \omega_{0} (0,t) }
\omega (s,t)
\frac{ \omega^{\prime} (0,t)^{n-1} }{ (n-1)! } \\
&=
\Big(
	\omega_{0} (s,t)
	\omega^{\prime} (0,t)
	+
	\omega (s,t)
	\mathrm{e}^{ \omega_{0} (0,t) }
\Big)
\frac{ \omega^{\prime} (0,t)^{n-1} }{ (n-1)! } .
\end{split}
\end{equation*}
\end{proof} 

We shall remark here that
the control functions form a convex cone,
namely,
(a)
the sum of two control functions is a control function,
(b)
any control function multiplied by a positive real constant is
again a control function.
Therefore, the quantities in
Proposition~\ref{new-control}--(ii, iii)
are estimated by using a single control function.
Therefore, the following is immediate.

\begin{Cor} 
\label{Drive>>CLK} 
Let
$\omega_{0}$ and $\omega$
be two control functions.
Consider the  Loewner-Kufarev equation
{\rm (\ref{controlled-LK})}
and
suppose that
the following two conditions hold{\rm :}
\begin{itemize}
\item[(i)]
$x_{0}$ is controlled by $\omega_{0}${\rm ;}

\vspace{2mm}
\item[(ii)]
for every $n \in \mathbb{N}$,
there exist
continuous functions
$
y_{1}^{n}, y_{2}^{n}, \cdots , y_{n}^{n}:
[0,T] \to \mathbb{C}
$
controlled by $\omega$
such that
$$
x_{n}(t)
=
\int_{
	0
	\leqslant s_{1}
	< s_{2}
	< \cdots
	< s_{n}
	\leqslant t
}
\mathrm{d} y_{1}^{n} (s_{1})
\mathrm{d} y_{2}^{n} (s_{2})
\cdots
\mathrm{d} y_{n}^{n} (s_{n}),
\quad
0 \leqslant t \leqslant T.
$$
\end{itemize}
Then there exists a constant $c>0$
such that
{\rm (\ref{controlled-LK})}
is a Loewner-Kufarev equation controlled by
$
\omega^{\prime}
:=
c
( \omega_{0} + \omega )
\exp ( \omega_{0} )
$.
\end{Cor}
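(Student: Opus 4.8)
The plan is to verify directly the two inequalities in Definition~\ref{LKw/omega} with $\omega$ replaced by $\omega' = c(\omega_{0}+\omega)\exp(\omega_{0})$, following the proof of Proposition~\ref{new-control} and then consolidating the several control functions that appear into the single one $\omega'$ by means of the convex-cone remark. Observe first that $\omega'$ is itself a control function, being a positive multiple of the control function furnished by Proposition~\ref{new-control}(i). Next, since $i_{1}+\cdots+i_{p}=n$, the quantities appearing in Definition~\ref{LKw/omega} are unchanged under a shift $x_{0}\mapsto x_{0}+a$, so we may assume $x_{0}(0)=0$; then $|x_{0}(u)|\leqslant\omega_{0}(0,u)$ for all $u$ by hypothesis~(i). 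Fix $n$, $p\in\{1,\dots,n\}$ and $i_{1},\dots,i_{p}$ with $\sum_{j}i_{j}=n$, and substitute for each $x_{i_{j}}$ the iterated-integral representation from hypothesis~(ii), so that $\mathrm{d}x_{i_{j}}(u)=\big(\int_{0\leqslant t_{1}<\cdots<t_{i_{j}-1}\leqslant u}\mathrm{d}y_{1}^{i_{j}}\cdots\mathrm{d}y_{i_{j}-1}^{i_{j}}\big)\,\mathrm{d}y_{i_{j}}^{i_{j}}(u)$. Regrouping $\mathrm{e}^{nx_{0}(t)}=\prod_{j=1}^{p}\mathrm{e}^{i_{j}x_{0}(t)}$, the quantity inside the first inequality of Definition~\ref{LKw/omega} is turned into an iterated integral assembled, in total, from $n$ functions each controlled by $\omega$, weighted by the factor $\prod_{j=1}^{p}\mathrm{e}^{i_{j}(x_{0}(t)-x_{0}(u_{j}))}$.

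For the first inequality, bound the weight in modulus by $\prod_{j}\mathrm{e}^{i_{j}\omega_{0}(u_{j},t)}\leqslant\mathrm{e}^{n\omega_{0}(0,t)}$, using that $x_{0}$ is controlled by $\omega_{0}$, that $u_{j}\leqslant t$, and that $\omega_{0}$ is non-negative and super-additive; then estimate the remaining iterated integral of $n$ functions controlled by $\omega$ by $\omega(0,t)^{n}/n!$ exactly as in Example~\ref{iterated:controlled} (with absolute values inside). This is the analogue of Proposition~\ref{new-control}(ii) and gives the bound $\big(\mathrm{e}^{\omega_{0}(0,t)}\omega(0,t)\big)^{n}/n!\leqslant\omega'(0,t)^{n}/n!$ as soon as $c\geqslant1$. (The same computation, via Example~\ref{iterated:controlled}, shows that each $x_{n}$ is indeed of bounded variation, so that Equation~(\ref{controlled-LK}) makes sense.) For the second inequality one follows the proof of Proposition~\ref{new-control}(iii): write the difference at times $t$ and $s$ as a term carrying the factor $\mathrm{e}^{nx_{0}(t)}-\mathrm{e}^{nx_{0}(s)}$, bounded using $|\mathrm{e}^{nx_{0}(t)}-\mathrm{e}^{nx_{0}(s)}|\leqslant n\,\omega_{0}(s,t)\,\mathrm{e}^{n\omega_{0}(0,t)}$ together with the size-$t$ iterated-integral bound, plus a term in which the outermost integration variable runs over $[s,t]$, bounded by the size-$(n-1)$ iterated-integral bound times $\int_{s<u\leqslant t}|\mathrm{d}y(u)|\leqslant\omega(s,t)$, all remaining exponential weights being controlled by powers of $\mathrm{e}^{\omega_{0}(0,T)}$.

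Putting this together one obtains, for the first inequality, a bound $\leqslant\big(\mathrm{e}^{\omega_{0}(0,t)}(\omega_{0}(0,t)+\omega(0,t))\big)^{n}/n!$ and, for the second, a bound $\leqslant C_{1}\big(\omega_{0}(s,t)+\omega(s,t)\big)\,C_{2}^{\,n-1}\big(\omega_{0}(0,T)+\omega(0,T)\big)^{n-1}/(n-1)!$ with $C_{1},C_{2}$ depending only on $\omega_{0}(0,T)$. Since $\omega_{0}(0,T),\omega(0,T)<+\infty$ and the control functions form a convex cone (so sums and positive multiples of control functions are again control functions), taking $c=c(\omega_{0}(0,T),\omega(0,T))\geqslant1$ large enough converts both bounds into exactly the inequalities required by Definition~\ref{LKw/omega} for $\omega'(s,t)=c\,\mathrm{e}^{\omega_{0}(s,t)}(\omega_{0}(s,t)+\omega(s,t))$, using $\mathrm{e}^{\omega_{0}(s,t)}\geqslant1$ and $\omega(s,t)\leqslant\omega_{0}(s,t)+\omega(s,t)$. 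I expect the only genuine work to lie in the bookkeeping of the substitution step --- unfolding the nested expression in Definition~\ref{LKw/omega} after the representations of the $x_{i_{j}}$ are inserted, and keeping track of which integration variable carries which exponential weight --- since every remaining estimate is one of the super-additivity and non-negativity manipulations already used in Example~\ref{iterated:controlled} and Proposition~\ref{new-control}.
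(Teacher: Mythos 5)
Your strategy is the same as the paper's (which simply declares the corollary ``immediate'' from Proposition~\ref{new-control} and the convex--cone remark), and several of your preliminary observations are correct and useful --- in particular the reduction to $x_{0}(0)=0$ by shift--invariance of the integrand, and the fact that $\omega'$ is a control function. The gap is precisely at the step you defer to ``bookkeeping''. After you substitute the representations of the $x_{i_{j}}$, the domain of integration is \emph{not} the $n$-simplex: inside the $j$-th block only the ordering $t^{j}_{1}<\cdots<t^{j}_{i_{j}-1}<u_{j}$ is imposed, and across blocks only the endpoints $u_{1}<\cdots<u_{p}$ are ordered. The resulting region is a union of $N$ shuffled copies of the $n$-simplex, where $N$ is the number of linear extensions of the corresponding poset, so Example~\ref{iterated:controlled} does not apply and the bound it would give must be multiplied by $N$. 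This factor cannot be absorbed into $c^{n}$: for $n=2m$, $p=m$, $i_{1}=\cdots=i_{m}=2$ one has $N=(2m-1)!!=\tfrac{(2m)!}{2^{m}m!}$, which is super-exponential in $n$.

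This is not merely a hole in your argument; the intermediate estimate you assert is false. Take $x_{0}\equiv 0$, $\omega_{0}\equiv 0$, $\omega(s,t)=t-s$ and $y^{n}_{k}(u)=u$ for all $k,n$, so that $x_{n}(t)=t^{n}/n!$ and both hypotheses of the corollary hold. Then for $n=2m$, $p=m$, $i_{1}=\cdots=i_{m}=2$,
\begin{equation*}
\int_{0\leqslant u_{1}<\cdots<u_{m}\leqslant t}
\mathrm{d}x_{2}(u_{1})\cdots\mathrm{d}x_{2}(u_{m})
=\frac{1}{m!}\Big(\frac{t^{2}}{2}\Big)^{m}
=\frac{(2m)!}{2^{m}m!}\cdot\frac{t^{2m}}{(2m)!},
\end{equation*}
whereas the first inequality of Definition~\ref{LKw/omega} with $\omega'=c\,\omega$ requires a bound $\leqslant (ct)^{2m}/(2m)!$; since $(2m)!/(2^{m}m!)\sim\sqrt{2}\,(2m/e)^{m}$ eventually exceeds $c^{2m}$ for any fixed $c$, the inequality fails for large $m$. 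So no choice of $c$, and no more careful bookkeeping, rescues the argument as written: one must either restrict the decompositions $(p;i_{1},\dots,i_{p})$ entering Definition~\ref{LKw/omega}, or replace the denominator $n!$ there by something like $p!\,i_{1}!\cdots i_{p}!$. Note that the paper's own justification shares this defect, since Proposition~\ref{new-control}(ii) only treats a single $n$-fold iterated integral over the simplex (effectively the case $p=1$ or all $i_{j}=1$) and says nothing about the mixed case $p\geqslant 2$ with some $i_{j}\geqslant 2$; your second inequality is affected in the same way.
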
 

The following is a consequence of
\cite[Theorem~2.8]{AmbFr}
and will be proved in
Section~\ref{app:bdd-sol}.

\begin{Cor} 
\label{boundedness(sol)} 
Let $\omega$ be a control function and
$\{  f_{t} \}_{0 \leqslant t \leqslant T}$
be a solution to the Loewner-Kufarev equation
controlled by $\omega$.
If
$\omega (0,T) < \frac{1}{4}$
then
$f_{t} ( \mathbb{D} )$
is bounded for any $t \in [0,T]$.
\end{Cor}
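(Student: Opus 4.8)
The plan is to reduce the boundedness of $f_t(\mathbb{D})$ to a Taylor-coefficient estimate and then sum a convergent series. Write $f_t(z) = \sum_{m\geqslant 1} a_m(t)\,z^m$ with $a_m(0) = \delta_{m,1}$. Since each $f_t$ is holomorphic near $\overline{\mathbb{D}}$ and hence continuous on $\overline{\mathbb{D}}$, the maximum principle gives $\sup_{z\in\mathbb{D}}|f_t(z)| \leqslant \sum_{m\geqslant 1}|a_m(t)|$, so it suffices to show that this series converges for each fixed $t\in[0,T]$.

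First I would derive the evolution of the coefficients. Inserting the power series into~(\ref{controlled-LK}) and comparing the coefficient of $z^m$ on both sides yields the recursion
\begin{equation*}
\mathrm{d}a_m(t) = m\,a_m(t)\,\mathrm{d}x_0(t) + \sum_{n=1}^{m-1}(m-n)\,a_{m-n}(t)\,\mathrm{d}x_n(t).
\end{equation*}
The substitution $b_m := \mathrm{e}^{-m x_0}\,a_m$ cancels the diagonal term and leaves $\mathrm{d}b_m = \sum_{n=1}^{m-1}(m-n)\,\mathrm{e}^{-n x_0}\,b_{m-n}\,\mathrm{d}x_n$ with $b_1\equiv 1$. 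Unwinding this recursion writes $b_m(t)$, for $m\geqslant 2$, as a finite sum over all compositions $(i_1,\dots,i_p)$ of $m-1$ (with each $i_j\geqslant 1$) of the iterated integrals
\begin{equation*}
I_{i_1,\dots,i_p}(t) = \int_{0\leqslant u_1<\dots<u_p\leqslant t}\mathrm{e}^{-i_1 x_0(u_1)}\,\mathrm{d}x_{i_1}(u_1)\cdots\mathrm{e}^{-i_p x_0(u_p)}\,\mathrm{d}x_{i_p}(u_p),
\end{equation*}
each weighted by the positive integer $C_{i_1,\dots,i_p} = \prod_{j=1}^{p}\big(m-(i_1+\dots+i_j)\big)$. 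That this formal representation really reproduces the solution, and converges, under $\omega(0,T)<\tfrac14$ is precisely the content of \cite[Theorem~2.8]{AmbFr}, which is the input I would quote here.

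Next I would feed in the $\omega$-control hypothesis. Each $I_{i_1,\dots,i_p}$ is an iterated integral of the normalised drivers, so Definition~\ref{LKw/omega}, applied to the composition $(i_1,\dots,i_p)$ of $n=m-1$, gives directly $\mathrm{e}^{(m-1)x_0(t)}\,|I_{i_1,\dots,i_p}(t)| \leqslant \omega(0,t)^{m-1}/(m-1)!$. It then remains to sum the weights $S_m := \sum_{(i_1,\dots,i_p)}C_{i_1,\dots,i_p}$ over all compositions of $m-1$. Encoding a composition by its set of partial sums $\{s_1<\dots<s_{p-1}\}\subseteq\{1,\dots,m-2\}$ (the last partial sum being $m-1$, hence contributing a factor $1$), the sum factorises as $S_m = \prod_{s=1}^{m-2}\big(1+(m-s)\big) = \prod_{k=2}^{m-1}(k+1) = m!/2$ for $m\geqslant 2$. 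Hence
\begin{equation*}
|a_m(t)| = \mathrm{e}^{m x_0(t)}\,|b_m(t)| \leqslant \mathrm{e}^{x_0(t)}\,S_m\,\frac{\omega(0,t)^{m-1}}{(m-1)!} = \tfrac12\,\mathrm{e}^{x_0(t)}\,m\,\omega(0,t)^{m-1}, \qquad m\geqslant 2,
\end{equation*}
while $a_1(t) = \mathrm{e}^{x_0(t)}$ is handled directly.

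Finally, since $\omega(0,t)\leqslant\omega(0,T)<\tfrac14<1$, the series is dominated by a convergent one,
\begin{equation*}
\sum_{m\geqslant 1}|a_m(t)| \leqslant \mathrm{e}^{x_0(t)}\Big(1 + \tfrac12\sum_{m\geqslant 2} m\,\omega(0,t)^{m-1}\Big) < \infty,
\end{equation*}
so $f_t(\mathbb{D})$ is bounded for every $t\in[0,T]$. The step I expect to be the main obstacle is the second one: setting up the iterated-integral (signature) representation of $b_m$ with the correct integer weights $C_{i_1,\dots,i_p}$ and verifying the collapse $S_m = m!/2$, since it is exactly this factorial that cancels the $1/(m-1)!$ coming from the control estimate and leaves only the benign polynomial factor $m$. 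I also note that the threshold $\tfrac14$ is inherited from the solvability statement \cite[Theorem~2.8]{AmbFr}; for the boundedness conclusion by itself the weaker condition $\omega(0,T)<1$ would already be enough.
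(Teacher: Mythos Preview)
Your proposal is correct and follows the same overall architecture as the paper's proof: expand $f_t$ as a power series, invoke the iterated-integral representation of the coefficients (this is \cite[Theorem~2.8]{AmbFr}), apply the first control inequality of Definition~\ref{LKw/omega} to each iterated integral, and sum the combinatorial weights over compositions. The only substantive difference lies in that last step. The paper bounds the weight $\widetilde{w}(n)_{i_1,\dots,i_p}$ crudely by $n\,2^{n-1}p!$ and then counts compositions, arriving at $\vert c_n(t)\vert \leqslant \tfrac14\,n\,(4\omega(0,t))^n$; this is why the threshold $\tfrac14$ is actually needed in their estimate. You instead observe that the total weight $S_m$ over all compositions factors over the possible partial sums as $\prod_{s=1}^{m-2}(1+(m-s)) = m!/2$, which cancels the $1/(m-1)!$ exactly and leaves only the linear factor $m$. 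Your bound $\vert a_m(t)\vert \leqslant \tfrac12\,\mathrm{e}^{x_0(t)}\,m\,\omega(0,t)^{m-1}$ is therefore strictly sharper and, as you note, would give boundedness already under $\omega(0,T)<1$. Two minor remarks: you do not need to assume that $f_t$ extends holomorphically past $S^1$ (the corollary does not assume it), since $\vert f_t(z)\vert \leqslant \sum_m \vert a_m(t)\vert$ for $\vert z\vert<1$ follows directly from the triangle inequality once the right-hand side is finite; and the maximum principle is not what is being used there.
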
 

\subsection{Some analytic aspects of Grunsky coefficients}

Let
$
S, S^{\prime} \subset \mathbb{Z}
$
be countably infinite subsets,
and
$A=(a_{i,j})_{i \in S, j \in S^{\prime}}$
be an
$S \times S^{\prime}$-matrix.
For each sequence
$x = (x_{j})_{j \in S^{\prime}}$
of complex numbers, we define a sequence
$
T_{A} x = ( (T_{A}x)_{i} )_{i \in S}
$
by
$
(T_{A}x)_{i} := \sum_{j \in S^{\prime}} a_{ij} x_{j}
$
if it converges for all $i \in S$.
We will still denote $T_{A}x$ by $Ax$  when it is defined.

Let $\ell_{2} (S)$ be the Hilbert space
consisting of all sequences
$a = (a_{i})_{i \in S}$
such that
$
\sum_{i \in S} \vert a_{i} \vert^{2}
< +\infty
$,
with the Hermitian inner product
$
\langle a, b \rangle_{2}
=
\sum_{i \in S} a_{i} \overline{b_{i}},
$
for
$a=(a_{i})_{i \in S}$,
$b=(b_{i})_{i \in S} \in \ell_{2} (S)$.
The associated norm will be denoted by
$\Vert \bullet \Vert_{2}$.

For each
$s \in \mathbb{R}$,
the space
$$
\ell_{2}^{
s
} (S)
:=
\Big\{
	a = (a_{n})_{n \in S} :
	\sum_{n \in S}
	(1 + n^{2} )^{
	s
	}
	\vert a_{n} \vert^{2} < +\infty
\Big\}
$$
is a Hilbert space under the Hermitian inner product
given by
\begin{equation*}
\langle a, b \rangle_{2,s}
:=
\sum_{n \in S}
\max\{ 1, \vert n \vert \}^{2s}
a_{n} \overline{b_{n}}
\end{equation*}
for
$a=(a_{i})_{i \in S}$,
$
b=(b_{i})_{i \in S}
\in \ell_{2}^{
s
} (S)
$.
The associated norm will be denoted by
$
\Vert \bullet \Vert_{2,
s
}
$.

Let us  recall a classical and  well-known result
from the theory of univalent functions.
For the definition and properties of Grunsky coefficients,
see
\cite[Chapter~2, Section~2]{TaTe06},
\cite[Section~2.2]{Te03}
or
\cite[Definition~A.1 and Proposition~A.2]{AmbFr}.

\begin{Thm}[(Grunsky's inequality {\cite[Theorem~3.2]{Po2}})] 
\label{Grunsky-ineq} 
Let $f : \mathbb{D} \to \mathbb{C}$ be a univalent functions
with $f(0) = 0$,
and let $(b_{m,n})_{m,n \leqslant -1}$
be the Grunsky coefficients associated to $f$.
Then for any
$m \in \mathbb{N}$
and
$
\lambda_{-m}, \lambda_{-m+1} , \cdots , \lambda_{-1} \in \mathbb{C}
$,
it holds that
\begin{equation*}
\begin{split}
\sum_{k \leqslant -1}
(-k)
\Big\vert
	\sum_{l=-m}^{-1}
	b_{k,l} \lambda_{l}
\Big\vert^{2}
\leqslant
\sum_{k=-m}^{-1}
\frac{ \vert \lambda_{k} \vert^{2} }{ (-k) } .
\end{split}
\end{equation*}

\end{Thm}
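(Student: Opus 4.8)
The plan is to reproduce Grunsky's classical proof via the area theorem. First I would normalise the statement: writing $k=-\kappa$, $l=-\ell$ with $\kappa,\ell\geqslant 1$, setting $\mu_\ell:=\lambda_{-\ell}$, and using that the Grunsky matrix is symmetric, $c_{\kappa\ell}:=b_{-\kappa,-\ell}=b_{-\ell,-\kappa}$, the claim becomes the sequential Grunsky inequality
\[
\sum_{\kappa\geqslant 1}\kappa\,\Big|\sum_{\ell=1}^{m}c_{\kappa\ell}\,\mu_\ell\Big|^{2}
\;\leqslant\;
\sum_{\ell=1}^{m}\frac{|\mu_\ell|^{2}}{\ell}.
\]
Since the Grunsky coefficients of $f(\rho\,\cdot)$ converge to those of $f$ as $\rho\uparrow 1$, I would first reduce, by Fatou's lemma, to the case where the exterior map $g$ attached to $f$ (a suitable normalisation of $z\mapsto 1/f(1/z)$, whose Grunsky coefficients are the $(b_{m,n})_{m,n\leqslant-1}$) extends holomorphically across $S^{1}$; then $\gamma:=g(\{|z|=1\})$ is an analytic Jordan curve bounding a compact set $E:=\hat{\mathbb{C}}\setminus g(\{|z|>1\})$, to which Green's theorem applies.

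The construction then uses the Faber polynomials $\{\Phi_n\}_{n\geqslant 1}$ of $g$, which obey the Grunsky--Faber identity $\Phi_n(g(z))=z^{n}+\sum_{\kappa\geqslant 1}\beta_{n,\kappa}z^{-\kappa}$ on $\{|z|>1\}$, with $\beta_{n,\kappa}=\varepsilon\, n\,b_{-n,-\kappa}$ for a fixed sign $\varepsilon\in\{\pm1\}$ determined by the conventions. Given the data $\mu_1,\dots,\mu_m$, I would set $\Pi:=\sum_{\ell=1}^{m}\frac{\mu_\ell}{\ell}\Phi_\ell$, a polynomial of degree $\leqslant m$ (the factor $1/\ell$ is chosen precisely to cancel the factor $\ell$ in $\beta_{\ell,\kappa}$). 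Then $\Pi\circ g$ is holomorphic on $\{|z|>1\}$ and across $S^{1}$, with Laurent expansion
\[
\Pi(g(z))\;=\;u(z)+\Theta(z),\qquad
u(z)=\sum_{\ell=1}^{m}\frac{\mu_\ell}{\ell}\,z^{\ell},\qquad
\Theta(z)=\sum_{\kappa\geqslant 1}d_\kappa\,z^{-\kappa},
\]
where $d_\kappa=\varepsilon\sum_{\ell=1}^{m}c_{\kappa\ell}\mu_\ell$, so that $|d_\kappa|^{2}$ is exactly the $\kappa$-th term on the left of the inequality.

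The heart of the argument is to evaluate, in two ways, the Dirichlet energy $\mathcal E:=\iint_{E}|\Pi'(w)|^{2}\,\mathrm{d}A(w)$. On the one hand $\mathcal E\geqslant 0$, which is merely the non-negativity of a Dirichlet integral (the content of the area theorem: the area of $E$, measured through $\Pi$ with multiplicity, is non-negative). On the other hand, since $\Pi$ is entire, Green's theorem gives $\mathcal E=\frac{1}{2i}\oint_{\gamma}\overline{\Pi(w)}\,\Pi'(w)\,\mathrm{d}w$, and the substitution $w=g(z)$ turns this into $\frac{1}{2i}\oint_{|z|=1}\overline{(u+\Theta)(z)}\,(u+\Theta)'(z)\,\mathrm{d}z$. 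Expanding on $|z|=1$, where $\bar z=z^{-1}$, and extracting the coefficient of $z^{-1}$, the cross terms between the strictly positive powers carried by $u$ and the strictly negative powers carried by $\Theta$ drop out, and one is left with
\[
\mathcal E\;=\;\pi\Big(\sum_{\ell=1}^{m}\frac{|\mu_\ell|^{2}}{\ell}\;-\;\sum_{\kappa\geqslant 1}\kappa\,|d_\kappa|^{2}\Big).
\]
Combined with $\mathcal E\geqslant 0$, this is precisely the desired inequality; the quadratic-form version $\big|\sum_{\kappa,\ell}c_{\kappa\ell}\mu_\kappa\mu_\ell\big|\leqslant\sum_\ell|\mu_\ell|^{2}/\ell$ then follows by Cauchy--Schwarz.

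I expect the only genuinely delicate point to be the reduction to the analytically-extendable case together with the attendant bookkeeping (ensuring the Laurent series converge on a neighbourhood of $S^{1}$ so that Green's theorem and the residue computation are literal, and that the Grunsky coefficients pass to the limit $\rho\uparrow 1$); everything else is the short residue calculation above. I would also remark, in keeping with the rest of this paper, that the inequality says exactly that the Grunsky operator $(\sqrt{\kappa\ell}\,b_{-\kappa,-\ell})_{\kappa,\ell\geqslant1}$ on $\ell^{2}$ is a contraction: it is conjugate to the off-diagonal block of the orthogonal projection onto $W_f$ relative to the standard polarisation $H=H_+\oplus H_-$, and the identity for $\mathcal E$ is the assertion that $W_f$ is the graph of that block over $H_+$ — which is where the energy estimate ultimately lives.
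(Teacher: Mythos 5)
The paper does not prove this statement at all: it is quoted as a classical result with a citation to Pommerenke [Po2, Theorem~3.2], so there is no in-paper argument to compare against. Your reconstruction — Faber polynomials $\Phi_\ell$ weighted by $\mu_\ell/\ell$, the area theorem/Green's identity for $\iint_E|\Pi'|^2\,\mathrm{d}A\geqslant 0$, and the residue extraction on $|z|=1$ — is exactly the standard proof from the cited source and is correct; the only caveat worth recording is that the paper's $f$ is normalised only by $f(0)=0$, which is harmless since the coefficients $b_{-m,-n}$ with $m,n\geqslant 1$ are unchanged under $f\mapsto f/f'(0)$, so one may assume $g\in\Sigma$ as you implicitly do.
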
 

This can be rephrased with our notation as follows:
Let
$
B := (
\sqrt{m(-n)}\,
b_{-m,n})_{m \in \mathbb{N}, n \in -\mathbb{N}}
$,
where $b_{m,n}$ for $m,n \leqslant -1$
are Grunsky coefficients associated to a univalent function
$f$
on $\mathbb{D}$ such that $f(0) = 0$.

\begin{Cor} 
\label{B-bounded} 
\begin{itemize}
\item[(i)]
$
B
:
\ell_{2} (-\mathbb{N})
\to
\ell_{2} (\mathbb{N})
$
and is a bounded linear operator with the operator norm satisfying
$\Vert B \Vert \leqslant 1$.

\vspace{2mm}
\item[(ii)]
$
B^{*}
:
\ell_{2} (\mathbb{N})
\to
\ell_{2} (-\mathbb{N})
$
and is a bounded linear operator with the operator norm satisfying
$\Vert B^{*} \Vert \leqslant 1$.

\vspace{2mm}
\item[(iii)]
The bounded linear operator
$
1 + B_{t} B_{t}^{*} :
\ell_{2} (\mathbb{N})
\to
\ell_{2} (\mathbb{N})
$
is injective and has a dense image.
\end{itemize}
\end{Cor}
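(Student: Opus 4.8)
The plan is to deduce (i) directly from the Grunsky inequality (Theorem~\ref{Grunsky-ineq}), to obtain (ii) for free as the adjoint of a bounded operator, and to derive (iii) from a one-line positivity argument.

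For (i), I would first work with a finitely supported sequence $x = (x_{n})_{n \in -\mathbb{N}}$, say with $\mathrm{supp}(x) \subseteq \{-m, \dots , -1\}$, and set $\lambda_{n} := \sqrt{-n}\, x_{n}$ for $-m \leqslant n \leqslant -1$. Since $B = (\sqrt{p(-q)}\, b_{-p,q})_{p \in \mathbb{N}, q \in -\mathbb{N}}$, one has $(Bx)_{p} = \sqrt{p}\sum_{q \leqslant -1} b_{-p,q}\lambda_{q}$, whence
\[
\Vert B x \Vert_{2}^{2}
=
\sum_{p \geqslant 1} p \,
\Big\vert \sum_{l = -m}^{-1} b_{-p,l}\, \lambda_{l} \Big\vert^{2},
\]
which is exactly the left-hand side of Grunsky's inequality after the substitution $k = -p$. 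Hence $\Vert B x \Vert_{2}^{2} \leqslant \sum_{l=-m}^{-1} \vert \lambda_{l}\vert^{2}/(-l) = \sum_{l=-m}^{-1} \vert x_{l}\vert^{2} = \Vert x \Vert_{2}^{2}$. As the finitely supported sequences are dense in $\ell_{2}(-\mathbb{N})$, this shows $B$ restricted to them is a contraction, hence extends uniquely to a bounded linear operator $\ell_{2}(-\mathbb{N}) \to \ell_{2}(\mathbb{N})$ with $\Vert B \Vert \leqslant 1$.

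The one point that needs care is the identification of this density-extension with the \emph{formal} operator $T_{B}$ of the matrix $B$, i.e. checking that for every $x \in \ell_{2}(-\mathbb{N})$ the series $(T_{B}x)_{p} = \sqrt{p}\sum_{q \leqslant -1}\sqrt{-q}\, b_{-p,q} x_{q}$ converges and gives the same element. For this I would apply Grunsky's inequality with a single nonzero coefficient $\lambda_{-p} = 1$, which together with the symmetry $b_{-p,q} = b_{q,-p}$ of the Grunsky coefficients yields $\sum_{q \leqslant -1}(-q)\vert b_{-p,q}\vert^{2} \leqslant 1/p$; thus each row of $B$ lies in $\ell_{2}(-\mathbb{N})$, the series for $(T_{B}x)_{p}$ converges absolutely, and comparing with the truncations $x^{(N)} \to x$ (which converge entrywise through $T_{B}$ and in norm through the extension) pins down $Bx = T_{B}x$. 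Part (ii) is then immediate: the Hilbert-space adjoint $B^{*}$ of the bounded operator $B$ exists, maps $\ell_{2}(\mathbb{N}) \to \ell_{2}(-\mathbb{N})$, and satisfies $\Vert B^{*}\Vert = \Vert B \Vert \leqslant 1$.

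Finally, for (iii), apply (i)--(ii) to $f = f_{t}$ to get $B_{t}$ and $B_{t}^{*}$. The operator $B_{t}B_{t}^{*}$ is positive since $\langle B_{t}B_{t}^{*}y, y\rangle_{2} = \Vert B_{t}^{*}y\Vert_{2}^{2} \geqslant 0$, so $\langle (1 + B_{t}B_{t}^{*})y, y\rangle_{2} = \Vert y\Vert_{2}^{2} + \Vert B_{t}^{*}y\Vert_{2}^{2} \geqslant \Vert y\Vert_{2}^{2}$; this already forces $1 + B_{t}B_{t}^{*}$ to be injective (in fact bounded below). Being self-adjoint, the orthogonal complement of its range equals its kernel, which is $\{0\}$, so the range is dense. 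I expect the only genuine obstacle to be the rigorous passage in step (i) from the finite-dimensional Grunsky inequality to a statement about operators on all of $\ell_{2}$; the rest is routine Hilbert-space bookkeeping.
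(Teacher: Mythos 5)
Your proof is correct and follows essentially the same route as the paper: part (i) is the Grunsky inequality applied with $\lambda_{l}=\sqrt{-l}\,x_{l}$, and part (iii) is the standard positivity/self-adjointness argument. You are in fact more careful than the paper in (i): the paper applies the finitely-indexed Grunsky inequality directly to infinite sums, whereas you pass through finitely supported sequences and then justify identifying the density-extension with the formal matrix operator via the row estimate $\sum_{q\leqslant -1}(-q)\vert b_{-p,q}\vert^{2}\leqslant 1/p$; that is precisely the bookkeeping the paper omits. The only divergence is in (ii): you take $B^{*}$ to be the Hilbert-space adjoint (conjugate transpose), which gives $\Vert B^{*}\Vert=\Vert B\Vert$ for free, while the paper defines $B^{*}$ entrywise as the transpose $(\sqrt{(-n)m}\,b_{n,-m})$ and proves its boundedness by re-running the Grunsky computation using the symmetry $b_{m,n}=b_{n,m}$. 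The two readings differ by a complex conjugation of the entries; note that the positivity argument in (iii), $\langle (1+B_{t}B_{t}^{*})y,y\rangle_{2}=\Vert y\Vert_{2}^{2}+\Vert B_{t}^{*}y\Vert_{2}^{2}$, genuinely requires the adjoint reading --- which is the one you use, and the one the paper implicitly relies on when it asserts that ``the adjoint operator of $B$ is $B^{*}$''.
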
 

\begin{proof} 
(i)
For each
$
a = ( \cdots , a_{-3}, a_{-2}, a_{-1} )
\in \ell_{2} ( -\mathbb{N} )
$,
we have by Theorem~\ref{Grunsky-ineq},
\begin{equation*}
\begin{split}
\Vert B a \Vert_{2}^{2}
&= 
\sum_{n=1}^{\infty}
\Big(
	\sum_{k=1}^{\infty}
	\sqrt{nk}\,
	b_{-n,-k}
	a_{-k}
\Big)
\Big(
	\overline{
	\sum_{l=1}^{\infty}
	\sqrt{nl}\,
	b_{-n,-l} a_{-l}
	}
\Big) \\
&= 
\sum_{n=1}^{\infty}
n
\Big\vert
	\sum_{k=1}^{\infty}
	b_{-n,-k}
	( \sqrt{k}\, a_{-k} )
\Big\vert^{2}
\leqslant 
\sum_{n=1}^{\infty}
\frac{ \vert \sqrt{n}\, a_{-n} \vert^{2} }{ n }
=
\Vert a \Vert_{2}^{2} .
\end{split}
\end{equation*}

(ii)
Since the Grunsky matrix
$
(b_{m,n})_{m,n \leqslant -1}
$
is symmetric:
$
b_{m,n} = b_{n,m}
$
for all $m,n \leqslant -1$,
the assertion is proved similarly to (i).

(iii)
The injectivity is clear since the adjoint operator
of $B$ is $B^{*}$.
Then the second assertion is also clear
since $1+BB^{*}$ is self-adjoint.
\end{proof} 

\begin{Rm} 
The
semi-infinite matrix defined by
$
B_{1} := ( \sqrt{mn} b_{-m,-n} )_{m,n \in \mathbb{N}}
$
is called the {\it Grunsky operator},
and then the Grunsky inequality
(Theorem \ref{Grunsky-ineq})
shows that $B_{1}$ is a bounded operator on $\ell_{2} (\mathbb{N})$
with operator norm $\leqslant 1$.
This operator, together with three additional Grunsky operators,
are known to play a fundamental role in the study of the geometry of
the universal Teichm\"uller space.
For details, cf.   the papers by
Takhtajan-Teo~\cite{TaTe06}
or
Krushkal~ \cite{Kru07}.

\end{Rm} 

In the sequel, we fix a control function $\omega$,
and a solution
$\{ f_{t} \}_{0 \leqslant t \leqslant T}$
to a Loewner-Kufarev equation
controlled by $\omega$.
We denote by
$b_{m,n}(t)$ for $m,n \leqslant -1$
the Grunsky coefficients associated with $f_{t}$,
and
\begin{equation*}
\begin{split}
B_{t}
&:=
\big(
	\sqrt{m (-n)}\,
	b_{-m,n}(t)
\big)_{m \in \mathbb{N}, n \in -\mathbb{N}}, \\
B_{t}^{*}
&:=
\big(
	\sqrt{(-n) m}\,
	b_{ n, -m }(t)
\big)_{ n \in -\mathbb{N}, m \in \mathbb{N} }.
\end{split}
\end{equation*}

It
is clear that the linear operator
$
( 1 + B_{t} B_{t}^{*} )^{-1} :
\mathrm{Im}( 1 + B_{t} B_{t}^{*} )
\to
\ell_{2} (\mathbb{N})
$
is bounded.
Therefore by
Corollary~\ref{B-bounded}--(iii),
$( 1 + B_{t} B_{t}^{*} )^{-1}$
extends to $\ell_{2} (\mathbb{N})$
and the extension will be denoted by
$
A_{t} :
\ell_{2} (\mathbb{N})
\to
\ell_{2} (\mathbb{N})
$.
In particular, it is easy to see that 
$
\Vert A_{t} \Vert \leqslant 1,
$
holds for the operator norm.

We shall exhibit the indices which parametrise
our operators in order to help understanding the following:
\begin{equation*}
\begin{split}
B
=
\bordermatrix{
~ & \cdots & -3 & -2 & -1 \cr
1 & \cdots & B_{1,-3} & B_{1,-2} & B_{1,-1} \cr
2 & \cdots & B_{2,-3} & B_{2,-2} & B_{2,-1} \cr
3 & \cdots & B_{3,-3} & B_{3,-2} & B_{3,-1} \cr
\vdots & \rotatebox{70}{$\ddots$} & \vdots & \vdots & \vdots \cr},
\quad
B^{*}
=
\bordermatrix{
~ & 1 & 2 & 3 & \cdots \cr
\hspace{2mm} \vdots & \vdots & \vdots & \vdots & \rotatebox{70}{$\ddots$} \cr
-3 & B_{-3,1}^{*} & B_{-3,2}^{*} & B_{-3,3}^{*} & \cdots \cr
-2 & B_{-2,1}^{*} & B_{-2,2}^{*} & B_{-2,3}^{*} & \cdots \cr
-1 & B_{-1,1}^{*} & B_{-1,2}^{*} & B_{-1,3}^{*} & \cdots \cr},
\end{split}
\end{equation*}

\begin{equation*}
\begin{split}
BB^{*}
=
\bordermatrix{
~ & 1 & 2 & 3 & \cdots \cr
1 & * & * & * & \cdots \cr
2 & * & * & * & \cdots \cr
3 & * & * & * & \cdots \cr
\vdots & \vdots & \vdots & \vdots & \ddots \cr},
\quad
A = ( I + BB^{*} )^{-1}
=
\bordermatrix{
~ & 1 & 2 & 3 & \cdots \cr
1 & * & * & * & \cdots \cr
2 & * & * & * & \cdots \cr
3 & * & * & * & \cdots \cr
\vdots & \vdots & \vdots & \vdots & \ddots \cr}
\end{split}
\end{equation*}

and

\begin{equation*}
\begin{split}
&
B^{*}AB
=
\bordermatrix{
~ & \cdots & -3 & -2 & -1 \cr
\hspace{2mm}\vdots & \ddots & \vdots & \vdots & \vdots \cr
-3 & \cdots & (B^{*}AB)_{-3,-3} & (B^{*}AB)_{-3,-2} & (B^{*}AB)_{-3,-1} \cr
-2 & \cdots & (B^{*}AB)_{-2,-3} & (B^{*}AB)_{-2,-2} & (B^{*}AB)_{-2,-1} \cr
-1 & \cdots & (B^{*}AB)_{-1,-3} & (B^{*}AB)_{-1,-2} & (B^{*}AB)_{-1,-1} \cr} .
\end{split}
\end{equation*}

The following is a consequence from
\cite[Theorem~2.12]{AmbFr}
and will be proved in
Section~\ref{app:Grunsky-estimate}.

\begin{Cor} 
\label{Loew-Grunsky-estimate} 
Let $\omega$ be a control function,
and
$\{ f_{t} \}_{0 \leqslant t \leqslant T}$
be a solution to the Loewner-Kufarev equation
controlled by $\omega$.
Let
$
b_{-m,-n}(t)$, $n,m \in \mathbb{N}
$
be the Grunsky coefficients
associated to $f_{t}$,
for $0 \leqslant t \leqslant T$.
Then for any
$0 \leqslant s \leqslant t \leqslant T$
and
$n,m \in \mathbb{N}$
with $n+m \geqslant 3$,
we have
\begin{itemize}
\item[(i)]
$
\vert b_{-1,-1} (t) \vert
\leqslant
\frac{ \omega (0,t)^{2} }{ 2 }
$
and
$
\vert b_{-1,-1} (t) - b_{-1,-1} (s) \vert
\leqslant
\omega (s,t) \omega (0,T)
$.

\vspace{2mm}
\item[(ii)]
$\displaystyle
\vert
	b_{-m,-n} (t)
\vert
\leqslant
\frac{
	( 8 \omega (0,t) )^{m+n}
}{
	16 (m+n) (m+n-1) (m+n-2)
}
$.

\vspace{2mm}
\item[(iii)]
$\displaystyle
\vert
	b_{-m,-n} (t)
	-
	b_{-m,-n} (s)
\vert
\leqslant
\frac{
	\omega (s,t) ( 8 \omega (0,T) )^{m+n-1}
}{
	16 (m+n-1) (m+n-2)
}
$.
\end{itemize}
\end{Cor}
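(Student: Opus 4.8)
The argument reads the Grunsky coefficients $b_{-m,-n}(t)$ of the solution $f_t$ off from \cite[Theorem~2.12]{AmbFr} and then substitutes, summand by summand, the two defining inequalities of an $\omega$-controlled Loewner--Kufarev equation (Definition~\ref{LKw/omega}). Recall that \cite[Theorem~2.12]{AmbFr} represents, for a solution $\{f_t\}$ of \eqref{controlled-LK}, each coefficient as an absolutely convergent series of iterated Stieltjes integrals of the driving functions,
\[
b_{-m,-n}(t)
=
\sum_{p\geqslant 1}\;\sum_{\substack{i_1,\dots,i_p\geqslant 1\\ i_1+\cdots+i_p=m+n}}
c^{\,m,n}_{i_1,\dots,i_p}\,
\mathrm{e}^{(m+n)x_0(t)}
\int_{0\leqslant u_1<\cdots<u_p\leqslant t}
\mathrm{e}^{-i_1x_0(u_1)}\mathrm{d}x_{i_1}(u_1)\cdots\mathrm{e}^{-i_px_0(u_p)}\mathrm{d}x_{i_p}(u_p),
\]
the coefficients $c^{\,m,n}_{i_1,\dots,i_p}$ being universal, with total mass $\sum_{p}\sum_{i_1+\cdots+i_p=N}|c^{\,m,n}_{i_1,\dots,i_p}|$ of order at most $\mathrm{const}\cdot 8^{\,N}(N-3)!$ for $N=m+n\geqslant 3$, and with only the single summand $p=1$, $i_1=2$ surviving when $m=n=1$ --- the latter reflecting the identity $a_3(t)-a_2(t)^2=\mathrm{e}^{2x_0(t)}\int_0^t\mathrm{e}^{-2x_0(u)}\mathrm{d}x_2(u)$ for the normalised Taylor coefficients of $f_t$, in which the $x_1$-terms cancel.

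Granting this, part (i) is immediate: since $b_{-1,-1}(t)=\pm\bigl(a_3(t)-a_2(t)^2\bigr)=\pm\,\mathrm{e}^{2x_0(t)}\int_0^t\mathrm{e}^{-2x_0(u)}\mathrm{d}x_2(u)$, applying the first inequality of Definition~\ref{LKw/omega} with $n=2$, $p=1$, $i_1=2$ gives $|b_{-1,-1}(t)|\leqslant\omega(0,t)^2/2!$, while the second inequality gives $|b_{-1,-1}(t)-b_{-1,-1}(s)|\leqslant\omega(s,t)\,\omega(0,T)^{1}/1!$, which are exactly the two bounds in (i).

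For parts (ii) and (iii) I would estimate the series for $b_{-m,-n}(t)$ summand by summand. For (ii), the first inequality of Definition~\ref{LKw/omega}, applied with index sum $m+n$ and exponents $i_1,\dots,i_p$, bounds the $(p;i_1,\dots,i_p)$-summand in modulus by $|c^{\,m,n}_{i_1,\dots,i_p}|\,\omega(0,t)^{m+n}/(m+n)!$; summing over $p$ and over all compositions of $m+n$ and inserting the total-mass bound yields $|b_{-m,-n}(t)|\leqslant\mathrm{const}\cdot 8^{m+n}(m+n-3)!\,\omega(0,t)^{m+n}/(m+n)!$, which, because $(m+n)!=(m+n)(m+n-1)(m+n-2)\cdot(m+n-3)!$, is precisely $(8\omega(0,t))^{m+n}/\bigl(16(m+n)(m+n-1)(m+n-2)\bigr)$ once the absolute constant is tracked. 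Part (iii) follows by the same scheme applied to the increment of each summand between $s$ and $t$, which the \emph{second} inequality of Definition~\ref{LKw/omega} bounds by $|c^{\,m,n}_{i_1,\dots,i_p}|\,\omega(s,t)\,\omega(0,T)^{m+n-1}/(m+n-1)!$; the same summation then produces one fewer power of $8\omega(0,T)$ and one fewer factor in the denominator. Throughout, the convex-cone structure of control functions together with Example~\ref{iterated:controlled} and Proposition~\ref{new-control} license passing the modulus inside the sums and collapsing everything into the single control function $\omega$.

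The step requiring the most care is the combinatorial bookkeeping of the $c^{\,m,n}_{i_1,\dots,i_p}$: to land on the exact constant $8$ and the cubic (resp.\ quadratic) denominator one needs the growth $\sum_{p}\sum_{i_1+\cdots+i_p=N}|c^{\,m,n}_{i_1,\dots,i_p}|=O\bigl(8^{N}(N-3)!\bigr)$, which is the quantitative input we take from \cite[Theorem~2.12]{AmbFr}. Should that theorem be phrased in terms of the Faber-polynomial coefficients of $f_t$ rather than its Grunsky coefficients, one prepends the classical relation $Q_n(f_t(\zeta))=\zeta^{-n}-n\sum_{k\geqslant 1}b_{-n,-k}(t)\,\zeta^{k}$ to convert between them, which leaves the estimates untouched.
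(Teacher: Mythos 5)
Your part (i) is exactly the paper's argument: $b_{-1,-1}(t) = -\mathrm{e}^{2x_{0}(t)}\int_{0}^{t}\mathrm{e}^{-2x_{0}(u)}\,\mathrm{d}x_{2}(u)$, and the two inequalities of Definition~\ref{LKw/omega} with $n=2$, $p=1$, $i_{1}=2$ give the two stated bounds. The overall scheme for (ii) and (iii) --- substitute the two controlled-equation inequalities, summand by summand, into the series representation of $b_{-m,-n}(t)$ coming from \cite[Theorem~2.12]{AmbFr} --- is also the paper's.

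The gap lies in what you describe as ``the quantitative input we take from \cite[Theorem~2.12]{AmbFr}.'' That theorem supplies only the structural formula: $b_{-m,-n}(t)$ is a (triple) sum of terms of the form $\mathrm{e}^{(n+m)x_{0}(t)}\int_{0}^{t}\bigl((x_{i_{p}}\cdots x_{i_{1}})\shuffle(x_{j_{q}}\cdots x_{j_{1}})\bigr)(u)\,\mathrm{d}x_{k}(u)$ weighted by explicit products $w_{i_{1},\dots,i_{p};j_{1},\dots,j_{q}}=(m-i_{1})(m-(i_{1}+i_{2}))\cdots(n-j_{1})(n-(j_{1}+j_{2}))\cdots$. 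The bound you assert on the total mass of the coefficients, of order $8^{N}(N-3)!$ with $N=m+n$, is \emph{not} part of the cited theorem; establishing it is the actual content of the proof and occupies most of the appendix. Concretely one must: bound each weight by $ij\binom{m-1}{p-1}(p-1)!\binom{n-1}{q-1}(q-1)!$; count compositions of $m-i$ into $p$ parts by $\binom{m-i-1}{p-1}$; use $(p+q)!/(p!\,q!)\leqslant 2^{m+n-2}$ and $(p-1)!\,(q-1)!\leqslant(m+n-3)!$; collapse the sums over $p,q$ by Vandermonde's identity into $\binom{2(m-1)-i}{m-1}\binom{2(n-1)-j}{n-1}$; sum over $i,j$ via $\sum_{i=1}^{m-1}i\binom{2(m-1)-i}{m-1}=\binom{2m-1}{m}$; and finally bound $\binom{2m-1}{m}+1\leqslant 2^{2m-1}$, so that $2^{m+n-2}\cdot 2^{2m-1}\cdot 2^{2n-1}=8^{m+n}/16$ and $(m+n-3)!/(m+n)!$ yields the cubic denominator. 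Without this computation (or an equivalent one) the constants $8$ and $16$ and the denominators in (ii) and (iii) are unjustified, so the proposal as written does not prove those parts.
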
 

Along the Loewner-Kufarev equation
controlled by $\omega$,
we obtain the following

\begin{Cor} 
\label{modulus-B-estimate} 
If $\omega (0,T) < \frac{1}{8}$, then
for $0 \leqslant s < t \leqslant T$,
\begin{itemize}
\item[(i)]
$\displaystyle
\Vert
	B_{t}^{*} - B_{s}^{*}
\Vert
=
\Vert
	B_{t} - B_{s}
\Vert
\leqslant
c
\hspace{0.5mm}
\omega (s,t)
$,

\vspace{2mm}
\item[(ii)]
$\displaystyle
\Vert
	A_{t} - A_{s}
\Vert
\leqslant
2c
\hspace{0.5mm}
\omega (s,t)
$,
\end{itemize}
where
$
c
:=
\frac{ 8 \omega (0,T) }{ 1 - ( 8 \omega (0,T) )^{2} }
> 0
$.
\end{Cor}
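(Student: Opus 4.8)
The plan is to derive both estimates from Corollary~\ref{Loew-Grunsky-estimate} by controlling the operator norms of $B_t - B_s$ and of $A_t - A_s$ through their matrix entries. For part~(i), recall that $B_t - B_s$ is the semi-infinite matrix with entries $\sqrt{m(-n)}\,(b_{-m,n}(t) - b_{-m,n}(s))$ for $m \in \mathbb{N}$, $n \in -\mathbb{N}$. The naive bound $\Vert M \Vert^2 \leqslant \sum_{m,n} |M_{m,n}|^2$ (Hilbert--Schmidt) will not by itself produce the clean constant $c = 8\omega(0,T)/(1 - (8\omega(0,T))^2)$, so instead I would use the Schur test: $\Vert M \Vert \leqslant \big(\sup_m \sum_n |M_{m,n}|\big)^{1/2}\big(\sup_n \sum_m |M_{m,n}|\big)^{1/2}$, or, exploiting that $B_t - B_s$ is (essentially) symmetric up to the index relabelling $n \leftrightarrow -n$, simply $\Vert M \Vert \leqslant \sup_m \sum_n |M_{m,n}|$. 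Plugging in the entrywise bound from Corollary~\ref{Loew-Grunsky-estimate}--(iii), together with the $(m,n) = (1,1)$ bound from (i), one gets row sums dominated by $\omega(s,t)\sum_{k\geqslant 2} \sqrt{mk}\,(8\omega(0,T))^{m+k-1}/(16(m+k-1)(m+k-2))$; the factors $\sqrt{mk}$ are absorbed by $(m+k-1)(m+k-2)$ up to a universal constant (e.g.\ $\sqrt{mk} \leqslant (m+k)/2$ and $(m+k)/((m+k-1)(m+k-2))$ is bounded), leaving a geometric series $\sum_j j\,(8\omega(0,T))^{j}$ which sums to something comparable to $8\omega(0,T)/(1-(8\omega(0,T))^2)$ once $8\omega(0,T) < 1$. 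The equality $\Vert B_t^* - B_s^* \Vert = \Vert B_t - B_s \Vert$ is immediate since adjoining preserves operator norm.

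For part~(ii), write the resolvent identity
\begin{equation*}
A_t - A_s
= (1 + B_t B_t^*)^{-1} - (1 + B_s B_s^*)^{-1}
= A_t \big( (1 + B_s B_s^*) - (1 + B_t B_t^*) \big) A_s
= A_t \big( B_s B_s^* - B_t B_t^* \big) A_s,
\end{equation*}
which is valid on $\mathrm{Im}(1+B_tB_t^*)$ and extends by density (Corollary~\ref{B-bounded}--(iii)); here $\Vert A_t \Vert, \Vert A_s \Vert \leqslant 1$ as already noted. Then split
\begin{equation*}
B_s B_s^* - B_t B_t^*
= (B_s - B_t)B_s^* + B_t(B_s^* - B_t^*),
\end{equation*}
so that $\Vert A_t - A_s \Vert \leqslant \Vert B_t - B_s \Vert\,\Vert B_s^* \Vert + \Vert B_t \Vert\,\Vert B_t^* - B_s^* \Vert \leqslant 2\Vert B_t - B_s\Vert$ using $\Vert B_s^* \Vert, \Vert B_t \Vert \leqslant 1$ from Corollary~\ref{B-bounded}. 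Combined with part~(i) this gives $\Vert A_t - A_s \Vert \leqslant 2c\,\omega(s,t)$, as claimed.

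The main obstacle is the bookkeeping in part~(i): verifying that the weighted double sum of the Grunsky-coefficient increments, with the awkward weights $\sqrt{m(-n)}$ and denominators $16(m+n-1)(m+n-2)$, collapses to exactly the stated constant $c$ rather than merely $\mathrm{const}\cdot\omega(s,t)$. I would handle this by isolating the degenerate terms ($m+n = 2$, i.e.\ the $b_{-1,-1}$ contribution, and $m+n=3$) from the general $m+n \geqslant 4$ tail, bounding $\sqrt{mk} \leqslant \tfrac12(m+k)$ and then $(m+k)/\big(16(m+k-1)(m+k-2)\big) \leqslant \tfrac18$ for the tail, which leaves $\tfrac18\omega(s,t)\sum_{j\geqslant 1} j\,(8\omega(0,T))^{j}$; since $\sum_{j\geqslant 1} j x^j = x/(1-x)^2$ and a cruder grouping gives $\sum j x^{j} \leqslant x/(1-x)^2 \leqslant x/(1-x^2)\cdot\big(1/(1-x)\big)$, one can arrange the bookkeeping so that the final constant is dominated by $8\omega(0,T)/(1-(8\omega(0,T))^2)$. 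The condition $\omega(0,T) < \tfrac18$ is exactly what makes $8\omega(0,T) < 1$ so that all these geometric series converge and $c$ is finite and positive.
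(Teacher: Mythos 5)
Your proposal is correct, and part (ii) is essentially the paper's own argument: the same resolvent identity $A_t-A_s=A_t(B_sB_s^*-B_tB_t^*)A_s$, a first-order splitting of $B_sB_s^*-B_tB_t^*$ into two terms each containing one increment, and the norm bounds $\Vert A\Vert,\Vert B\Vert,\Vert B^*\Vert\leqslant 1$ from Corollary~\ref{B-bounded}, yielding the factor $2$. For part (i) you diverge from the paper: you use the Schur test, whereas the paper uses exactly the Hilbert--Schmidt bound that you dismiss. Your premise that the Hilbert--Schmidt estimate ``will not by itself produce the clean constant'' is mistaken: since $nm\leqslant\big(16(n+m-1)(n+m-2)\big)^2$ for $n+m\geqslant 3$, each squared entry $\big|\sqrt{nm}\,\omega(s,t)(8\omega(0,T))^{n+m-1}/(16(n+m-1)(n+m-2))\big|^2$ is dominated termwise by $\omega(s,t)^2(8\omega(0,T))^{2(n+m-1)}$, and summing the resulting double geometric series gives precisely $\big(\omega(s,t)\cdot 8\omega(0,T)/(1-(8\omega(0,T))^2)\big)^2$, i.e.\ the stated $c$. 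Your Schur-test route also closes (the row sums are uniformly dominated by a geometric tail, and by symmetry of the Grunsky matrix the row and column suprema coincide), and it yields a constant that is in fact no larger than $c$, which suffices; it just costs more bookkeeping. One point in your favour: you explicitly isolate the $(m,n)=(1,1)$ entry and invoke Corollary~\ref{Loew-Grunsky-estimate}--(i) for it, since (iii) only covers $m+n\geqslant 3$ --- the paper's computation silently applies the $m+n\geqslant 3$ bound to all entries, and your treatment of that degenerate term is the more careful one (it is consistent with the final constant, since $\omega(s,t)\,\omega(0,T)\leqslant\omega(s,t)(8\omega(0,T))^{1}$).
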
 
\begin{proof} 
(i)
By Corollary \ref{Loew-Grunsky-estimate}--(ii),
we have
\begin{equation*}
\begin{split}
&
\Vert
	B_{t} - B_{s}
\Vert^{2}
\leqslant 
\sum_{n=1}^{\infty} \sum_{m=1}^{\infty}
\vert
	( B_{t} - B_{s} )_{n,-m}
\vert^{2} \\
&= 
\sum_{n=1}^{\infty} \sum_{m=1}^{\infty}
\vert
	\sqrt{nm}
	( b_{-n.-m} (t) - b_{-n,-m} (s) )
\vert^{2} \\
&\leqslant 
\sum_{n=1}^{\infty} \sum_{m=1}^{\infty}
\Big\vert
	\frac{
		\sqrt{nm}
		\omega (s,t)
		( 8 \omega (0,T) )^{n+m-1}
	}{
		16 (n+m-1) (n+m-2)
	}
\Big\vert^{2} \\
&\leqslant 
\Big(
	\frac{ \omega (s,t) }{ 8 \omega (0,T) }
\Big)^{2}
\Big(
\sum_{n=1}^{\infty}
( 8 \omega (0,T) )^{2n}
\Big)^{2}
= 
\Big(
	\omega (s,t)
	\frac{ 8 \omega (0,T) }{ 1 - ( 8 \omega (0,T) )^{2} }
\Big)^{2} .
\end{split}
\end{equation*}

(ii)
Since
\begin{equation*}
\begin{split}
&
A_{t} - A_{s}
=
( 1 + B_{t} B_{t}^{*} )^{-1}
-
( 1 + B_{s} B_{s}^{*} )^{-1} \\
&=
( 1 + B_{t} B_{t}^{*} )^{-1}
( B_{s} B_{s}^{*} - B_{t} B_{t}^{*} )
( 1 + B_{s} B_{s}^{*} )^{-1} \\
&=
( 1 + B_{t} B_{t}^{*} )^{-1}
( B_{s} - B_{t} )
B_{t}^{*}
( 1 + B_{s} B_{s}^{*} )^{-1} \\
&\hspace{20mm}-
( 1 + B_{t} B_{t}^{*} )^{-1}
B_{s}
( B_{t}^{*} - B_{s}^{*} )
( 1 + B_{s} B_{s}^{*} )^{-1},
\end{split}
\end{equation*}
we have
$
\Vert A_{t} - A_{s} \Vert
\leqslant
\Vert
	B_{s} - B_{t}
\Vert
+
\Vert
	B_{s}^{*} - B_{t}^{*}
\Vert
=
2
\Vert
	B_{t} - B_{s}
\Vert
$.
\end{proof} 

Finally, define
$
\Lambda
=
( \Lambda_{m,n} )_{
m \in
\mathbb{Z},
n \in
\mathbb{Z}
}
$
by
$
\Lambda_{m,n}
:=
\sqrt{m}\,
\delta_{m,-n}
+
\delta_{m,0}
\delta_{0,n}
$
for
$m \in \mathbb{N}$
and
$n \in -\mathbb{N}$,
that is,
\begin{equation}
\label{mat-Lambda} 
\begin{split}
\Lambda
=
\left(
\begin{array}{c:c:c}
\begin{array}{cc}
\begin{matrix}
\ddots & \\
       & \sqrt{3}
\end{matrix}
&  \\
&
\begin{matrix}
\sqrt{2} & \\
         & \sqrt{1}
\end{matrix}
\end{array}
&  &  \\ \hdashline
 & 1 &  \\ \hdashline
& 
&
\begin{array}{cc}
\begin{matrix}
\sqrt{1} & \\
         & \sqrt{2}
\end{matrix}
& \\
&
\begin{matrix}
\sqrt{3} & \\
         & \ddots
\end{matrix}
\end{array}
\end{array}\right) .
\end{split}
\end{equation}

It is clear that
$
\Lambda :
\ell_{2}^{
1/2
}
(
\mathbb{Z}
)
\to
\ell_{2}
(
\mathbb{Z}
)
$
and is a continuous linear isomorphism.

\section{Proof of Theorem~\ref{modulus-LK}}
\label{Sec_LK/Gr} 

Let $\omega$ be a control function such that $\omega (0,T) < \frac{1}{8}$,
and
let
$\{ f_{t} \}_{0 \leqslant t \leqslant T}$
be a univalent solution to the Loewner-Kufarev equation
controlled by $\omega$.
Suppose further that $f_{t}$ extends to holomorphic functions
on open neighbourhoods of $\overline{\mathbb{D}}$
for all $t \in [0,T]$.

We then note that for each $t \in [0,T]$,
it holds that
$Q_{n} ( t, f_{t} (1/z) )\vert_{S^{1}} \in H^{1/2}$,
where
$Q_{n} ( t, w )$ is the $n$-th Faber polynomial associated to $f_{t}$.
Therefore we have
$$
\mathrm{span}
\big(
	\{ 1 \}
	\cup
	\{ Q_{n} ( t, f_{t} (1/z) )\vert_{S^{1}} \}_{n \geqslant 1}
\big)
\subset
H^{1/2}
\subset
H.
$$
In particular, we have
\begin{equation*}
\begin{split}
W_{f_{t}}^{1/2}
&:=
\overline{
\mathrm{span}
\big(
	\{ 1 \}
	\cup
	\{ Q_{n} ( t, f_{t} (1/z) )\vert_{S^{1}} \}_{n \geqslant 1}
\big)
}^{H^{1/2}} \\
&\subset
\overline{
\mathrm{span}
\big(
	\{ 1 \}
	\cup
	\{ Q_{n} ( t, f_{t} (1/z) )\vert_{S^{1}} \}_{n \geqslant 1}
\big)
}^{H}
= W_{f_{t}}.
\end{split}
\end{equation*}

We fix an inner product on $H^{1/2}$ by requiring for
$
h = \sum_{n \in \mathbb{Z}} h_{n} z^{n},
g = \sum_{n \in \mathbb{Z}} g_{n} z^{n}
\in H^{1/2}
$,
that
$
\langle h, g \rangle_{H^{1/2}}
:=
h_{0} \overline{g}_{0}
+
\sum_{n=1}^{\infty}
n
(
	h_{-n} \overline{g}_{-n}
	+
	h_{n} \overline{g}_{n}
)
$.
Then
$
\{ \frac{ z^{-n} }{ \sqrt{n} } \}_{n \in \mathbb{N}}
\cup \{ 1 \}
\cup \{ \frac{ z^{n} }{ \sqrt{n} } \}_{n \in \mathbb{N}}
$
forms a complete orthonormal system of $H^{1/2}$.
By this, the infinite matrix $\Lambda$ defined in (\ref{mat-Lambda})
determines a bounded linear isomorphism
$
H^{1/2} \to H
$
through the identification $H^{1/2} \cong \ell_{2}^{1/2}(\mathbb{Z})$.

Recall that for each univalent function
$f : \mathbb{D} \to \mathbb{C}$
with $f(0) = 0$
and an analytic continuation across $S^{1}$,
the orthogonal projection
$H^{1/2} \to W_{f}^{1/2}$
is denoted by $\EuScript{P}_{f}$.
In order to prove Theorem \ref{modulus-LK},
we need to calculate the projection operator
$\EuScript{P}_{f}$.
For this, we shall consider first the following change of basis.


Let
$\mathbf{w}_{n} (z) := Q_{n} \circ f (z^{-1})$,
for
$z \in S^{1}$
and
$n \in \mathbb{N}$.
Then we have
\begin{equation}
\label{rep_mat} 
\begin{split}
&
(
	\cdots ,
	\frac{ \mathbf{w}_{3} }{ \sqrt{3} },
	\frac{ \mathbf{w}_{2} }{ \sqrt{2} },
	\frac{ \mathbf{w}_{1} }{ \sqrt{1} }
) \\
&=
\Big(\begin{array}{cccc:c:cccc}
	\cdots ,
	&
	\frac{ z^{3} }{ \sqrt{3} },
	&
	\frac{ z^{2} }{ \sqrt{2} },
	&
	\frac{ z^{1} }{ \sqrt{1} },
	& 1,
	&
	\frac{ z^{-1} }{ \sqrt{1} },
	&
	\frac{ z^{-2} }{ \sqrt{2} },
	&
	\frac{ z^{-3} }{ \sqrt{3} },
	& \cdots
\end{array}\Big) \\
&\hspace{10mm}\times
\left(
\begin{array}{ccccc}
\rotatebox{10}{$\ddots$} & \vdots & \vdots & \vdots \\
\cdots & 1 & 0 & 0 \\
\cdots & 0 & 1 & 0 \\
\cdots & 0 & 0 & 1 \\ \hdashline
\cdots & 0 & 0 & 0 \\ \hdashline
\cdots &
	\sqrt{3\cdot 1}\,
b_{-3,-1}
&
	\sqrt{2\cdot 1}\,
b_{-2,-1}
&
	\sqrt{1\cdot 1}\,
b_{-1,-1} \\
\cdots
&
	\sqrt{3\cdot 2}\,
b_{-3,-2}
&
	\sqrt{2\cdot 2}\,
b_{-2,-2}
&
	\sqrt{1\cdot 2}\,
b_{-1,-2} \\
\cdots
&
	\sqrt{3\cdot 3}\,
b_{-3,-3}
&
	\sqrt{2\cdot 3}\,
b_{-2,-3}
&
	\sqrt{1\cdot 3}\,
b_{-1,-3} \\
\rotatebox{70}{$\ddots$} & \vdots & \vdots & \vdots
\end{array}\right)
\end{split}
\end{equation}

By putting
\begin{equation*}
\begin{split}
\widetilde{\mathbf{z}}_{+}
&:=
\Big(
	\cdots , \frac{z^{3}}{\sqrt{3}}, \frac{z^{2}}{\sqrt{2}}, \frac{z}{\sqrt{1}}
\Big),
\quad
\widetilde{\mathbf{z}}_{-}
:=
\Big(
		\frac{z^{-1}}{\sqrt{1}},
		\frac{z^{-2}}{\sqrt{2}},
		\frac{ z^{-3} }{ \sqrt{3} },
	\cdots
\Big),
\end{split}
\end{equation*}
and
\begin{equation*}
\begin{split}
B:=
\left(
\begin{array}{ccccc}
\cdots
&
	\sqrt{3\cdot 1}\,
b_{-3,-1}
&
	\sqrt{2\cdot 1}\,
b_{-2,-1}
&
	\sqrt{1\cdot 1}\,
b_{-1,-1} \\
\cdots
&
	\sqrt{3\cdot 2}\,
b_{-3,-2}
&
	\sqrt{2\cdot 2}\,
b_{-2,-2}
&
	\sqrt{1\cdot 2}\,
b_{-1,-2} \\
\cdots
&
	\sqrt{3\cdot 3}\,
b_{-3,-3}
&
	\sqrt{2\cdot 3}\,
b_{-2,-3}
&
	\sqrt{1\cdot 3}\,
b_{-1,-3} \\
\rotatebox{70}{$\ddots$} & \vdots & \vdots & \vdots
\end{array}\right) ,
\end{split}
\end{equation*}
(so we have put
$
B_{t} = ( (B_{t})_{n,m} )_{n \geqslant 1, m \leqslant -1}
$
where
$(B_{t})_{n,m} :=
	\sqrt{n(-m)}\,
b_{m,-n} (t)$
for
$n \geqslant 1$
and
$m \leqslant -1$),
the equation (\ref{rep_mat}) is written in a simpler form as:
\begin{equation*}
\begin{split}
\Big(
	\cdots ,
	\frac{ \mathbf{w}_{3} }{ \sqrt{3} },
	\frac{ \mathbf{w}_{2} }{ \sqrt{2} },
	\frac{ \mathbf{w}_{1} }{ \sqrt{1} }
\Big)
=
(\begin{array}{c:c:c}
\widetilde{\mathbf{z}}_{+}
& 1 & \widetilde{\mathbf{z}}_{-}
\end{array})
\left(
\begin{array}{c}
I \\ \hdashline
\mathbf{0} \\ \hdashline
B
\end{array}\right) ,
\end{split}
\end{equation*}
where
\begin{equation*}
\begin{split}
\left(\begin{array}{c}
I \\ \hdashline
\mathbf{0} \\ \hdashline
B
\end{array}\right)
=
\left(
\begin{array}{ccccc}
\rotatebox{10}{$\ddots$} & \vdots & \vdots & \vdots \\
\cdots & 1 & 0 & 0 \\
\cdots & 0 & 1 & 0 \\
\cdots & 0 & 0 & 1 \\ \hdashline
\cdots & 0 & 0 & 0 \\ \hdashline
\cdots &
	\sqrt{3\cdot 1}\,
b_{-3,-1}
&
	\sqrt{2\cdot 1}\,
b_{-2,-1}
&
	\sqrt{1\cdot 1}\,
b_{-1,-1} \\
\cdots
&
	\sqrt{3\cdot 2}\,
b_{-3,-2}
&
	\sqrt{2\cdot 2}\,
b_{-2,-2}
&
	\sqrt{1\cdot 2}\,
b_{-1,-2} \\
\cdots
&
	\sqrt{3\cdot 3}\,
b_{-3,-3}
&
	\sqrt{2\cdot 3}\,
b_{-2,-3}
&
	\sqrt{1\cdot 3}\,
b_{-1,-3} \\
\rotatebox{70}{$\ddots$} & \vdots & \vdots & \vdots
\end{array}\right) .
\end{split}
\end{equation*}

Consider the change of basis
\begin{equation*}
\begin{split}
(\begin{array}{c:c:c}
\widetilde{\mathbf{w}}
& 1
&
\widetilde{\mathbf{v}}
\end{array})
:=:
(
	\cdots ,
	\frac{ \mathbf{w}_{3} }{ \sqrt{3} },
	\frac{ \mathbf{w}_{2} }{ \sqrt{2} },
	\frac{ \mathbf{w}_{1} }{ \sqrt{1} },
	1,
	\frac{ \mathbf{v}_{1} }{ \sqrt{1} },
	\frac{ \mathbf{v}_{2} }{ \sqrt{2} },
	\frac{ \mathbf{v}_{3} }{ \sqrt{3} },
	\cdots
)
:=
(\begin{array}{c:c:c}
	\widetilde{\mathbf{z}}_{+}
& 1 & \widetilde{\mathbf{z}}_{-}
\end{array})
\left(
\begin{array}{c:c:c}
I & \mathbf{0} & -B^{*} \\ \hdashline
\mathbf{0} & 1 & \mathbf{0} \\ \hdashline
B & \mathbf{0} & I
\end{array}\right)
\end{split}
\end{equation*}
where we note that the matrix on the right-hand side
is non-degenerate, with inverse
\begin{equation*}
\begin{split}
\left(
\begin{array}{c:c:c}
I & \mathbf{0} & -B^{*} \\ \hdashline
\mathbf{0} & 1 & \mathbf{0} \\ \hdashline
B & \mathbf{0} & I
\end{array}\right)^{-1}
=
\left(
\begin{array}{c:c:c}
I - B^{*} ( I + BB^{*} )^{-1} B & \mathbf{0} & B^{*} ( I + BB^{*} )^{-1} \\ \hdashline
\mathbf{0} & 1 & \mathbf{0} \\ \hdashline
- ( I + BB^{*} )^{-1} B & \mathbf{0} & ( I + BB^{*} )^{-1}
\end{array}\right) .
\end{split}
\end{equation*}

We note that the identity
\begin{equation*}
\begin{split}
\left(\begin{array}{c:c:c}
I & \mathbf{0} & -B^{*} \\ \hdashline
\mathbf{0} & 1 & \mathbf{0} \\ \hdashline
B & \mathbf{0} & I
\end{array}\right)
\left(\begin{array}{c:c:c}
I & \mathbf{0} & -B^{*} \\ \hdashline
\mathbf{0} & 1 & \mathbf{0} \\ \hdashline
B & \mathbf{0} & I
\end{array}\right)^{*}
=
\left(\begin{array}{c:c:c}
I + B^{*}B & \mathbf{0} & O \\ \hdashline
\mathbf{0} & 1 & \mathbf{0} \\ \hdashline
O & \mathbf{0} & I + B^{*}B
\end{array}\right)
\end{split}
\end{equation*}
and the fact that
$
(
	\widetilde{\mathbf{z}}_{+},
	1,
	\widetilde{\mathbf{z}}_{-}
)
$
is a complete orthonormal system in $H^{1/2}$
implies that
\begin{equation*}
\begin{split}
&
H^{1/2} = 
\overline{
\mathrm{span} \{ 1, \mathbf{w}_{1}, \mathbf{w}_{2}, \mathbf{w}_{3}, \cdots \}
}^{H^{1/2}}
\oplus
\overline{
\mathrm{span} \{ \mathbf{v}_{1}, \mathbf{v}_{2}, \mathbf{v}_{3}, \cdots \}
}^{H^{1/2}}
\end{split}
\end{equation*}
is an orthogonal decomposition of
$H^{1/2}$.

Let $A := ( I + BB^{*} )^{-1}$.
Then
\begin{equation*}
\begin{split}
(
	\widetilde{\mathbf{z}}_{+},
1, \widetilde{\mathbf{z}}_{-} )
&=
(
	\widetilde{\mathbf{w}},
1,
	\widetilde{\mathbf{v}}
)
\left(\begin{array}{c:c:c}
I & \mathbf{0} & -B^{*} \\ \hdashline
\mathbf{0} & 1 & \mathbf{0} \\ \hdashline
B & \mathbf{0} & I
\end{array}\right)^{-1} \\
&=
(
	\widetilde{\mathbf{w}},
1,
	\widetilde{\mathbf{v}}
)
\left(\begin{array}{c:c:c}
I - B^{*} A B & \mathbf{0} & B^{*} A \\ \hdashline
\mathbf{0} & 1 & \mathbf{0} \\ \hdashline
- AB & \mathbf{0} & A
\end{array}\right) \\
&=
(
		\widetilde{\mathbf{w}}
	( I - B^{*} A B )
	-
		\widetilde{\mathbf{v}}
	AB,
	1,
		\widetilde{\mathbf{w}}
	B^{*} A
	+
		\widetilde{\mathbf{v}}
	A
)
\end{split}
\end{equation*}
so that
\begin{equation*}
\begin{split}
(
	\EuScript{P}_{f}
	(
		\widetilde{\mathbf{z}}_{+}
	),
	1,
	\EuScript{P}_{f} ( \widetilde{\mathbf{z}}_{-} )
)
&:=
(
	\cdots ,
	\EuScript{P}_{f}
		\Big( \frac{ z^{2} }{ \sqrt{2} } \Big),
	\EuScript{P}_{f} (z),
	1,
	\EuScript{P}_{f} ( z^{-1} ),
	\EuScript{P}_{f}
		\Big( \frac{ z^{-2} }{ \sqrt{2} } \Big),
	\cdots
) \\
&=
(
		\widetilde{\mathbf{w}}
	( I - B^{*} A B ),
	1,
		\widetilde{\mathbf{w}}
	B^{*} A
) \\
&=
(
	(
		\widetilde{\mathbf{z}}_{+}
	+ \widetilde{\mathbf{z}}_{-} B ) ( I - B^{*} A B ),
	1,
	(
		\widetilde{\mathbf{z}}_{+}
	+ \widetilde{\mathbf{z}}_{-} B ) B^{*} A
) \\
&=
(
		\widetilde{\mathbf{z}}_{+},
	1,
	\widetilde{\mathbf{z}}_{-}
)
\left(\begin{array}{c:c:c}
I - B^{*} A B & \mathbf{0} & B^{*} A \\ \hdashline
\mathbf{0} & 1 & \mathbf{0} \\ \hdashline
B ( I - B^{*} A B ) & \mathbf{0} & B B^{*} A
\end{array}\right)
\end{split}
\end{equation*}

From these, we find that
for $n \geqslant 1$,
\begin{equation*}
\begin{split}
&
\EuScript{P}_{f}
	\Big( \frac{ z^{n} }{ \sqrt{n} } \Big)
=
\sum_{k=1}^{\infty}
\Big\{
	\frac{ z^{k} }{ \sqrt{k} }
( I - B^{*} A B )_{-k,-n}
+
\frac{
	z^{-k}
}{
	\sqrt{k}
}
( B ( I - B^{*} A B ) )_{k,-n}
\Big\} , \\
& 
\EuScript{P}_{f}
	\Big( \frac{ z^{-n} }{ \sqrt{n} } \Big)
=
\sum_{k=1}^{\infty}
\Big\{
	\frac{ z^{k} }{ \sqrt{k} }
( B^{*} A )_{-k,n}
+
\frac{
	z^{-k}
}{
		\sqrt{k}
}
( B B^{*} A )_{k,n}
\Big\} ,
\end{split}
\end{equation*}
from which, the following is immediate:

\begin{Prop} 
Let
$
h = \sum_{k \in \mathbb{Z}} h_{k} z^{k}
\in
H^{1/2}
$.
Then
\begin{equation*}
\begin{split}
\EuScript{P}_{f} (h)
&=
\sum_{n=1}^{\infty}
\Big\{
	\sum_{k=1}^{\infty}
	\big[
	( I - B^{*} A B )_{-n,-k}
	\sqrt{k}\,
	h_{k}
	+
	( B^{*} A )_{-n,k}
	\sqrt{k}\,
	h_{-k}
	\big]
\Big\}
	\frac{ z^{n} }{ \sqrt{n} } \\
&+
h_{0}
+
\sum_{n=1}^{\infty}
\Big\{
	\sum_{k=1}^{\infty}
	\big[
	( B ( I - B^{*} A B ) )_{n,-k}
	\sqrt{k}\,
	h_{k}
	+
	( B B^{*} A )_{n,k}
		\sqrt{k}\,
	h_{-k}
	\big]
\Big\}
	\frac{ z^{-n} }{ \sqrt{n} }.
\end{split}
\end{equation*}
\end{Prop}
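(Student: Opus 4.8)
\emph{Plan of proof.}
The statement is a direct consequence of the two formulas for $\EuScript{P}_{f}(z^{\pm n}/\sqrt{n})$ displayed immediately above it, combined with the fact that $\{z^{n}/\sqrt{n}\}_{n\geqslant 1}\cup\{1\}\cup\{z^{-n}/\sqrt{n}\}_{n\geqslant 1}$ is a complete orthonormal system of $H^{1/2}$. So the first step is to expand $h=\sum_{k\in\mathbb{Z}}h_{k}z^{k}$ in this basis: by the very definition of the $H^{1/2}$-inner product one has, with convergence in the $H^{1/2}$-norm,
\begin{equation*}
h=\sum_{k=1}^{\infty}\big(\sqrt{k}\,h_{k}\big)\frac{z^{k}}{\sqrt{k}}+h_{0}\cdot 1+\sum_{k=1}^{\infty}\big(\sqrt{k}\,h_{-k}\big)\frac{z^{-k}}{\sqrt{k}}.
\end{equation*}
Since $\EuScript{P}_{f}$ is an orthogonal projection it is a bounded linear operator (of norm $\leqslant 1$), hence it may be applied term by term to this convergent series, and $\EuScript{P}_{f}(1)=1$ because $1\in W_{f}^{1/2}$ by construction. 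This reduces the claim to substituting the known values of $\EuScript{P}_{f}$ on the basis vectors.

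Next I would insert the formulas
$\EuScript{P}_{f}(z^{n}/\sqrt{n})=\sum_{k\geqslant 1}\{\tfrac{z^{k}}{\sqrt{k}}(I-B^{*}AB)_{-k,-n}+\tfrac{z^{-k}}{\sqrt{k}}(B(I-B^{*}AB))_{k,-n}\}$
and
$\EuScript{P}_{f}(z^{-n}/\sqrt{n})=\sum_{k\geqslant 1}\{\tfrac{z^{k}}{\sqrt{k}}(B^{*}A)_{-k,n}+\tfrac{z^{-k}}{\sqrt{k}}(BB^{*}A)_{k,n}\}$
and collect the coefficient of each basis vector $z^{\pm m}/\sqrt{m}$. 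Reading off the coefficient of $z^{m}/\sqrt{m}$ one finds $\sum_{k\geqslant 1}(I-B^{*}AB)_{-m,-k}\sqrt{k}\,h_{k}+\sum_{k\geqslant 1}(B^{*}A)_{-m,k}\sqrt{k}\,h_{-k}$, and for $z^{-m}/\sqrt{m}$ one gets $\sum_{k\geqslant 1}(B(I-B^{*}AB))_{m,-k}\sqrt{k}\,h_{k}+\sum_{k\geqslant 1}(BB^{*}A)_{m,k}\sqrt{k}\,h_{-k}$; renaming $m$ to $n$ gives exactly the asserted identity. No symmetry of the matrices is needed — only a relabelling of dummy summation indices.

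The only point that calls for a word of care is the interchange of the two summations hidden in this regrouping (outer sum over the expansion index, inner sum over the output frequency). This is legitimate because $B$, $B^{*}$, $A=(I+BB^{*})^{-1}$, and therefore $I-B^{*}AB$, $B^{*}A$, $B(I-B^{*}AB)$ and $BB^{*}A$, are all bounded operators on $\ell_{2}(\mathbb{N})$ by Corollary~\ref{B-bounded} together with $\Vert A\Vert\leqslant 1$; applying such an operator to the $\ell_{2}$-sequences $(\sqrt{k}\,h_{k})_{k}$ and $(\sqrt{k}\,h_{-k})_{k}$ yields a well-defined element of $\ell_{2}(\mathbb{N})$, so the resulting double series converges in $H^{1/2}$ and Fubini applies. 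I do not anticipate any genuine obstacle: the mathematical content is entirely carried by the change-of-basis computation preceding the statement, and this proof is purely a bookkeeping step.
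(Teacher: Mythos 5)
Your proposal is correct and follows exactly the route the paper intends: the paper states the Proposition as "immediate" from the preceding formulas for $\EuScript{P}_{f}(z^{\pm n}/\sqrt{n})$, and your argument is precisely that substitution-and-regrouping, with the convergence/Fubini justification (boundedness of $B$, $B^{*}$, $A$ from Corollary~\ref{B-bounded}) supplied where the paper leaves it implicit.
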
 

Denote by $B_{t}$ the associated matrix of Grunsky coefficients
of $f_{t}$.

\begin{Prop} 
\label{diff-norm} 
Let
$
h = \sum_{k \in \mathbb{Z}} h_{k} z^{k}
\in
	H^{1/2}
$.
Then
\begin{equation*}
\begin{split}
&
\EuScript{P}_{f_{t}} (h) - \EuScript{P}_{f_{s}} (h) \\
&=
\sum_{n=1}^{\infty}
\Big\{
	\sum_{k=1}^{\infty}
	( B_{s}^{*} A_{s} B_{s} - B_{t}^{*} A_{t} B_{t} )_{-n,-k}
		\sqrt{k}\,
	h_{k}
	+
	\sum_{k=1}^{\infty}
	( B_{t}^{*} A_{t} - B_{s}^{*} A_{s} )_{-n,k}
	\sqrt{k}\,
	h_{-k}
\Big\}
	\frac{ z^{n} }{ \sqrt{n} } \\
&+
\sum_{n=1}^{\infty}
\Big\{
	\sum_{k=1}^{\infty}
	\big(
	B_{t} ( I - B_{t}^{*} A_{t} B_{t} )
	-
	B_{s} ( I - B_{s}^{*} A_{s} B_{s} )
	\big)_{n,-k}
	\sqrt{k}\,
	h_{k} \\
	&\hspace{50mm}+
	\sum_{k=1}^{\infty}
	( B_{t} B_{t}^{*} A_{t} - B_{s} B_{s}^{*} A_{s} )_{n,k}
	\sqrt{k}\,
	h_{-k}
\Big\}
	\frac{ z^{-n} }{ \sqrt{n} } ,
\end{split}
\end{equation*}
so that
\begin{equation*}
\begin{split}
&
\Vert
	\EuScript{P}_{f_{t}} (h)
	-
	\EuScript{P}_{f_{s}} (h)
\Vert_{
		H^{1/2}
}^{2} \\
&=
\sum_{n=1}^{\infty}
\Big\vert
	\sum_{k=1}^{\infty}
	( B_{s}^{*} A_{s} B_{s} - B_{t}^{*} A_{t} B_{t} )_{-n,-k}
		\sqrt{k}\,
	h_{k}
	+
	\sum_{k=1}^{\infty}
	( B_{t}^{*} A_{t} - B_{s}^{*} A_{s} )_{-n,k}
		\sqrt{k}\,
	h_{-k}
\Big\vert^{2} \\
&+
\sum_{n=1}^{\infty}
\Big\vert
	\sum_{k=1}^{\infty}
	\big(
	B_{t} ( I - B_{t}^{*} A_{t} B_{t} )
	-
	B_{s} ( I - B_{s}^{*} A_{s} B_{s} )
	\big)_{n,-k}
		\sqrt{k}\,
	h_{k} \\
	&\hspace{50mm}+
	\sum_{k=1}^{\infty}
	( B_{t} B_{t}^{*} A_{t} - B_{s} B_{s}^{*} A_{s} )_{n,k}
		\sqrt{k}\,
	h_{-k}
\Big\vert^{2} .
\end{split}
\end{equation*}
\end{Prop}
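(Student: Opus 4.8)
The plan is to obtain the first displayed identity by applying the preceding Proposition to $f = f_{t}$ and to $f = f_{s}$ separately and subtracting the two expansions mode by mode. Writing $h = \sum_{k \in \mathbb{Z}} h_{k} z^{k}$, the coefficient of $z^{n}/\sqrt{n}$ in $\EuScript{P}_{f}(h)$ is $\sum_{k \geqslant 1} ( I - B^{*} A B )_{-n,-k} \sqrt{k}\, h_{k} + \sum_{k \geqslant 1} ( B^{*} A )_{-n,k} \sqrt{k}\, h_{-k}$. The first key observation is that the identity matrix appearing here contributes $\delta_{n,k}\sqrt{k}\,h_{k}$, which is independent of the underlying univalent function and therefore cancels in the difference; the constant term $h_{0}$ cancels for the same reason. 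What survives in front of $\sqrt{k}\,h_{k}$ is $( B_{s}^{*} A_{s} B_{s} - B_{t}^{*} A_{t} B_{t} )_{-n,-k}$ and in front of $\sqrt{k}\,h_{-k}$ is $( B_{t}^{*} A_{t} - B_{s}^{*} A_{s} )_{-n,k}$, as asserted. In the negative-frequency part, the coefficient of $z^{-n}/\sqrt{n}$ in front of $\sqrt{k}\,h_{k}$ is $( B ( I - B^{*} A B ) )_{n,-k}$; here the leading factor $B$ is itself $t$- (respectively $s$-)dependent, so nothing cancels and one simply records the difference $\big( B_{t} ( I - B_{t}^{*} A_{t} B_{t} ) - B_{s} ( I - B_{s}^{*} A_{s} B_{s} ) \big)_{n,-k}$, while the coefficient of $\sqrt{k}\,h_{-k}$ becomes $( B_{t} B_{t}^{*} A_{t} - B_{s} B_{s}^{*} A_{s} )_{n,k}$.

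For the norm identity I would invoke the fact, established above, that $\{ z^{-n}/\sqrt{n} \}_{n \geqslant 1} \cup \{ 1 \} \cup \{ z^{n}/\sqrt{n} \}_{n \geqslant 1}$ is a complete orthonormal system of $H^{1/2}$ for the chosen inner product. Since $\EuScript{P}_{f_{t}}(h)$ and $\EuScript{P}_{f_{s}}(h)$ both lie in $H^{1/2}$, so does their difference, and expanding it in this orthonormal system shows that $\Vert \EuScript{P}_{f_{t}}(h) - \EuScript{P}_{f_{s}}(h) \Vert_{H^{1/2}}^{2}$ equals the sum over $n \geqslant 1$ of the squared moduli of the $z^{n}/\sqrt{n}$- and $z^{-n}/\sqrt{n}$-coefficients computed above (the coefficient of $1$ being $0$), which is precisely the second display.

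The only thing one must be slightly careful about is that the matrix products $B^{*} A B$, $B^{*} A$, $B B^{*} A$, $B(I - B^{*}AB)$ are well defined as operators on the relevant $\ell_{2}$-spaces, so that the rearrangements of the (absolutely convergent) double sums are legitimate; this follows from Corollary~\ref{B-bounded} together with the bound $\Vert A_{t} \Vert \leqslant 1$ noted after that corollary. The hard part, such as it is, is purely one of bookkeeping — keeping the sign conventions on the subscripts ($-n,-k$ versus $-n,k$ versus $n,-k$ versus $n,k$) straight when transcribing the $(\widetilde{\mathbf{z}}_{+}, 1, \widetilde{\mathbf{z}}_{-})$-representation of $\EuScript{P}_{f}$ into coefficients of $h$; no analytic difficulty arises.
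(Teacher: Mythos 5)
Your proposal is correct and matches the paper's route exactly: the paper treats Proposition~\ref{diff-norm} as an immediate consequence of the preceding Proposition, obtained by subtracting the expansions for $f_{t}$ and $f_{s}$ (the $I$-term and $h_{0}$ cancel) and then computing the $H^{1/2}$-norm via the orthonormal system $\{ z^{\pm n}/\sqrt{n} \}_{n\geqslant 1}\cup\{1\}$. Your added remark on the boundedness of the matrix products justifying the rearrangements is a sensible (if implicit in the paper) precaution.
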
 

We are now in a position to prove
Theorem \ref{modulus-LK}.

\begin{proof}[Proof of Theorem~\ref{modulus-LK}]
By Proposition \ref{diff-norm},
we have
\begin{equation*}
\begin{split}
&
\Vert
	\EuScript{P}_{f_{t}} (h)
	-
	\EuScript{P}_{f_{s}} (h)
\Vert_{
H^{1/2}
}^{2}
\leqslant 
2 ( I + I\!\!I )
+
3
( I\!\!I\!\!I + I\!V + V ) ,
\end{split}
\end{equation*}
where
\begin{equation*}
\begin{split}
I
&:=
\sum_{n=1}^{\infty}
\Big\vert
	\sum_{k=1}^{\infty}
	( B_{s}^{*} A_{s} B_{s} - B_{t}^{*} A_{t} B_{t} )_{-n,-k}
		\sqrt{k}\,
	h_{k}
\Big\vert^{2}, \\
I\!\!I
&:=
\sum_{n=1}^{\infty}
\Big\vert
	\sum_{k=1}^{\infty}
	( B_{t}^{*} A_{t} - B_{s}^{*} A_{s} )_{-n,k}
		\sqrt{k}\,
	h_{-k}
\Big\vert^{2}, \\
I\!\!I\!\!I
&:=
\sum_{n=1}^{\infty}
\Big\vert
	\sum_{k=1}^{\infty}
	( B_{t} - B_{s} )_{n,-k}
	\sqrt{k}\,
	h_{k}
\Big\vert^{2}, \\
I\!V
&:=
\sum_{n=1}^{\infty}
\Big\vert
	\sum_{k=1}^{\infty}
	(
	B_{t}B_{t}^{*} A_{t} B_{t}
	-
	B_{s}B_{s}^{*} A_{s} B_{s}
	)_{n,-k}
	\sqrt{k}\,
	h_{k}
\Big\vert^{2}, \\
V
&:=
\sum_{n=1}^{\infty}
\Big\vert
	\sum_{k=1}^{\infty}
	( B_{t} B_{t}^{*} A_{t} - B_{s} B_{s}^{*} A_{s} )_{n,k}
	\sqrt{k}\,
	h_{-k}
\Big\vert^{2} .
\end{split}
\end{equation*}

\noindent
\underline{{\it Estimate of $I$.}}

\begin{equation*}
\begin{split}
I
&=
\sum_{n=1}^{\infty}
\Big\vert
	\sum_{k=1}^{\infty}
	( B_{s}^{*} A_{s} B_{s} - B_{t}^{*} A_{t} B_{t} )_{-n,-k}
	\sqrt{k}\,
	h_{k}
\Big\vert^{2}
\end{split}
\end{equation*}

We shall note that
$
A_{s} - A_{t}
=
A_{t} [ ( I + B_{t}B_{t}^{*} ) - ( I + B_{s}B_{s}^{*} ) ] A_{s}
=
A_{t} ( B_{t}B_{t}^{*} - B_{s}B_{s}^{*} ) A_{s}
$
and hence we obtain the following identity:
\begin{equation*}
\begin{split}
&
B_{s}^{*} A_{s} B_{s} - B_{t}^{*} A_{t} B_{t} \\
&=
( B_{s}^{*} - B_{t}^{*} ) A_{s} B_{s}
+ B_{t}^{*} ( A_{s} - A_{t} ) B_{s}
+ B_{t}^{*} A_{t} ( B_{s} - B_{t} ) .
\end{split}
\end{equation*}
According to this decomposition,
$I$ can be estimated as
\begin{equation*}
I
\leqslant
3 ( I_{1} + I_{2} + I_{3} ),
\end{equation*}
where
\begin{equation*}
\begin{split}
I_{1}
&:=
\sum_{n=1}^{\infty}
\Big\vert
	\sum_{k=1}^{\infty}
	( ( B_{s}^{*} - B_{t}^{*} ) A_{s} B_{s} )_{-n,-k}
	\sqrt{k}\,
	h_{k}
\Big\vert^{2}, \\
I_{2}
&:=
\sum_{n=1}^{\infty}
\Big\vert
	\sum_{k=1}^{\infty}
	( B_{t}^{*} ( A_{s} - A_{t} ) B_{s} )_{-n,-k}
	\sqrt{k}\,
	h_{k}
\Big\vert^{2}, \\
I_{3}
&:=
\sum_{n=1}^{\infty}
\Big\vert
	\sum_{k=1}^{\infty}
	( B_{t}^{*} A_{t} ( B_{s} - B_{t} ) )_{-n,-k}
	\sqrt{k}\,
	h_{k}
\Big\vert^{2} .
\end{split}
\end{equation*}
Each of which is estimated as follows:
By
Corollaries~\ref{B-bounded} and \ref{modulus-B-estimate},
we have
\begin{equation*}
\begin{split}
I_{1}
&\leqslant 
\Vert B_{s} - B_{t} \Vert^{2}
\Vert A_{s} B_{s} \Vert^{2}
\Vert
\Lambda
h
\Vert_{H}^{2}
\leqslant 
c_{11}
\omega (s,t)^{2}
\Vert h \Vert_{
H^{1/2}
}^{2}.
\end{split}
\end{equation*}
for some constant $c_{11} >0$.
Similarly, we have
\begin{equation*}
\begin{split}
I_{2}
&\leqslant
\Vert B_{t}^{*} \Vert^{2}
\Vert A_{s} - A_{t} \Vert^{2}
\Vert B_{t} \Vert^{2}
\Vert
\Lambda
h
\Vert_{H}^{2}
\leqslant
c_{12} \omega (s,t)^{2}
\Vert h \Vert_{
H^{1/2}
}^{2}, \\
I_{3}
&\leqslant
\Vert B_{t} A_{t} \Vert^{2}
\Vert B_{s} - B_{t} \Vert^{2}
\Vert
\Lambda
h
\Vert_{H}^{2}
\leqslant
c_{13} \omega (s,t)^{2}
\Vert h \Vert_{
H^{1/2}
}^{2}
\end{split}
\end{equation*}
for some constants $c_{12}, c_{13} > 0$.
Combining these together, we obtain
\begin{equation*}
I_{1}
\leqslant
c_{1} \omega (s,t)^{2}
\Vert h \Vert_{
H^{1/2}
}^{2}
\end{equation*}
for some constant $c_{1} > 0$.

\noindent
\underline{{\it Estimate of $I\!\!I$.}}

\begin{equation*}
\begin{split}
I\!\!I
&=
\sum_{n=1}^{\infty}
\Big\vert
	\sum_{k=1}^{\infty}
	( B_{t}^{*} A_{t} - B_{s}^{*} A_{s} )_{-n,k}
	\sqrt{k}\,
	h_{-k}
\Big\vert^{2}
\end{split}
\end{equation*}

According to the identity
\begin{equation*}
\begin{split}
B_{t}^{*} A_{t} - B_{s}^{*} A_{s}
=
( B_{t}^{*} - B_{s}^{*} ) A_{t}
+
B_{s}^{*} ( A_{t} - A_{s} ) ,
\end{split}
\end{equation*}
we estimate $I\!\!I$ as
\begin{equation*}
\begin{split}
I\!\!I
\leqslant
2( I\!\!I_{1} + I\!\!I_{2} ),
\end{split}
\end{equation*}
where
\begin{equation*}
\begin{split}
I\!\!I_{1}
&=
\sum_{n=1}^{\infty}
\Big\vert
	\sum_{k=1}^{\infty}
	( ( B_{t}^{*} - B_{s}^{*} )A_{t} )_{-n,k}
	\sqrt{k}\,
	h_{-k}
\Big\vert^{2}, \\
I\!\!I_{2}
&=
\sum_{n=1}^{\infty}
\Big\vert
	\sum_{k=1}^{\infty}
	( B_{s}^{*} ( A_{t} - A_{s} ) )_{-n,k}
	\sqrt{k}\,
	h_{-k}
\Big\vert^{2} .
\end{split}
\end{equation*}

By
Corollaries~\ref{B-bounded} and \ref{modulus-B-estimate},
we have
\begin{equation*}
\begin{split}
I\!\!I_{1}
&\leqslant 
\Vert
	B_{t}^{*} - B_{s}^{*}
\Vert^{2}
\Vert
	A_{t}
\Vert^{2}
\Vert
\Lambda
h
\Vert_{H}^{2}
\leqslant 
c_{21}
\omega (s,t)^{
2
}
\Vert h \Vert_{
H^{1/2}
}^{2}, \\
I\!\!I_{2}
&\leqslant 
\Vert
	B_{s}^{*}
\Vert^{2}
\Vert
	A_{t} - A_{s}
\Vert^{2}
\Vert
\Lambda
h
\Vert_{H}^{2}
\leqslant 
c_{22}
\omega (s,t)^{
2
}
\Vert h \Vert_{
H^{1/2}
}^{2},
\end{split}
\end{equation*}
for some constant $c_{21}, c_{22} > 0$.
Therefore we have obtained
\begin{equation*}
I\!\!I
\leqslant
c_{2}
\omega (s,t)^{
2
}
\Vert h \Vert_{
H^{1/2}
}^{2}
\end{equation*}
for some $c_{2} > 0$.

\noindent
\underline{{\it Estimate of $I\!\!I\!\!I$.}}

\begin{equation*}
\begin{split}
I\!\!I\!\!I
&=
\sum_{n=1}^{\infty}
\Big\vert
	\sum_{k=1}^{\infty}
	( B_{t} - B_{s} )_{n,-k}
	\sqrt{k}\,
	h_{k}
\Big\vert^{2}
\end{split}
\end{equation*}
is estimated by using
Corollary \ref{modulus-B-estimate}
as
\begin{equation*}
\begin{split}
I\!\!I\!\!I
&\leqslant
\Vert
	B_{t} - B_{s}
\Vert^{2}
\Vert
\Lambda
h
\Vert_{H}^{2}
\leqslant
c_{3} \omega (s,t)^{2}
\Vert h \Vert_{
H^{1/2}
}^{2}
\end{split}
\end{equation*}
for some constant $c_{3} > 0$.

\noindent
\underline{{\it Estimate of $I\!V$.}}

\begin{equation*}
\begin{split}
I\!V
&=
\sum_{n=1}^{\infty}
\Big\vert
	\sum_{k=1}^{\infty}
	(
	B_{t}B_{t}^{*} A_{t} B_{t}
	-
	B_{s}B_{s}^{*} A_{s} B_{s}
	)_{n,-k}
	\sqrt{k}\,
	h_{k}
\Big\vert^{2}
\end{split}
\end{equation*}

Along the decomposition
\begin{equation*}
\begin{split}
&
B_{t}B_{t}^{*} A_{t} B_{t}
-
B_{s}B_{s}^{*} A_{s} B_{s} \\
&=
( B_{t} - B_{s} ) B_{t}^{*} A_{t} B_{t}
+ B_{s} ( B_{t}^{*} A_{t} B_{t} - B_{s}^{*} A_{s} B_{s} ) \\
&=
( B_{t} - B_{s} ) B_{t}^{*} A_{t} B_{t}
+ B_{s} ( B_{t}^{*} - B_{s}^{*} ) A_{t} B_{t}
+ B_{s} B_{s}^{*} A_{s} B_{s} ( B_{s}^{*} - B_{t}^{*} ) A_{t} B_{t} \\
&\hspace{20mm}
+ B_{s} B_{s}^{*} A_{s} ( B_{s} - B_{t} ) B_{t}^{*} A_{t} B_{t}
+ B_{s} B_{s}^{*} A_{s} ( B_{t} - B_{s} ) ,
\end{split}
\end{equation*}
the quantity $I\!V$ is estimated as
$$
I\!V
\leqslant
5
(
	I\!V_{1}
	+ I\!V_{2}
	+ I\!V_{3}
	+ I\!V_{4}
	+ I\!V_{5}
),
$$
where
\begin{equation*}
\begin{split}
I\!V_{1}
&=
\sum_{n=1}^{\infty}
\Big\vert
	\sum_{k=1}^{\infty}
	(
	( B_{t} - B_{s} ) B_{t}^{*} A_{t} B_{t}
	)_{n,-k}
	\sqrt{k}\,
	h_{k}
\Big\vert^{2} , \\
I\!V_{2}
&=
\sum_{n=1}^{\infty}
\Big\vert
	\sum_{k=1}^{\infty}
	(
	B_{s} ( B_{t}^{*} - B_{s}^{*} ) A_{t} B_{t}
	)_{n,-k}
	\sqrt{k}\,
	h_{k}
\Big\vert^{2} , \\
I\!V_{3}
&=
\sum_{n=1}^{\infty}
\Big\vert
	\sum_{k=1}^{\infty}
	(
	B_{s} B_{s}^{*} A_{s} B_{s} ( B_{s}^{*} - B_{t}^{*} ) A_{t} B_{t}
	)_{n,-k}
	\sqrt{k}\,
	h_{k}
\Big\vert^{2} , \\
I\!V_{4}
&=
\sum_{n=1}^{\infty}
\Big\vert
	\sum_{k=1}^{\infty}
	(
	B_{s} B_{s}^{*} A_{s} ( B_{s} - B_{t} ) B_{t}^{*} A_{t} B_{t}
	)_{n,-k}
	\sqrt{k}\,
	h_{k}
\Big\vert^{2} , \\
I\!V_{5}
&=
\sum_{n=1}^{\infty}
\Big\vert
	\sum_{k=1}^{\infty}
	(
	B_{s} B_{s}^{*} A_{s} ( B_{t} - B_{s} )
	)_{n,-k}
	\sqrt{k}\,
	h_{k}
\Big\vert^{2} .
\end{split}
\end{equation*}
By using
Corollaries~\ref{B-bounded} and \ref{modulus-B-estimate},
it is easy to see that
\begin{equation*}
I\!V_{i}
\leqslant
c_{5i} \omega (s,t)^{2}
\Vert h \Vert_{
	H^{1/2}
}^{2}
\quad
\text{for $i=1,2,3,4,5,$}
\end{equation*}
for some constants
$
c_{51}, c_{52}, c_{53}, c_{54}, c_{55} > 0
$.
Therefore we get
\begin{equation*}
I\!V
\leqslant
c_{5} \omega (s,t)^{2}
\Vert h \Vert_{
	H^{1/2}
}^{2}
\end{equation*}
for some constant $c_{5} > 0$.

\noindent
\underline{{\it Estimate of $V$.}}

\begin{equation*}
\begin{split}
V
&=
\sum_{n=1}^{\infty}
\Big\vert
	\sum_{k=1}^{\infty}
	( B_{t} B_{t}^{*} A_{t} - B_{s} B_{s}^{*} A_{s} )_{n,k}
	\sqrt{k}\,
	h_{-k}
\Big\vert^{2}
\end{split}
\end{equation*}

Along the decomposition
\begin{equation*}
\begin{split}
&
B_{t} B_{t}^{*} A_{t} - B_{s} B_{s}^{*} A_{s} \\
&=
( B_{t} - B_{s} ) B_{t}^{*} A_{t}
+ B_{s} ( B_{t}^{*} - B_{s}^{*} ) A_{t}
+ B_{s} B_{s}^{*} ( A_{t} - A_{s} ) ,
\end{split}
\end{equation*}
the quantity $V$ is estimated as
$$
V \leqslant 3 ( V_{1} + V_{2} + V_{3} ),
$$
where
\begin{equation*}
\begin{split}
V_{1}
&=
\sum_{n=1}^{\infty}
\Big\vert
	\sum_{k=1}^{\infty}
	(
	( B_{t} - B_{s} ) B_{t}^{*} A_{t} )_{n,k}
	\sqrt{k}\,
	h_{-k}
\Big\vert^{2} \\
V_{2}
&=
\sum_{n=1}^{\infty}
\Big\vert
	\sum_{k=1}^{\infty}
	(
	B_{s} ( B_{t}^{*} - B_{s}^{*} ) A_{t} )_{n,k}
	\sqrt{k}\,
	h_{-k}
\Big\vert^{2}, \\
V_{3}
&=
\sum_{n=1}^{\infty}
\Big\vert
	\sum_{k=1}^{\infty}
	(
	B_{s} B_{s}^{*} ( A_{t} - A_{s} ) )_{n,k}
	\sqrt{k}\,
	h_{-k}
\Big\vert^{2} .
\end{split}
\end{equation*}
By
Corollaries~\ref{B-bounded} and \ref{modulus-B-estimate},
we have
\begin{equation*}
\begin{split}
V_{1}
&\leqslant
\Vert
B_{t} - B_{s}
\Vert^{2}
\Vert
B_{t}^{*} A_{t}
\Vert^{2}
\Vert
	\Lambda
h
\Vert_{H}^{2}
\leqslant
c_{51} \omega (s,t)^{2}
\Vert h \Vert_{
	H^{1/2}
}^{2}, \\
V_{2}
&\leqslant
\Vert
B_{s}
\Vert^{2}
\Vert
B_{t}^{*} - B_{s}^{*}
\Vert^{2}
\Vert
A_{t}
\Vert^{2}
\Vert
	\Lambda
h
\Vert_{H}^{2}
\leqslant
c_{52}
\omega (s,t)^{
2
}
\Vert h \Vert_{
	H^{1/2}
}^{2} , \\
V_{3}
&\leqslant
\Vert
B_{s} B_{s}^{*} \Vert^{2}
\Vert
A_{t} - A_{s}
\Vert^{2}
\Vert
	\Lambda
h
\Vert_{H}^{2}
\leqslant
c_{53}
\omega (s,t)^{
2
}
\Vert h \Vert_{
	H^{1/2}
}^{2} ,
\end{split}
\end{equation*}
for some constants $c_{51}, c_{52}, c_{53} > 0$.
Combining these estimates, we get
\begin{equation*}
\begin{split}
V
\leqslant
c_{5}
\omega (s,t)^{
2
}
\Vert h \Vert_{
	H^{1/2}
}^{2} .
\end{split}
\end{equation*}
for some constant $c_{5} > 0$.

Now by combining the estimates for
$I$, $I\!\!I$, $I\!\!I\!\!I$, $I\!V$ and $V$,
we obtain the assertion.
\end{proof} 

\appendix
\section{\ }
\label{Appdx} 

\subsection{Proof of Corollary~\ref{boundedness(sol)}}
\label{app:bdd-sol} 

We write $f_{t}(z)$ as
$
f_{t} (z)
=
C(t) \sum_{n=1}^{\infty} c_{n}(t) z^{n}
$,
and then it is enough to show that
$
\sup_{ z \in \mathbb{D} }
\sum_{n=1}^{\infty}
\vert c_{n} (t) z^{n} \vert
< +\infty,
$
for each $t \in [0,T]$.
By \cite[Theorem~2.8]{AmbFr},
\begin{equation*}
\begin{split}
\vert c_{n}(t) \vert
\leqslant
\sum_{p=1}^{n}
\sum_{\substack{
	i_{1}, \cdots , i_{p} \in \mathbb{N}: \\
	i_{1} + \cdots + i_{p} = n
}}
\widetilde{w} (n)_{ i_{1}, \cdots , i_{p} }
\frac{
	\omega (0,t)^{n}
}{
	n!
},
\end{split}
\end{equation*}
where
\begin{equation*}
\begin{split}
\widetilde{w} (n)_{ i_{1}, \cdots , i_{p} }
&=
\{ ( n - i_{1} ) + 1 \}
\{ ( n - ( i_{1} + i_{2} ) ) + 1 \}
\cdots
\{ ( n - ( i_{1} + \cdots + i_{p-1} ) ) + 1 \} \\
&\leqslant
\{ ( n - 1 ) + 1 \}
\{ ( n - 2 ) + 1 \}
\cdots
\{ ( n - ( p-1 ) ) + 1 \} \\
&=
n
(n-1)
\cdots
(n-p)
=
n \binom{n-1}{p} p!
\leqslant
n 2^{n-1} p! .
\end{split}
\end{equation*}
Therefore,
\begin{equation*}
\begin{split}
\vert c_{n}(t) \vert
& 
\leqslant
n 2^{n-1}
\frac{
	\omega (0,t)^{n}
}{
	n!
}
\sum_{p=1}^{n}
p!
\sum_{\substack{
	i_{1}, \cdots , i_{p} \in \mathbb{N}: \\
	i_{1} + \cdots + i_{p} = n
}}
1 \\
&= 
n 2^{n-1}
\frac{
	\omega (0,t)^{n}
}{
	n!
}
\sum_{p=1}^{n}
p!
\binom{n-1}{p-1} \\
& 
\leqslant
n 2^{n-1}
\omega (0,t)^{n}
\sum_{p=1}^{n}
\binom{n-1}{p-1} \\
&= 
n 4^{n-1}
\omega (0,t)^{n}
=
4^{-1}
n
( 4 \omega (0,t) )^{n} .
\end{split}
\end{equation*}
Hence,
$
\sup_{ z \in \mathbb{D} }
\sum_{n=1}^{\infty}
\vert c_{n} (t) z^{n} \vert
\leqslant
\sum_{n=1}^{\infty}
\vert c_{n} (t) \vert
< +\infty
$
if $\omega (0,T) < \frac{1}{4}$.

\subsection{Proof of Corollary~\ref{Loew-Grunsky-estimate}}
\label{app:Grunsky-estimate} 

(i)
is immediate from Definition~\ref{LKw/omega} since
$b_{-1,-1}(t)$ is explicitly given by
$$
b_{-1,-1} (t)
=
- \mathrm{e}^{2x_{0} (t)}
\int_{0}^{t} \mathrm{e}^{ -2x_{0} (u) } \mathrm{d} x_{2} (u)
$$
(see \cite[Proposition~2.10--(ii)]{AmbFr}).

(ii)
Since the Loewner-Kufarev equation is
controlled by $\omega$,
we have that
\begin{equation*}
\begin{split}
\mathrm{e}^{ (n+m) x_{0} (t) }
\Big\vert
\int_{0}^{t}
(
	( x_{i_{p}} \cdots x_{i_{2}} x_{i_{1}} )
	\shuffle
	( x_{j_{q}} \cdots x_{j_{2}} x_{j_{1}} )
)(u)
\mathrm{d} x_{k} (u)
\Big\vert
\leqslant
\frac{ (p+q)! }{ p! q! }
\frac{\omega (0,t)^{n+m}}{(n+m)!}
\end{split}
\end{equation*}
(for the notation $\shuffle$, see \cite[Definition~2.11]{AmbFr})
if
$
k + ( i_{1} + \cdots + i_{p} ) + (j_{1} + \cdots + j_{q})
=
n+m
$.
Then by
\cite[Theorem~2.12]{AmbFr},
we get
\begin{equation*}
\begin{split}
&
\vert
	b_{-m,-n} (t)
\vert \\
&\leqslant
\sum_{k=2}^{n+m-2}
\sum_{\substack{
	1 \leqslant i < m; \\
	1 \leqslant j < n: \\
	i+j=k
}}
\sum_{p=1}^{m-i}
\sum_{q=1}^{n-j}
\sum_{\substack{
	i_{1}, \cdots , i_{p} \in \mathbb{N}: \\
	i_{1} + \cdots + i_{p} = m-i
}}
\sum_{\substack{
	j_{1}, \cdots , j_{q} \in \mathbb{N}: \\
	j_{1} + \cdots + j_{q} = n-j
}}
w_{i_{1}, \cdots , i_{p}; j_{1}, \cdots , j_{q}}
\frac{ (p+q)! }{ p! q! }
\frac{\omega (0,t)^{n+m}}{(n+m)!} \\
&\hspace{10mm}+
\sum_{k=m+1}^{n+m-1}
\sum_{q=1}^{n+m-k}
\sum_{\substack{
	j_{1}, \cdots , j_{q} \in \mathbb{N}: \\
	j_{1} + \cdots + j_{q} = n+m-k
}}
w_{ \varnothing ; j_{1}, \cdots , j_{q} }
\frac{ \omega (0,t)^{n+m} }{ (n+m)! } \\
&\hspace{20mm}+
\sum_{k=n+1}^{n+m-1}
\sum_{p=1}^{m+n-k}
\sum_{\substack{
	i_{1}, \cdots , i_{p} \in \mathbb{N}: \\
	i_{1} + \cdots + i_{p} = m+n-k
}}
w_{ i_{1}, \cdots , i_{p}; \varnothing }
\frac{ \omega (0,t)^{n+m} }{ (n+m)! },
\end{split}
\end{equation*}
where,
in the first term on the right hand side,
we have
$( m - ( i_{1} + \cdots + i_{p} )) = i$,
$( n - ( j_{1} + \cdots + j_{q} )) = j$
and hence
\begin{equation*}
\begin{split}
w_{i_{1}, \cdots , i_{p}; j_{1}, \cdots , j_{q}}
&=
( m-i_{1} ) ( m- (i_{1}+i_{2}) ) \cdots
( m - (i_{1}+\cdots + i_{p}) ) \\
&\hspace{10mm}\times
( n-j_{1} ) ( n- (j_{1}+j_{2}) ) \cdots
( n - (j_{1}+\cdots + j_{q}) ) \\
&\leqslant 
i (m-1) (m-2) \cdots (m-(p-1)) \\
&\hspace{10mm}\times
j (n-1) (n-2) \cdots (n-(q-1)) \\
&= 
ij \binom{m-1}{p-1} (p-1)! \binom{n-1}{q-1} (q-1)! .
\end{split}
\end{equation*}
In the second and third term, we have
$( n - ( j_{1} + \cdots + j_{q} )) = k-m$,
$( m - ( i_{1} + \cdots + i_{p} )) = k-n$,
so that
\begin{equation*}
\begin{split}
&
w_{ \varnothing ; j_{1}, \cdots , j_{q} }
\leqslant
(k-m)
\binom{ n-1 }{ q-1 }
(q-1)! , \\
& 
w_{ i_{1}, \cdots , i_{q} ; \varnothing }
\leqslant
(k-n)
\binom{ m-1 }{ p-1 }
(p-1)! .
\end{split}
\end{equation*}
So we have
\begin{equation*}
\begin{split}
&
\vert
	b_{-m,-n} (t)
\vert \\
&\leqslant 
\sum_{k=2}^{n+m-2}
\sum_{\substack{
	1 \leqslant i < m; \\
	1 \leqslant j < n: \\
	i+j=k
}}
\sum_{p=1}^{m-i}
\sum_{q=1}^{n-j}
\sum_{\substack{
	i_{1}, \cdots , i_{p} \in \mathbb{N}: \\
	i_{1} + \cdots + i_{p} = m-i
}}
\sum_{\substack{
	j_{1}, \cdots , j_{q} \in \mathbb{N}: \\
	j_{1} + \cdots + j_{q} = n-j
}}
ij
\binom{m-1}{p-1}
\binom{n-1}{q-1}
(p-1)! (q-1)! \\
&\hspace{40mm}\times
\frac{ (p+q)! }{ p! q! }
\frac{\omega (0,t)^{n+m}}{(n+m)!} \\
&\hspace{10mm}+
\sum_{k=m+1}^{n+m-1}
\sum_{q=1}^{n+m-k}
\sum_{\substack{
	j_{1}, \cdots , j_{q} \in \mathbb{N}: \\
	j_{1} + \cdots + j_{q} = n+m-k
}}
(k-m)
\binom{ n-1 }{ q-1 }
(q-1)!
\frac{ \omega (0,t)^{n+m} }{ (n+m)! } \\
&\hspace{20mm}+
\sum_{k=n+1}^{n+m-1}
\sum_{p=1}^{m+n-k}
\sum_{\substack{
	i_{1}, \cdots , i_{p} \in \mathbb{N}: \\
	i_{1} + \cdots + i_{p} = m+n-k
}}
(k-n)
\binom{ m-1 }{ p-1 }
(p-1)!
\frac{ \omega (0,t)^{n+m} }{ (n+m)! } \\
&= 
\sum_{k=2}^{n+m-2}
\sum_{\substack{
	1 \leqslant i < m; \\
	1 \leqslant j < n: \\
	i+j=k
}}
ij
\sum_{p=1}^{m-i}
\sum_{q=1}^{n-j}
\Big(
\sum_{\substack{
	i_{1}, \cdots , i_{p} \in \mathbb{N}: \\
	i_{1} + \cdots + i_{p} = m-i
}}
1
\Big)
\Big(
\sum_{\substack{
	j_{1}, \cdots , j_{q} \in \mathbb{N}: \\
	j_{1} + \cdots + j_{q} = n-j
}}
1
\Big) \\
&\hspace{15mm}\times
\binom{m-1}{p-1}
\binom{n-1}{q-1}
(p-1)! (q-1)!
\frac{ (p+q)! }{ p! q! }
\frac{\omega (0,t)^{n+m}}{(n+m)!} \\
&\hspace{10mm}+
\sum_{k=m+1}^{n+m-1}
\sum_{q=1}^{n+m-k}
\Big(
\sum_{\substack{
	j_{1}, \cdots , j_{q} \in \mathbb{N}: \\
	j_{1} + \cdots + j_{q} = n+m-k
}}
1
\Big)
(k-m)
\binom{ n-1 }{ q-1 }
(q-1)!
\frac{ \omega (0,t)^{n+m} }{ (n+m)! } \\
&\hspace{20mm}+
\sum_{k=n+1}^{n+m-1}
\sum_{p=1}^{m+n-k}
\Big(
\sum_{\substack{
	i_{1}, \cdots , i_{p} \in \mathbb{N}: \\
	i_{1} + \cdots + i_{p} = m+n-k
}}
1
\Big)
(k-n)
\binom{ m-1 }{ p-1 }
(p-1)!
\frac{ \omega (0,t)^{n+m} }{ (n+m)! }
\end{split}
\end{equation*}
In the first term of the last equation,
we have
$
\frac{ (p+q)! }{ p! q! }
\leqslant
2^{p+q} \leqslant 2^{m+n-2}
$,
$
(p-1)! (q-1)!
\leqslant (p+q-1)!
\leqslant (m+n-3)!
$,
and
$$
\sum_{\substack{
	i_{1}, \cdots , i_{p} \in \mathbb{N}: \\
	i_{1} + \cdots + i_{p} = m-i
}}
1
=
\binom{ m-i-1 }{ p-1 } ,
\quad
\sum_{\substack{
	j_{1}, \cdots , j_{q} \in \mathbb{N}: \\
	j_{1} + \cdots + j_{q} = n-j
}}
1
= \binom{ n-j-1 }{ q-1 }.
$$
For the second and the third terms, we have
$
(q-1)! \leqslant (n+m-3)!
$
and
$
(p-1)! \leqslant (n+m-3)!
$
respectively.
Then we obtain
\begin{equation*}
\begin{split}
&
\vert
	b_{-m,-n} (t)
\vert \\
&\leqslant 
\omega (0,t)^{n+m}
\frac{ (m+n-3)! 2^{m+n-2} }{ (n+m)! }
\Big\{
\sum_{k=2}^{n+m-2}
\sum_{\substack{
	1 \leqslant i < m; \\
	1 \leqslant j < n: \\
	i+j=k
}}
ij \\
&\hspace{5mm}\times
\Big(
\sum_{p^{\prime}=0}^{m-i-1}
\binom{ m-1 }{ p^{\prime} }
\binom{ m-i-1 }{ (m-i-1) - p^{\prime} }
\Big)
\Big(
\sum_{ q^{\prime} = 0 }^{ n-j-1 }
\binom{ n-1 }{ q^{\prime} }
\binom{ n-j-1 }{ (n-j-1) - q^{\prime} }
\Big) \\
&\hspace{15mm}+
\sum_{k=m+1}^{n+m-1}
(k-m)
\sum_{q=1}^{n+m-k}
\binom{ n+m-k-1 }{ q-1 }
\binom{ n-1 }{ q-1 } \\
&\hspace{25mm}+
\sum_{k=n+1}^{n+m-1}
(k-n)
\sum_{p=1}^{m+n-k}
\binom{ n+m-k-1 }{ p-1 }
\binom{ m-1 }{ p-1 }
\Big\} .
\end{split}
\end{equation*}
The combinatorial formulae
\begin{equation*}
\begin{split}
\sum_{p^{\prime}=0}^{m-i-1}
\binom{ m-1 }{ p^{\prime} }
\binom{ m-i-1 }{ (m-i-1) - p^{\prime} }
&=
\binom{ 2(m-1)-i }{ m-1 } , \\
\sum_{ q^{\prime} = 0 }^{n-j-1}
\binom{ n-1 }{ q^{\prime} }
\binom{ n-j-1 }{ (n-j-1) - q^{\prime} }
&=
\binom{ 2(n-1)-j }{ n-1 }
\end{split}
\end{equation*}
give
\begin{equation*}
\begin{split}
&
\vert
	b_{-m,-n} (t)
\vert \\
&\leqslant 
\omega (0,t)^{n+m}
\frac{ (m+n-3)! 2^{m+n-2} }{ (n+m)! }
\Big\{
\sum_{k=2}^{n+m-2}
\sum_{\substack{
	1 \leqslant i < m; \\
	1 \leqslant j < n: \\
	i+j=k
}}
ij
\binom{ 2(m-1)-i }{ m-1 }
\binom{ 2(n-1)-j }{ n-1 } \\
&\hspace{15mm}+
\sum_{k=m+1}^{n+m-1}
(k-m)
\binom{ 2(n-1) - (k-m) }{ n-1 } \\
&\hspace{25mm}+
\sum_{k=n+1}^{n+m-1}
(k-n)
\binom{ 2(m-1) - (k-n) }{ m-1 }
\Big\} \\
&= 
\omega (0,t)^{n+m}
\frac{ (m+n-3)! 2^{m+n-2} }{ (n+m)! }
\Big\{
\Big(
\sum_{ i = 1 }^{ m-1 }
i
\binom{ 2(m-1)-i }{ m-1 }
\Big)
\Big(
\sum_{ j = 1 }^{ n-1 }
j
\binom{ 2(n-1)-j }{ n-1 }
\Big) \\
&\hspace{15mm}+
\sum_{k=1}^{n-1}
k
\binom{ 2(n-1) - k }{ n-1 }
+
\sum_{k=1}^{m-1}
(k-n)
\binom{ 2(m-1) - k }{ m-1 }
\Big\} .
\end{split}
\end{equation*}
We shall use here the identity
$
\sum_{j=k}^{n}
( n+1-j )
\binom{ j-1 }{ k-1 }
=
\binom{ n+1 }{ k+1 }
$
which enables us to conclude
\begin{equation*}
\begin{split}
\sum_{ i = 1 }^{ m-1 }
i
\binom{ 2(m-1)-i }{ m-1 }
=
\sum_{j=m}^{2(m-1)}
( 2(m-1) + 1 - j )
\binom{ j-1 }{ m-1 }
=
\binom{2m-1}{m}.
\end{split}
\end{equation*}

Hence, we conclude that
\begin{equation*}
\begin{split}
&
\vert
	b_{-m,-n} (t)
\vert \\
&\leqslant 
\omega (0,t)^{m+n}
\frac{ (m+n-3)! 2^{m+n-2} }{ (m+n)! }
\Big\{
\binom{2m-1}{m}
\binom{2n-1}{n}
+
\binom{2n-1}{n}
+
\binom{2m-1}{m}
\Big\} \\
&\leqslant 
\omega (0,t)^{m+n}
\frac{ (m+n-3)! 2^{m+n-2} }{ (m+n)! }
\Big\{
\binom{2m-1}{m} + 1
\Big\}
\Big\{
\binom{2n-1}{n} + 1
\Big\} \\
&\leqslant 
\frac{
	\omega (0,t)^{m+n}
	2^{m+n-2}
	( 2^{2m-1} 2^{2n-1} )
}{
	(m+n) (m+n-1) (m+n-2)
}
= 
\frac{
	(8\omega (0,t))^{m+n}
}{
	16 (m+n) (m+n-1) (m+n-2)
}.
\end{split}
\end{equation*}

(iii)
Since the Loewner-Kufarev equation is
controlled by $\omega$,
we have
\begin{equation*}
\begin{split}
&
\Big\vert
\mathrm{e}^{ (n+m) x_{0}(t) }
\int_{0}^{t}
\mathrm{d}
(
	( x_{i_{p}} \cdots x_{i_{2}} x_{i_{1}} )
	\shuffle
	( x_{j_{q}} \cdots x_{j_{2}} x_{j_{1}} )
)(u)
\int_{0}^{u}
\mathrm{e}^{-kx_{0} (v)}
\mathrm{d} x_{k} (v) \\
&\hspace{10mm} -
\mathrm{e}^{ (n+m) x_{0}(s) }
\int_{0}^{s}
\mathrm{d}
(
	( x_{i_{p}} \cdots x_{i_{2}} x_{i_{1}} )
	\shuffle
	( x_{j_{q}} \cdots x_{j_{2}} x_{j_{1}} )
)(u)
\int_{0}^{u}
\mathrm{e}^{-kx_{0} (v)}
\mathrm{d} x_{k} (v)
\Big\vert \\
&\hspace{10mm}\leqslant 
(n+m)
\frac{ (p+q)! }{ p! q! }
\frac{
	\omega (s,t) \omega (0,T)^{n+m-1}
}{
	(n+m)!
} .
\end{split}
\end{equation*}
Now the remaining case is the same as (i).

\subsection*{Acknowledgment}
The first author was supported by JSPS KAKENHI Grant Number 15K17562.
The second author was
partially supported
by the ERC advanced grant
``Noncommutative distributions in free probability".
Both authors thank Theo Sturm for the hospitality he granted to T.A. at the University of Bonn. T.A. thanks Roland Speicher for the hospitality offered him in Saarbr\"ucken.  R.F. thanks Fukuoka University and the MPI in Bonn for its hospitality, Roland Speicher for his support from which this paper substantially profited.


\begin{thebibliography}{9}
%
%
\bibitem{AmbFr}
{\sc Amaba,~T.} and {\sc Friedrich,~R.}
(2018)
{\it Controlled Loewner-Kufarev Equation Embedded into the Universal Grassmannian.}
preprint.

\bibitem{CMSS}
{\sc Collins,~B., Mingo,~J., {\'S}niady,~P., Speicher,~R.}
(2007)
{\it Second Order Freeness and Fluctuations of Random Matrices III.
Higher Order Freeness and Free Cumulants},
Documenta Math. 12, 1--70.

\bibitem{FBZ}
{\sc Frenkel, E. and Ben-Zvi, D.}
(2004)
{\it Vertex Algebras and Algebraic Curves: Second Edition},
Mathematical Surveys and Monographs, Volume: 88; 2004; 400 pp.

\bibitem{GTV2014}
{\sc Gustafsson, B., Teodorescu, R. and Vasil'ev, A.}
(2014)
{\it Classical and Stochastic Laplacian Growth},
Adv. Math. Fluid Mech. Birkh{\"a}user.


\bibitem{Ju1993}
{\sc Juriev,~D.}
(1993)
{\it On the Univalency of Regular Functions},
Ann. di Matematica pura ed applicata (IV),
Vol. CLXIV, pp. 37--50.

\bibitem{KNTY}
{\sc Kawamoto,~N., Namikawa,~Y., Tsuchiya,~A. and Yamada,~Y}
(1988)
{\it Geometric Realization of Conformal Field Theory on Riemann Surfaces},
Commun. Math. Phys. 116, 247--308.

\bibitem{KKMWZ2000} {\sc Kostov, I., Krichever, I.,  Mineev-Weinstein, M.,
Wiegmann, P. and  Zabrodin, A.} (2000) {$\tau$-function for analytic curves},  arXiv:hep-th/000529 v1 1

\bibitem{KY1988}
{\sc Kirillov,~A, Yuriev,~D}
(1988)
{\it Representations of the Virasoro algebra by the orbit method},
JGP, Vol. 5, n. 3., 351--363.

\bibitem{KMZ05}
{\sc Krichever,~I., Marshakov,~A., Zabrodin,~A}
(2005),
{\it Integrable Structure of the Dirichlet Boundary Problem in Multiply-Connected Domains},
Commun. Math. Phys. 259, 1--44.

\if0 
\bibitem{Kru95}
{\sc Krushkal,~S.~L.}
(1995),
{\it Exact coefficient estimates for univalent functions with quasiconformal extension.}
Ann. Acad. Sci. Fenn. Ser. A I Math. 20, no. 2, 349--357.
\fi 



\bibitem{Kru07}
{\sc Krushkal,~S.}
(2007),
{\it Teichm\"uller spaces, Grunsky operator and Fredholm eigenvalues.}
Complex analysis and its applications, 65--88,
OCAMI Stud., 2, Osaka Munic. Univ. Press, Osaka.


\bibitem{LyQi}
{\sc Lyons,~T.} and {\sc Qian,~Z.}
(2002),
{\it System control and rough paths.}
Oxford Mathematical Monographs.
Oxford Science Publications.
Oxford University Press, Oxford. x+216 pp. ISBN: 0-19-850648-1


\bibitem{Ma99}
{\sc Malliavin,~P.}
(1999),
{\it The canonic diffusion above the diffeomorphism group of the circle.}
C. R. Acad. Sci. Paris S\'{e}r. I Math. 329, no. 4, 325--329.


\bibitem{MWZ02}
{\sc Marshakov,~A., Wiegmann,~P. and Zabrodin,~A}
(2002)
{\it Integrable Structure of the Dirichlet Boundary Problem in Two Dimensions},
Commun. Math. Phys. 227, 131--153.


\bibitem{MWZ2000} {\sc Mineev-Weinstein,~M, Wiegmann,~P, Zabrodin,~A.}, 2000, {Integrable structure of interface dynamics}, Phys Rev Lett. 29; 84 (22) : 5106-9.

\bibitem{Po2}
{\sc Pommerenke,~Ch.}
{\it Univalent functions.}
(1975),
(With a chapter on quadratic differentials by Gerd Jensen),
Studia Mathematica/Mathematische Lehrb\"ucher,
Band XXV. Vandenhoeck \& Ruprecht, G\"ottingen, 376 pp.


\bibitem{SW}
{\sc Segal,~G.} and {\sc Wilson,~G.}
(1985),
{\it Loop groups and equations of KdV type.}
Inst. Hautes {\'E}tudes Sci. Publ. Math. No. 61, 5--65.

\bibitem{TTZ}
{\sc Takebe, T., Teo L.-P. and Zabrodin A.}
(2006)
{\it L\"owner equations and dispersionless hierarchies},
J. Phys. A: Math. Gen. 39, 11479--11501.

\bibitem{Ta01}
{\sc Takhtajan,~L.}
(2001)
{\it Free Bosons and Tau-Functions for Compact Riemann Surfaces and
Closed Smooth Jordan Curves. Current Correlation Functions},
Lett. in Math. Phys. {\bf 56}: 181--228.

\bibitem{TaTe06}
{\sc Takhtajan,~L.~A.} and {\sc Teo,~L.-P.}
(2006),
{\it Weil-Petersson metric on the universal Teichm\"uller space.}
Mem. Amer. Math. Soc. 183, no. 861, viii+119 pp.


\bibitem{Te03}
{\sc Teo,~L.P.}
(2003)
{\it Analytic functions and integrable hierarchies\--characterization of tau functions.}
Lett. Math. Phys. 64, no. 1, 75--92.


\bibitem{Te08}
{\sc Teo,~L.P.}
(2008)
{\it Conformal weldings and Dispersionless Toda hierarchy.}
arXiv preprint arXiv:0808.0727.


\bibitem{WZ00} {\sc Wiegmann,~P., Zabrodin,~A.},
(2000),
{\it Conformal Maps and integrable hierarchies},
Commun. Math. Phys. 213, 523--538.
\end{thebibliography}
\end{document}